\let\newfloat\newfloat@ltx
\newcommand{\Tr}{{\rm Tr}}
\newcommand{\g}{\mathfrak{g}}
\newcommand{\so}{\mathfrak{so}}
\renewcommand{\vec}[1]{\boldsymbol{#1}} 
\newcommand{\mf}[1]{\mathfrak{#1}}
\newcommand{\thv}{\vec{\theta}}
\newcommand{\id}{\openone}
\newcommand{\ad}{^{\dagger}}
\newcommand{\ket}[1]{|#1\rangle}
\newcommand{\AC}{\mathcal{A}}
\newcommand{\GC}{\mathcal{G}}
\newcommand{\HC}{\mathcal{H}}
\newcommand{\LC}{\mathcal{L}}
\newcommand{\MC}{\mathcal{M}}
\newcommand{\NC}{\mathcal{N}}
\newcommand{\OC}{\mathcal{O}}
\newcommand{\PC}{\mathcal{P}}
\newcommand{\TC}{\mathcal{T}}
\newcommand{\SPBB}{\mathbb{SP}}
\newcommand{\SO}{\mathbb{SO}}
\newcommand{\U}{\mathbb{U}}
\def\tn{^{\otimes n}}
\newcommand{\mbb}[1]{\mathbb{#1}}
\def\w{\omega}
\newcommand{\arr}{\xrightarrow[]{}}
\def\lm{\lambda}
\newcommand{\ot}[1]{^{\otimes #1}}
\newtheorem{theorem}{Theorem}
\newtheorem{lemma}{Lemma}
\newtheorem{definition}{Definition}
\begin{document}

\title{Optimal Haar random fermionic linear optics circuits}

\author{Paolo Braccia}
\affiliation{Theoretical Division, Los Alamos National Laboratory, Los Alamos, New Mexico 87545, USA}

\author{N. L. Diaz}
\affiliation{Information Sciences, Los Alamos National Laboratory, Los Alamos, New Mexico 87545, USA}
\affiliation{Center for Nonlinear Studies, Los Alamos National Laboratory, Los Alamos, New Mexico 87545, USA}
\affiliation{Departamento de F\'isica-IFLP/CONICET, Universidad Nacional de La Plata, C.C. 67, La Plata 1900, Argentina}

\author{Martin Larocca}
\affiliation{Theoretical Division, Los Alamos National Laboratory, Los Alamos, New Mexico 87545, USA}

\author{M. Cerezo}
\thanks{cerezo@lanl.gov}
\affiliation{Information Sciences, Los Alamos National Laboratory, Los Alamos, New Mexico 87545, USA}

\author{Diego Garc\'ia-Mart\'in}
\affiliation{Information Sciences, Los Alamos National Laboratory, Los Alamos, New Mexico 87545, USA}

\begin{abstract}
Sampling unitary Fermionic Linear Optics (FLO), or matchgate circuits, has become a fundamental tool in quantum information. Such capability enables a large number of applications ranging from randomized benchmarking of continuous gate sets, to fermionic classical shadows. In this work, we introduce optimal algorithms to sample over the non-particle-preserving (active) and particle-preserving (passive) FLO Haar measures. In particular, we provide appropriate distributions for the gates of $n$-qubit parametrized circuits which produce random active and passive FLO. In contrast to previous approaches, which either incur  classical $\mathcal{O}(n^3)$ compilation costs or have suboptimal depths, our methods directly output circuits which \textit{simultaneously} achieve an optimal down-to-the-constant-factor $\Theta(n)$ depth and $\Theta(n^2)$  gate count; with only a $\Theta(n^2)$ classical overhead.  Finally, we also provide quantum circuits to sample Clifford FLO with an optimal $\Theta(n^2)$  gate count. 
\end{abstract}
\maketitle

\section{Introduction}

Unitary Fermionic Linear Optics (FLO)~\cite{valiant2001quantum,knill2001fermionic,terhal2002classical,divincenzo2005fermionic} are widely used in a large number of disciplines, such as quantum chemistry~\cite{kivlichan2018quantum,arute2020hartree,arrazola2022universal}, condensed-matter and many-body physics~\cite{verstraete2009quantum,kraus2011compressed,cervera2018exact,jiang2018quantum,dallaire2019low,sopena2022algebraic,ruiz2024bethe,ruiz2024efficient}, or quantum information and computation~\cite{bravyi2004lagrangian,jozsa2008matchgates,brod2011extending,brod2014computational,brod2016efficient,helsen2022matchgate,wan2022matchgate,diaz2023showcasing, diaz2023parallel,mele2024efficient,kokcu2022fixed,kokcu2022algebraic}. They are defined as the set of unitary transformations that preserve the fermionic Majorana operators (which are analogous to position and momentum for bosons) under conjugation~\cite{knill2001fermionic}, and they have been subject to numerous investigations~\cite{valiant2001quantum,knill2001fermionic,terhal2002classical,divincenzo2005fermionic,bravyi2004lagrangian,jozsa2008matchgates,brod2011extending,brod2014computational,brod2016efficient,guaita2024representation,oszmaniec2022fermion,helsen2022matchgate,wan2022matchgate,diaz2023showcasing, mele2024efficient}. For instance, it is well known that they  constitute a restricted model of computation that is efficiently classically simulable, for certain input states and measurements~\cite{valiant2001quantum,knill2001fermionic,terhal2002classical,divincenzo2005fermionic,somma2006efficient,jozsa2008matchgates,brod2011extending,brod2014computational,brod2016efficient,goh2023lie,miller2025simulation}. 
They have also been shown to be equivalent to matchgate circuits and fermionic Gaussian unitaries generated by free-fermionic Hamiltonians~\cite{knill2001fermionic,terhal2002classical,divincenzo2005fermionic}. Furthermore, modifying these circuit by enhanced connectivity (e.g., via SWAP gates)~\cite{jozsa2008matchgates,brod2014computational} or through the  addition of certain non-FLO gates~\cite{brod2011extending,oszmaniec2017universal}  makes them universal for quantum computation. Examples of applications of FLO include random circuit sampling~\cite{oszmaniec2022fermion} (for which an exponential quantum speed-up can be achieved), randomized benchmarking of continuous gate sets~\cite{helsen2022matchgate}, or fermionic classical shadows~\cite{zhao2021fermionic,wan2022matchgate,heyraud2024unified}. A common theme in many such applications is the need to randomly sample FLO from the associated Haar (or uniform) measure, or at least from an ensemble that reproduces their first $t$ moments, i.e., a FLO $t$-design.  This holds true for both non-particle-preserving (active) and particle-preserving (passive) FLO.

The usual strategy for sampling FLO makes use of the well-known fact that the adjoint action of active FLO is isomorphic to the special orthogonal group $\SO(2n)$ (or the unitary group $\U(n)$, for passive FLO), where $n$ is the number of fermionic modes, or equivalently, the number of qubits under the Jordan-Wigner transformation. Here, one samples a $2n\times 2n$ special orthogonal matrix, and then compiles it into a matchgate circuit. In particular, to obtain a Haar random orthogonal matrix, the entries of the matrix are independently drawn from a Gaussian distribution with zero mean and unit variance, and the columns orthonormalized using the Gram-Schmidt algorithm (alternatively, a QR decomposition can be performed with a fixed gauge)~\cite{mezzadri2006generate}. The classical computational cost of performing the Gram-Schmidt orthonormalization, or the QR decomposition, is $\OC(n^3)$~\footnote{The algorithm described produces a  random orthogonal matrix. In order to obtain a Haar random \textit{special} orthogonal matrix, one can check whether the determinant is $\pm 1$, and exchange two columns if it is $-1$. Such procedure can be realized in practice in $\OC(n^3)$ time.}. On top of that, the classical computational cost of compiling the sampled matrix into a quantum circuit, which is also $\OC(n^3)$, needs to be taken into account. This is discussed in detail in the Supplemental Information (SI).

Instead of following the previous strategy, it is more convenient to sample the circuits directly on quantum hardware. That is, to employ a quantum circuit architecture with   $\Theta(n^2)$ one- and two-qubit parametrized matchgates (acting on nearest neighbors on a one-dimensional qubit chain), such that carefully sampling the parameters of these gates delivers unitaries from the Haar measure, see Fig.~\ref{fig:1}. In this way, the classical computation overhead is just that of sampling the $\Theta(n^2)$ parameters that are required to fully characterize FLO, which is optimal. 

Our main contribution is to actually find such constructions for both active and passive FLO. Specifically, we provide quantum circuits and associated parameter probability distributions to directly sample Haar random FLO unitaries. Crucially, our circuits \emph{simultaneously} achieve  an optimal down-to-the-constant-factor $\Theta(n)$ depth, $\Theta(n^2)$  gate count and $\Theta(n^2)$ classical computation overhead. This contrasts with previous approaches, that either produce circuits with a triangular shape that result in suboptimal  circuit depths and idling qubits~\cite{helsen2022matchgate,heyraud2024unified}, or require the use of classical compilation techniques (with an associated $\OC(n^3)$ computational cost)~\cite{oszmaniec2022fermion}.

At the technical level, we start with parametrizations of the classical compact Lie groups $\SO(2n)$ and $\U(n)$, together with the corresponding invariant measures. These were originally found by Hurwitz in 1897~\cite{diaconis2017hurwitz}, and introduced to quantum information theory in Ref.~\cite{zyczkowski1994random}. Directly translating these constructions into the realm of quantum circuits (which was done for active FLO in~\cite{helsen2022matchgate,heyraud2024unified}) produces circuits with a triangular structure, and hence a suboptimal  depth. Here, we develop a set of rules that allow us to find Haar random FLO circuits with a brick-wall structure and, consequently, an optimal depth, while at the same time avoiding the cost of classical compilation.

Furthermore, we find a simple algorithm to uniformly sample the group of (active) Clifford FLO, that is, the subgroup of matchgate circuits that are also Clifford unitaries~\footnote{We recall that the adjoint action of this group is isomorphic to $B_{2n}\cap \SO(2n)$, where $B_{2n}$ denotes the hyperoctahedral group.}~\cite{gottesman1998heisenbergrepresentation,aaronson2004improved}. Clifford FLO are relevant because they were shown to form a $3$-design over the FLO Haar measure~\cite{wan2022matchgate,heyraud2024unified} (although not a $4$-design as proven here as a bonus result).
In practice, the previous implies that for many applications~\cite{oszmaniec2022fermion,wan2022matchgate}, it suffices to use random Clifford FLO circuits instead of Haar random ones. Our sampling algorithm avoids the cost of compiling random signed permutation matrices (which are matrices in $\SO(2n)$ corresponding to Clifford FLO transformations) into  quantum circuits, by directly sampling Clifford angles in a parametrized matchgate circuit architecture. Importantly, we obtain quantum circuits with a provably optimal average number of gates, although with a triangular shape. Our random Clifford FLO circuits can thus be used in practice, possibly after compressing them once they have been sampled~\cite{kokcu2022algebraic,camps2022algebraic}.

\begin{figure}[t]
    \centering
    \includegraphics[width=1\linewidth]{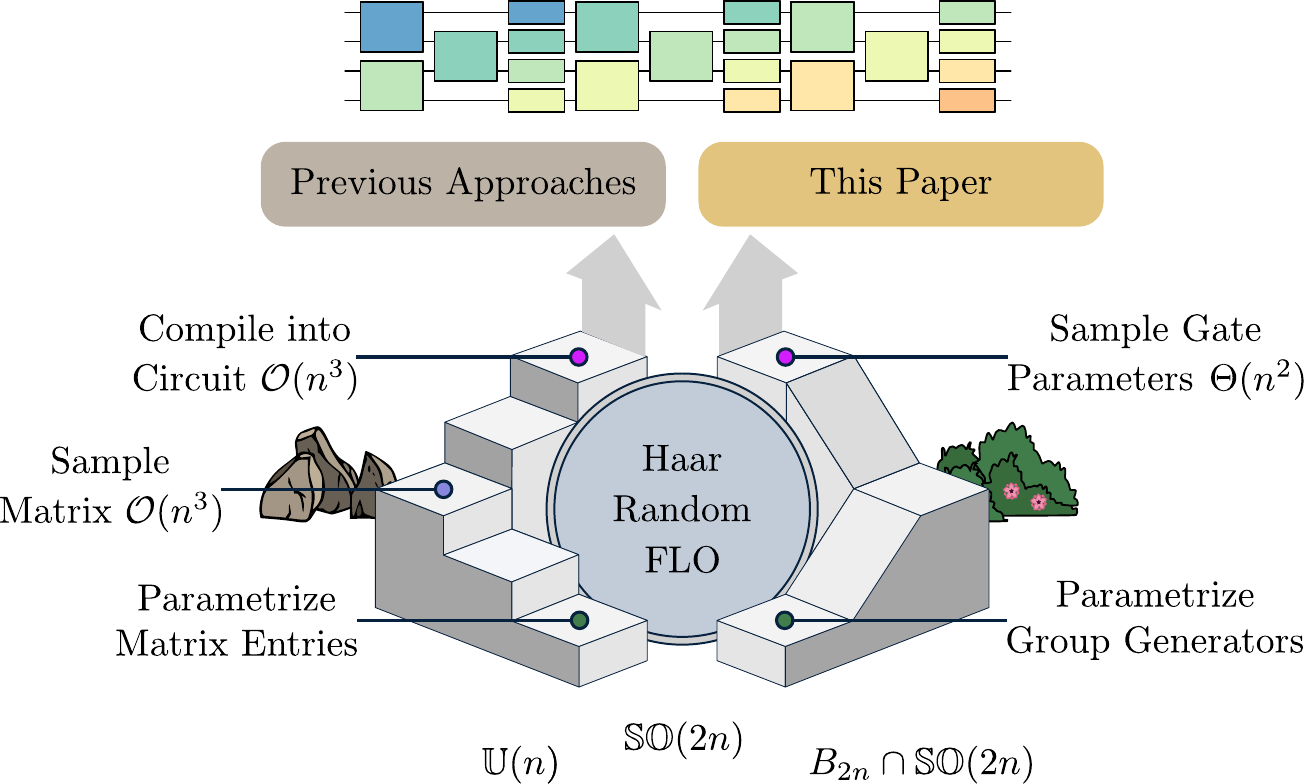}
    \caption{{\bf Schematic representation of our results}. Our goal is to find optimal circuits which implement Haar random active, passive, or Clifford FLO unitaries. Previous methods  are based on parametrizing the entries of matrices from the  $\SO(2n)$, $\U(n)$ and $B_{2n}\cap \SO(2n)$, respectively, to later sample $2n\times 2n$ matrices according to the Haar measure, and then compile these into qubit-based circuits. In this work we propose a more efficient method which instead parametrizes the group generators and only requires sampling gate parameters. }
    \label{fig:1}
\end{figure}

\section{Results}

\subsection{Haar measure on active FLO circuits}

We start by considering active FLO, which can be implemented on an $n$-qubit quantum computer as parametrized matchgate circuits acting on a one-dimensional array of qubits with open boundary conditions~\cite{jozsa2008matchgates,diaz2023showcasing}. In particular, these circuits are generated  as
\begin{equation} \label{eq:fermion-circuits}
    U(\thv)=\prod_{H_j\in \GC}e^{i\theta H_j}\,,
\end{equation}
where $\thv=(\theta_1,\theta_2,\dots)$ is a vector of real parameters and where the gate generators are taken from the  set $\GC=\{Z_q\}_{q=1}^n\cup\{X_q X_{q+1}\}_{q=1}^{n-1}$. Here,  $X_q$, $Z_q$ denote the single-qubit Pauli operators acting on the $q$-th qubit. Quantum circuits of the form~\eqref{eq:fermion-circuits} preserve fermionic parity, which in the qubit picture corresponds to the operator $Z^{\otimes n}$. When no further restriction is imposed on FLO, they are called active FLO (as opposed to passive FLO, which in addition preserve the fermionic particle-number $\frac{1}{2}\sum_{q=1}^n (\id-Z_q)$).

Our first main result is a construction that delivers optimal Haar random active FLO circuits, bypassing the need for classical compilation. In particular, we show that the following theorem holds (see Methods and SI~\ref{ap:theo-1} for a detailed proof).

\begin{figure*}[t]
    \centering
    \includegraphics[width=\linewidth]{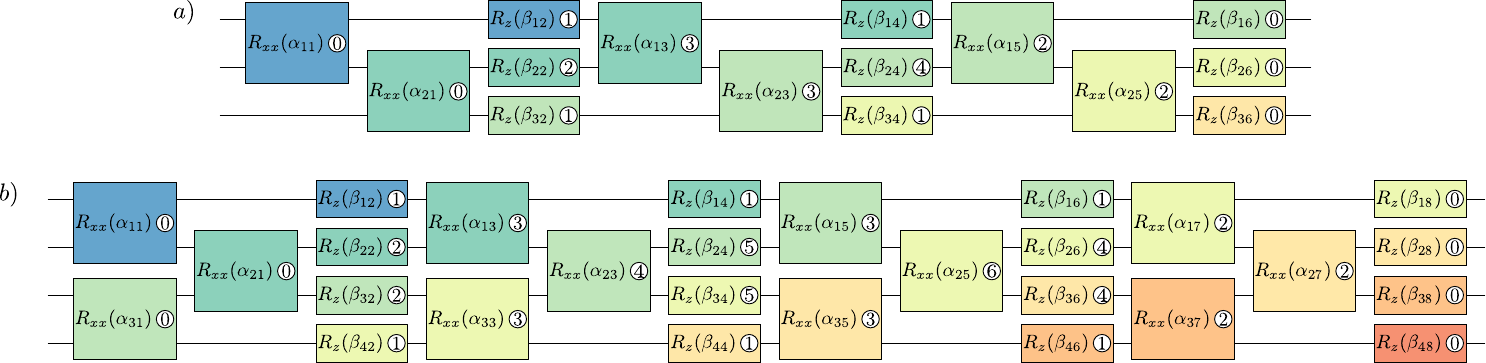}
    \caption{{\bf Optimal Haar random active FLO circuits}.  These quantum circuits produce Haar random active FLO on $n=3$ (a) and $n=4$ qubits (b), respectively, when the parameters are sampled according to Eq.~\eqref{eq:active_haar},~\eqref{eq:active_haar_layer}, and~\eqref{eq:f_function}.  
    Gates with the same color belong to the same ``anti-diagonal'', which can be used to understand the arrangement of the probability density functions within the circuits. Namely, how the powers of the sine functions that the gates' parameters are sampled from are distributed, as marked by the circled numbers next to their respective gate. The pattern is the following: we start at zero (i.e., uniform distribution) on one end of the anti-diagonal —at the left end if the anti-diagonal lies above the main (longest) anti-diagonal, and at the right end otherwise—, and then increment by one, alternating sides as we move toward the center.}
    \label{fig:flo-circuits}
\end{figure*}

\begin{theorem} \label{th:fermion-haar}
    Any fermionic linear optics circuit can be decomposed (up to a global minus sign) as 
    \begin{equation} \label{eq:spin-matrix-decomposition}
            U(\thv)= L_{2n} L_{2n-1}\cdots L_1\,,
    \end{equation}
    where 
    \begin{align}\label{eq:R-prod}
        &L_{2k-1} = \prod_{j=1}^{n-1} e^{i\alpha_{jk} X_j X_{j+1}}\,,\; L_{2k}= \prod_{j=1}^{n} e^{i \beta_{jk} Z_j}\,,
    \end{align}
    for $1<k<n$. The parameters take values  $\alpha_{j,1}\in [0,2\pi)$, $\beta_{j,n}\in [0,2\pi)$ and $\alpha_{jk}, \in [0,\pi]$, $\beta_{j,k}\in [0,\pi]$.
    
    Moreover, the normalized Haar measure for the adjoint representation with respect to the parametrization in Eqs.~\eqref{eq:spin-matrix-decomposition} and~\eqref{eq:R-prod} is given by
     \begin{equation}\label{eq:active_haar}
        d\mu(U)=\NC \prod_{k=1}^n \mu(L_{2k} L_{2k-1})\prod d\alpha\,d\beta\,,
    \end{equation}
    where $\NC=\left((2\pi)^{2n-1} \prod_{s=3}^{2n}\frac{\sqrt{\pi}^{\,s-2}}{\Gamma(s/2)}\right)^{-1}$,
    \small
    \begin{equation}\label{eq:active_haar_layer}
       \mu(L_{2k} L_{2k-1})= \prod_{j=1}^{n-1}\sin(\alpha_{jk})^{f_n(2j,2k-1)}\prod_{j=1}^{n}\sin(\beta_{jk})^{f_n(2j-1,2k)}\,,
    \end{equation}
    \normalsize
     and
    \begin{equation} \label{eq:f_function}
        f_n(u,v)=
    \begin{cases}
        \min(2v-2, 4n-2u-1) \quad &{\rm if} \quad u>v, \\
        \min(4n-2v, 2u-1) \quad &{\rm if} \quad u<v \\
    \end{cases}\,.
    \end{equation}
\end{theorem}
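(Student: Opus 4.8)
\emph{Plan of attack: reduce to $\SO(2n)$.} The first move is to pass to the Majorana picture via the Jordan--Wigner transformation. There $iZ_j$ and $iX_jX_{j+1}$ are, up to sign, the quadratic Majorana forms $\gamma_{2j-1}\gamma_{2j}$ and $\gamma_{2j}\gamma_{2j+1}$, so in the adjoint (defining $\SO(2n)$) representation $e^{i\beta Z_j}$ is a planar rotation in the coordinate plane $(2j-1,2j)$ and $e^{i\alpha X_jX_{j+1}}$ a planar rotation in the plane $(2j,2j+1)$. Hence the even layer $L_{2k}$ maps to a product of commuting Givens rotations in the disjoint planes $(1,2),(3,4),\dots,(2n-1,2n)$ and the odd layer $L_{2k-1}$ to Givens rotations in $(2,3),(4,5),\dots,(2n-2,2n-1)$, and the whole ansatz \eqref{eq:spin-matrix-decomposition} maps to a depth-$2n$ brick-wall of nearest-plane Givens rotations. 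Since the homomorphism $\mathrm{FLO}\to\SO(2n)$ is two-to-one with kernel $\{\pm\id\}$, every claim about the circuit reduces to a claim about $\SO(2n)$ up to the advertised global minus sign; this also explains why the two ``outermost'' layers $L_1$ and $L_{2n}$ are singled out.

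\emph{The decomposition.} I would prove that every $R\in\SO(2n)$ factors as such a brick-wall by a constructive Givens-elimination sweep: adjacent rotations in exactly these planes suffice to annihilate the off-diagonal entries of $R$ one anti-diagonal at a time until $\id$ is reached, and reversing the sweep produces the factorization. Tightness is immediate from the parameter count $n(n-1)+n^2=n(2n-1)=\dim\SO(2n)$, and the ranges follow by matching the standard Hurwitz convention: the two outermost layers carry the full-circle angles $\alpha_{j,1},\beta_{j,n}\in[0,2\pi)$ while the remaining angles run over $[0,\pi]$, the leftover reflection freedom having been absorbed. Lifting through the two-to-one map then gives \eqref{eq:R-prod} up to $\pm\id$.

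\emph{The invariant measure.} For \eqref{eq:active_haar}--\eqref{eq:f_function} I would start from the classical Hurwitz parametrization of $\SO(2n)$ by a \emph{triangular} array of nearest-plane Givens angles, whose invariant measure is the product-of-sines $\prod\sin(\phi)^{c(\phi)}\,d\phi$ with the known staircase of exponents $c(\phi)$ (each elementary rotation placing a vector into an $m$-dimensional subspace contributes $\sin^{m-2}$, the polar part of the uniform measure on $S^{m-1}$; equivalently, placing the $k$-th column onto $S^{2n-k}$ costs exponents $0,1,\dots,2n-k-1$, with the ``$0$'' the free azimuthal angle). The triangular array is then converted into the brick-wall array by a sequence of elementary rewriting moves --- commuting rotations in disjoint planes and the three-term refactorizations relating the orders $G_{a,a+1}G_{a+1,a+2}G_{a,a+1}$ and $G_{a+1,a+2}G_{a,a+1}G_{a+1,a+2}$ --- and the measure is propagated through these moves, each of which acts on $\{\phi\}$ with a Jacobian that stays within the product-of-sines form. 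Identifying the Majorana plane index and the layer index of each circuit angle with the pair $(u,v)$ (so that $\alpha_{jk}\leftrightarrow(2j,2k-1)$ and $\beta_{jk}\leftrightarrow(2j-1,2k)$) and collecting the exponents yields exactly \eqref{eq:active_haar_layer} with the closed form \eqref{eq:f_function}: the ``$\min$'' records that a gate on a given anti-diagonal of the circuit sees a sphere whose dimension grows with its distance from the boundary but is capped by its distance from the central (longest) anti-diagonal, i.e.\ the alternating growth pattern of Fig.~\ref{fig:flo-circuits}. Finally $\NC^{-1}=\mathrm{Vol}(\SO(2n))=\prod_{k=1}^{2n-1}\mathrm{Vol}(S^k)$, which, using $\int_0^\pi\sin^{m-2}\phi\,d\phi=\sqrt{\pi}\,\Gamma(\tfrac{m-1}{2})/\Gamma(\tfrac{m}{2})$ for the $(n-1)(2n-1)$ half-circle angles and $\int_0^{2\pi}d\phi=2\pi$ for the $2n-1$ full-circle ones, telescopes precisely to $\big((2\pi)^{2n-1}\prod_{s=3}^{2n}\sqrt{\pi}^{\,s-2}/\Gamma(s/2)\big)^{-1}$.

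\emph{Main obstacle.} The crux is the last step. The triangular Hurwitz measure has a transparent staircase of exponents, but the brick-wall reshuffling permutes and refactors the elementary rotations nontrivially, so one must verify both that every rewriting move transforms the measure within the product-of-sines ansatz (controllable Jacobian) and that the accumulated exponents collapse to the single formula $f_n(u,v)$. Carrying out this bookkeeping --- most naturally by an induction on $n$ that tracks how the sphere dimensions propagate layer by layer through the brick-wall, feeding $f_{n-1}$ plus the two boundary contributions into $f_n$ --- is where essentially all of the difficulty of Theorem~\ref{th:fermion-haar} lies.
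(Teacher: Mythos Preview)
Your approach is essentially the same as the paper's: pass to $\SO(2n)$ via the adjoint action, start from Hurwitz's triangular Givens parametrization with its staircase product-of-sines measure, and compress to brick-wall via turnover moves while propagating the Jacobian. The paper's bookkeeping differs only tactically from your suggested induction on $n$: it isolates a single turnover lemma showing that $\sin^r(\alpha)\sin^{r+r'+1}(\beta)\sin^{r'}(\gamma)\,d\alpha\,d\beta\,d\gamma$ maps to $\sin^{r'}(a)\sin^{r+r'+1}(b)\sin^{r}(c)\,da\,db\,dc$ (the Jacobian $\sin b/\sin\beta$ being read off from the uniqueness of the $\mathbb{SU}(2)$ Haar measure in the two Euler-angle orderings), then verifies that the Hurwitz staircase satisfies the constraint ``middle exponent $=$ sum of outer exponents $+1$'' at every step of the compression, so that the full sequence of turnovers acts on the exponent list as an explicit permutation of the anti-diagonal entries, yielding $f_n$ directly.
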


We now interpret the result in Theorem~\ref{th:fermion-haar}. This theorem gives us a general way of constructing arbitrary FLO circuits, up to an unobservable global sign. In fact, the most general circuit has $n(2n-1)$ single- and two-qubit gates, which coincides with the dimension of the Lie algebra $\so(2n)$. Thus, this decomposition is optimal in terms of the total number of local quantum gates.
The gates are arranged in $2n$ layers in a brick-wall fashion, half of them consisting of $n-1$ two-qubit $R_{xx}(\alpha_{jk})$  rotations, and the other half of $n$  single-qubit $R_z(\beta_{jk})$ ones, as per Eqs.~\eqref{eq:spin-matrix-decomposition} and~\eqref{eq:R-prod}. Here, the index $k$ is used to label the layers and $j$ the position of a gate within the layer.
As an example, we show explicitly the cases for $n=3$ and $n=4$ in Fig.~\ref{fig:flo-circuits}. Importantly, the construction in Theorem~\ref{th:fermion-haar} only involves one- and two-qubit gates acting on a one-dimensional array of qubits (i.e., it is exactly of the form in Eq.~\eqref{eq:fermion-circuits}), which render this architecture readily implementable on near-term quantum hardware~\cite{foxen2020demonstrating,arute2020hartree}. Furthermore, the depth of the circuits is linear in $n$, scaling as $3n$ (as opposed to $6n-7$ for circuits with a triangular structure~\cite{helsen2022matchgate,heyraud2024unified}). This is optimal as one cannot further compress the $n(2n-1)$ gates that are necessary to obtain Haar random FLO. We nevertheless note that \emph{after} one has sampled a random FLO circuit, there might be instance-dependent simplifications that could allow for further depth reduction.

Finally,  to sample circuits from the Haar measure, one has to independently sample each of the $n(2n-1)$ parameters according to Eqs.~\eqref{eq:active_haar}, ~\eqref{eq:active_haar_layer} and~\eqref{eq:f_function}. 
In Fig.~\ref{fig:pdfs} we show the Probability Density Functions (PDFs) for the parameters in a $100$-qubit circuit.
The first thing to notice is that the PDFs are either uniform or given by increasingly large powers of the sine function, up to $\sin(\theta)^{2n-2}$. The only angles that are sampled uniformly (ranging in $[0,2\pi)$ as opposed to $[0,\pi]$ for the rest) are those in the first and last layers of the circuit, i.e., those of the form $\alpha_{j,1}$ and $\beta_{j,n}$. Then, to understand how all the PDFs are located within the circuit, one can look at the ``anti-diagonals'' depicted in Fig.~\ref{fig:flo-circuits}. The powers increase along each anti-diagonal, alternating between the head and the tail of the anti-diagonal (see Fig.~\ref{fig:flo-circuits}). Moreover, whether the left-most end of an anti-diagonal is considered its head or its tail depends on whether or not it is located above the main anti-diagonal. This behavior is precisely captured by the $f_n$ function in Eq.~\eqref{eq:f_function}. 
In Fig.~\ref{fig:microwave}, we show how the powers of the sine functions appearing in the PDFs are distributed within a $50$-qubit circuit, giving rise to a distinctive pattern where the powers grow larger as one moves towards the center of the circuit.  

At this point, it is interesting to notice that as $n$ increases, the majority of the PDFs get concentrated around $\pi/2$, implying that the sampled gates become closer and closer to the Clifford gates $iZ$ or $iX X$, as shown in Fig.~\ref{fig:pdfs}. Hence, the larger the number of qubits, the more the core of the circuit will resemble a Clifford one. 
In the SI~\ref{ap:sampling}, we explain how to sample from highly peaked distributions with a constant computational cost in the power $f_n$, as care must actually be taken to prevent this cost from scaling with $f_n$.

 \begin{figure}[t]
    \centering
    \includegraphics[width=\linewidth]{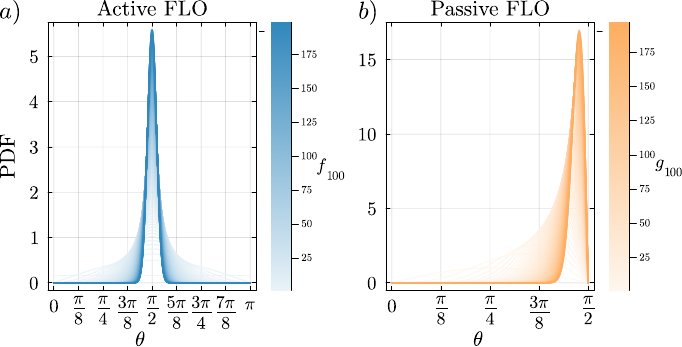}
    \caption{{\bf Parameter probability density functions}. The PDFs for the angles $\alpha_{j,k}$, $\beta_{j,k}$ and $\theta_{j,k}$ for Haar random active (a) and passive (b) FLO circuits, respectively, as a function of $f_{100}$ and $g_{100}$. As $f_{100}$ and $g_{100}$ increase, the angles concentrate around and towards $\pi/2$, which correspond to Clifford gates.}

    \label{fig:pdfs}
\end{figure}

\subsection{Haar measure on passive FLO circuits}

We now consider passive FLO. As explained before,  FLO preserve fermionic parity ($Z^{\otimes n}$ in Jordan-Wigner form), but not necessarily the particle number $\frac{1}{2}\sum_{q=1}^n (\id-Z_q)$. The subgroup of FLO that preserve both parity and particle number is known as passive FLO~\cite{oszmaniec2017universal}, and its adjoint action on the space of Majorana operators is isomorphic to the matrix Lie group $\SO(2n) \cap \mathbb{SP}(2n,\mathbb{R})$, which in turn is isomorphic to $\U(n) $. Here, $\mathbb{SP}(2n,\mathbb{R})$ denotes the group of $2n\times 2n$ symplectic matrices with real entries, and we recall that  a symplectic matrix $S$ is such that $S^T\Omega S=\Omega\equiv\id_n \otimes iY$.  Using the previous isomorphisms (see Methods and SI~\ref{ap:theo-2}), we can find the following set of matchgate generators for passive FLO,
\begin{equation}\label{eq:generators}
    \GC =  \{Z_q\}_{q=1}^n \cup \{X_qY_{q+1}-Y_q X_{q+1}\}_{q=1}^{n-1} \,.
\end{equation}
That is, any passive FLO can be written as in Eq.~\eqref{eq:fermion-circuits} using operators from $\GC$ (see, e.g., Fig.~\ref{fig:passive-flo-circuits}). 
In Fig.~\ref{fig:xy-yx} we show a decomposition of the $R_{xy-yx}^{j,j+1}(\theta)= e^{i\theta (X_jY_{j+1}-Y_{j}X_{j+1})/2}$ gate in terms of $R_z$ and $R_{xx}$ gates, such that both active and passive FLO  can be generated employing the exact same native gate set, which might be a desirable feature when it comes to actually implementing the circuits on quantum hardware.

\begin{figure}[t]
    \centering
    \includegraphics[width=\linewidth]{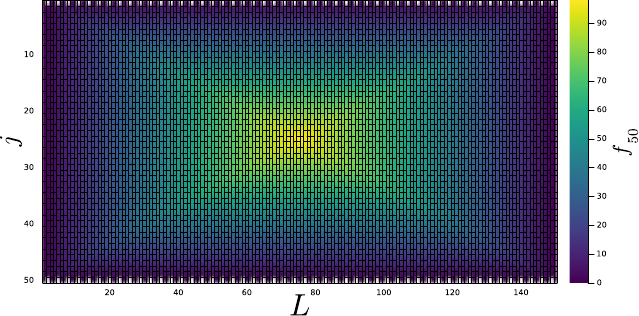}
    \caption{{\bf Structure of Haar random active FLO circuits.} Distribution of the powers of sine functions ($f_{50}$ in Eq.~\eqref{eq:f_function}) that have to be sampled in a Haar random active FLO circuit with $n = 50$ qubits, according to the architecture in Theorem~\ref{th:fermion-haar}. Each gate is shown as a rectangle, with larger blocks corresponding to $R_{xx}$ rotations and smaller ones to $R_z$ rotations. Colors encode the values of $f_{50}$; the horizontal axis denotes the circuit depth, and the vertical axis labels the qubits. 
    The white dents on the top and bottom are due to the absence of gates at those locations.}
    \label{fig:microwave}
\end{figure}

\begin{figure*}[t]
    \centering
    
    \includegraphics[width=\linewidth
    ]{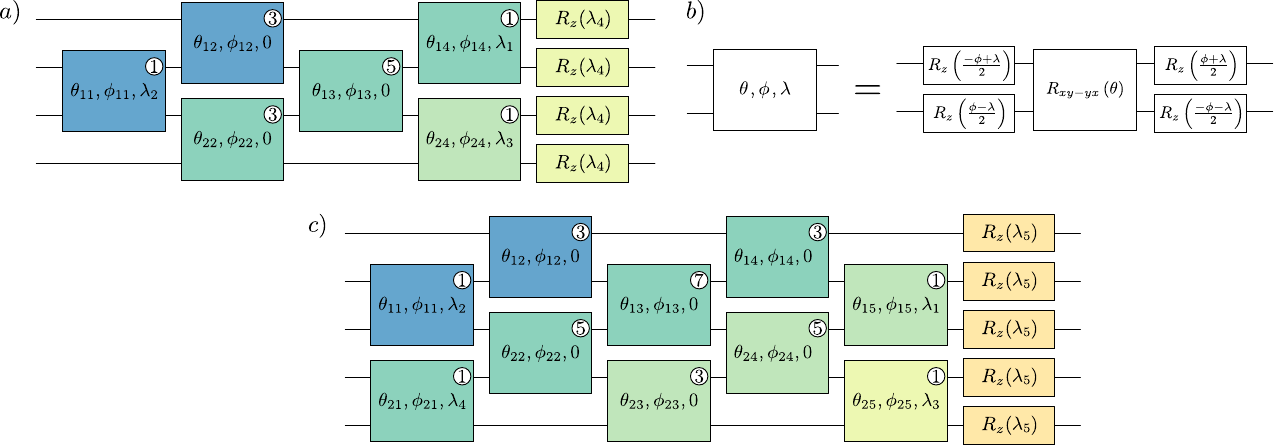}
    \caption{{\bf Optimal Haar random passive FLO circuits}. These quantum circuits produce Haar random passive FLO on $n=4$ (a) and $n=5$ qubits (c), respectively, when the parameters are sampled according to Eq.~\eqref{eq:passive-FLO-haar},~\eqref{eq:passive-FLO-haar-layer}, and~\eqref{eq:g_function}.
    Gates with the same color belong to the same ``anti-diagonal'', which can be used to understand the arrangement of the probability density functions within the circuits. Namely, how the odd powers of the sine functions in Eq.~\eqref{eq:passive-FLO-haar-layer} for the $\theta$ angles are distributed, as marked by the circled numbers next to their respective gate. The pattern, closely resembling that of Haar random active FLO circuits, is the following: we start at one end of the anti-diagonal —at the left end if the anti-diagonal lies above the main (longest) anti-diagonal, and at the right end if it lies below—, and then increment by two, alternating sides as we move toward the center. The main anti-diagonal follows the pattern of those above it when $n$ is odd, and of those below it otherwise. The last layer of  $R_z$ rotations does not participate in this pattern and the corresponding parameter is sampled uniformly, analogously to all the other phases $\phi, \lambda$ in each $U_{\tilde{R}_j}$. In addition, in (b) we depict a $U_{\tilde{R}_j}(\theta, \phi, \lambda)$ gate, as defined in Eq.~\eqref{eq:U-R-tilde}.}
    \label{fig:passive-flo-circuits}
\end{figure*}

As for the case of active FLO, we find optimal Haar random passive FLO circuits. Specifically, we prove the following theorem (see the SI~\ref{ap:theo-2} for a detailed proof).

\begin{theorem} \label{th:passive-fermion-haar}
Any passive fermionic linear optics circuit can be decomposed (up to a global minus sign) as 
\begin{equation} \label{eq:passive-flo-matrix-decomposition}
        U(\thv)= e^{i\frac{\lambda_n}{2}\sum_{q=1}^n Z_q} \tilde{L}_n \tilde{L}_{n-1} \cdots \tilde{L}_1\,,
\end{equation}
where 
\small
\begin{align}\label{eq:R-tilde-prod}
    \nonumber
    \tilde{L}_{2k} &= \prod_{j=1}^{\lfloor \frac{n}{2}\rfloor} U_{\tilde{R}_{2j-1}}(\theta_{j,2k},\phi_{j,2k},\lambda_{2j-1}\delta_{2k, n})\,,\\
    \nonumber
    \tilde{L}_{2k+1} &= \prod_{j=1}^{\lfloor \frac{n-1}{2}\rfloor} U_{\tilde{R}_{2j}}(\theta_{j,2k+1},\phi_{j,2k+1},\lambda_{2j-1}\delta_{2k+1, n}+\lambda_{2j}\delta_{2k+1,1})\,, \\
\end{align} 
\normalsize
with  $k\in 0,1,\dots,\left\lfloor\frac{n}{2}\right\rfloor$, $\tilde{L}_0\equiv\id$, and
    \begin{align} \label{eq:U-R-tilde}
       &U_{\tilde{R}_j}(\theta,\phi,\lambda)= \nonumber\\  &\;e^{i\frac{\phi+\lambda}{4} Z_j} e^{i\frac{-\phi-\lambda}{4} Z_{j+1}} e^{i\frac{\theta}{2} (X_jY_{j+1}-Y_j X_{j+1})} e^{i\frac{-\phi+\lambda}{4} Z_j} e^{i\frac{\phi-\lambda}{4} Z_{j+1}} \,\!.
    \end{align}
    The parameters  take values $\theta\in[0,\frac{\pi}{2}]$, $\phi\in[0,2\pi)$, and $\lambda\in[0,2\pi)$.

    In addition, the normalized Haar measure for the adjoint representation with respect to the parametrization in Eqs.~\eqref{eq:passive-flo-matrix-decomposition} and~\eqref{eq:R-tilde-prod} is given by
    \begin{equation} \label{eq:passive-FLO-haar}
        d\mu(U)= \NC \prod_{k=0}^{\lfloor \frac{n}{2}\rfloor} \mu(\tilde{L}_{2k}) \mu(\tilde{L}_{2k+1})\prod d\theta \,d\phi\, d\lambda\,, 
    \end{equation}
    where  $\NC=\left(\sqrt{2\pi}^{n(n+1)}\prod_{s=1}^{n-1}(2s)^{s-n}\right)^{-1}$,
    \begin{align}\label{eq:passive-FLO-haar-layer}
        \mu(\tilde{L}_{2k}) &= \prod_{j} \cos( \theta_{j,2k})\sin(\theta_{j,2k})^{g_n(2j-1,2k)}   \nonumber\\ 
        \mu(\tilde{L}_{2k+1}) &= \prod_{j} \cos( \theta_{j,2k+1})\sin(\theta_{j,2k+1})^{g_n(2j,2k+1)} \,,
    \end{align}
    with $\mu(\tilde{L}_{0})\equiv1$, and
    \begin{equation} \label{eq:g_function}
        g_n(u,v)=
    \begin{cases}
        \min(4v-3, 4n-4u-1) \quad &{\rm if} \quad u>v, \\
        \min(4n-4v+1, 4u-1) \quad &{\rm if} \quad u<v \\
    \end{cases}\,.
    \end{equation}
\end{theorem}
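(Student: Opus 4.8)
The plan is to reduce the statement to the classical Hurwitz parametrization of $\U(n)$ and its invariant measure, and then to transport that parametrization through the passive-FLO-to-$\U(n)$ isomorphism and through a sequence of exact circuit identities that reshape a triangular array of gates into the brick-wall of Eqs.~\eqref{eq:passive-flo-matrix-decomposition}--\eqref{eq:R-tilde-prod}.

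First I would make the isomorphism explicit: a passive FLO unitary acts on the $n$-dimensional single-particle space as some $u\in\U(n)$, with $e^{i\theta Z_q}$ mapping to a diagonal phase on mode $q$ and $e^{i\theta(X_qY_{q+1}-Y_qX_{q+1})}$ mapping to a Givens-type planar rotation in the $(q,q+1)$ plane. One checks that the kernel of this homomorphism is $\{\pm\id\}$, which is exactly the source of the global-minus-sign caveat, and that the composite gate $U_{\tilde R_j}(\theta,\phi,\lambda)$ of Eq.~\eqref{eq:U-R-tilde} is sent precisely to the two-level unitary $U^{(j,j+1)}$ of the Hurwitz scheme --- a planar rotation by $\theta$ dressed by two phases $\phi,\lambda$. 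Because in the Hurwitz decomposition every two-level factor already acts on a pair of \emph{consecutive} indices, the induced matchgate circuit is automatically nearest-neighbour; the only things to fix are its \emph{shape} and the placement of the residual phases.

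Next I would invoke the Hurwitz parametrization (see Refs.~\cite{diaconis2017hurwitz,zyczkowski1994random}): any $u\in\U(n)$ factors as a product of $n(n-1)/2$ two-level unitaries arranged in $n-1$ nested cascades of lengths $n-1,n-2,\dots,1$, followed by a diagonal of $n$ residual phases, and in these coordinates the normalized Haar measure is $\propto\prod\cos\theta\,(\sin\theta)^{2\ell-1}\,d\theta\,d\phi\,d\lambda$, with the exponent $2\ell-1$ determined by the depth of each factor inside its cascade. The core of the proof is then a \emph{rewriting argument}: I would isolate a short list of elementary moves --- (i) commuting gates on disjoint qubits, (ii) absorbing an $R_z$ rotation into an adjacent composite gate by redefining its $\phi,\lambda$, and (iii) a triangle/``braid'' identity $U^{(j,j+1)}U^{(j+1,j+2)}U^{(j,j+1)}=U^{(j+1,j+2)}U^{(j,j+1)}U^{(j+1,j+2)}$ with an associated reparametrization of the three $\theta$ angles --- and show that iterating them turns the triangular cascade array into the brick-wall $\tilde L_n\cdots\tilde L_1$, while sweeping the leftover diagonal phases into the first and last layers (this is the role of the Kronecker deltas in Eq.~\eqref{eq:R-tilde-prod} and of the prefactor $e^{i\lambda_n\sum_q Z_q/2}$ in Eq.~\eqref{eq:passive-flo-matrix-decomposition}). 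Each move has a triangular Jacobian that I would compute once on $\U(2)$ and $\U(3)$; composing them along the anti-diagonals of Fig.~\ref{fig:passive-flo-circuits} accumulates the sine powers, and an induction on $n$ identifies the total exponent at circuit location $(u,v)$ with $g_n(u,v)$ of Eq.~\eqref{eq:g_function} --- the ``start at one end of the anti-diagonal, increment by two, alternate sides'' pattern described after the theorem. The constant $\NC$ then follows by evaluating $\int_0^{\pi/2}\cos\theta\,(\sin\theta)^{g}\,d\theta=1/(g+1)$ on every factor (the $\phi,\lambda$ integrals being trivial) and collecting the resulting products of even integers, or alternatively by matching against the known volume of $\U(n)$.

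The step I expect to be the main obstacle is the combinatorial control of the Jacobian under the triangular-to-brick-wall rewriting: one must verify that, regardless of the order in which the elementary moves are applied, the accumulated power of $\sin\theta_{j,k}$ at each circuit location equals $g_n$, that the reparametrized $\theta$ angles stay in $[0,\tfrac{\pi}{2}]$ (so that no boundary terms or Jacobian sign flips appear), and that the residual phases end up exactly where Eq.~\eqref{eq:R-tilde-prod} places them. Getting the head/tail alternation of each anti-diagonal right --- in particular the different behaviour of the main anti-diagonal for $n$ odd versus $n$ even --- is the delicate point, and I expect it is cleanest to handle by an induction that peels off the outermost cascade (equivalently, one fermionic mode) at a time, mirroring the inductive argument used for the active case in Theorem~\ref{th:fermion-haar}.
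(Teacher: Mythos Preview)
Your proposal is correct and follows essentially the same route as the paper: Hurwitz's triangular parametrization of $\U(n)$, the passive-FLO-to-$\U(n)$ isomorphism, and a sequence of turnover identities along anti-diagonals that reshape the triangle into the brick-wall while tracking how the sine exponents permute. Two minor remarks: the paper computes the turnover Jacobian not by a direct (triangular) calculation but by observing that the two $\U(3)$ parametrizations are related by conjugation by a fixed $M\in\U(3)$, so Haar invariance gives the Jacobian for free as the ratio of the two densities (combined with the relations $\sin\theta_1\sin\theta_2=\sin t_2\sin t_3$, $\sin\theta_2\sin\theta_3=\sin t_1\sin t_2$); and the paper does not peel off modes inductively but instead shows that the constraint ``middle exponent $=$ sum of outer exponents $+1$'' is preserved throughout the full sequence of turnovers applied to whole odd anti-diagonals, which is what makes the bookkeeping close.
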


Theorem~\ref{th:passive-fermion-haar} provides a general architecture that can produce arbitrary passive FLO circuits in terms of one- and two-qubit gates acting on an open chain of qubits. The number of independent parameters is $n^2$, which equals the dimension of $\mathfrak{u}(n)$, and therefore is optimal. 
The gates in the circuits are arranged in a brick-wall fashion, as depicted in Fig.~\ref{fig:passive-flo-circuits}, which implies that the depth is also optimal.
There are $n$ layers, $\tilde{L}_{1}$ through $\tilde{L}_n$, with  an additional final layer of $R_z(\lambda_n)$ rotations on every qubit with the same rotation angle $\lambda_n$. 
Each layer $\tilde{L}$ consists of $\lfloor \frac{n}{2}\rfloor$ or $\lfloor \frac{n-1}{2}\rfloor$  two-qubit unitaries $U_{\tilde{R}_j}$, depending on whether the first gate of the layer is  $U_{\tilde{R}_2}$ or $U_{\tilde{R}_1}$, respectively.
 In turn, each $U_{\tilde{R}_j}$ acts on qubits $j,\,j+1$ and is decomposed as in Eq.~\eqref{eq:U-R-tilde} and Fig.~\ref{fig:xy-yx}, i.e., each $U_{\tilde{R}_j}$ consists of four $R_z$ and one $R_{xy-yx}(\theta)$ rotations.

We stress that contiguous $R_z$ rotations can (and should) be compiled into a single gate by simply adding the angles, and that we keep them separate in Fig.~\ref{fig:passive-flo-circuits} for the sake of explanation. Even more importantly, the $R_z$ gates can be implemented ``virtually'' in some platforms like superconducting qubits~\cite{mckay2017efficient}, which means that they are essentially errorless, and that the depth only corresponds to that arising from the two-qubit gates.

In order to obtain the Haar measure over passive FLO, one must sample the parameters in the circuit following a specific recipe. Namely, the parameters must be independently sampled according to Eqs.~\eqref{eq:passive-FLO-haar}, ~\eqref{eq:passive-FLO-haar-layer} and~\eqref{eq:g_function}. As such, the parameters $\phi$ and $\lambda$ which go into the $R_z$ rotations are all uniformly sampled in the interval $[0,2\pi)$. 
Moreover, the PDFs for the angles $\theta$ of the $R_{xy-yx}$ gates are given by increasingly larger odd powers of the sine function, modulated by a cosine. The pattern of how the PDFs are arranged within the circuit is completely analogous to that of active FLO (see Fig.~\ref{fig:microwave}), except for the fact that all the sine powers are odd. Hence, one must look at the ``anti-diagonals'' in the circuit to find out where these powers are located, as graphically shown in Fig.~\ref{fig:passive-flo-circuits}.
Again, as $n$ increases, the majority of the angles concentrate towards $\pi/2$, i.e., Clifford gates (as per Fig.~\ref{fig:pdfs}).

\subsection{Haar measure on Clifford FLO circuits}

The group of Clifford circuits consists of all the unitaries that normalize the Pauli group~\cite{gottesman1998heisenbergrepresentation}. That is, a Clifford unitary is such that its adjoint action on a Pauli string returns a Pauli string, up to a $\pm1,\,\pm i$ phase.  Importantly, this group plays a central role in  quantum error correction~\cite{bennett1996mixed,calderbank1997quantum}, classical shadows~\cite{huang2020predicting,west2024real}, or the classical simulation of quantum circuits~\cite{gottesman1998heisenbergrepresentation,aaronson2004improved}.

Here, we are interested in the intersection of the Clifford group and FLO, or in other words, in FLO circuits that are also Clifford.  As explained in the SI, these circuits are a representation of the matrices $O\in\SO(2n)$ that are signed permutations, because a Majorana operator $c_l$ (which is a Pauli string) must be mapped
to a Majorana operator $\pm c_m$, see Eq.~\eqref{eq:action-majorana}. The corresponding finite subgroup of  $\mathbb{SO}(2n)$  of signed permutation matrices of $2n$ items (with unit determinant) is a subgroup of the hyperoctahedral group $B_{2n}$, and contains $2^{2n-1} (2n)!$ elements.

The relevance of Clifford FLO stems from the fact that they form a $3$-design over FLO~\cite{wan2022matchgate,heyraud2024unified}. This entails important consequences, since in many applications--like fermionic classical shadows~\cite{wan2022matchgate} or fermionic random sampling~\cite{oszmaniec2022fermion}--it suffices to sample  Clifford FLO instead of sampling from the Haar measure over FLO. In addition, in the SI~\ref{ap:4-design}  we prove that the following result holds. 

\begin{theorem} \label{th:clifford}
    The Clifford FLO group does not form a FLO $4$-design.
\end{theorem}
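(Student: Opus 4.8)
The strategy is to exhibit an explicit operator whose expectation under the $4$-th moment superoperator differs depending on whether one averages over Clifford FLO or over the full FLO Haar measure. Since the adjoint action of Clifford FLO is the signed permutation subgroup $B_{2n}\cap\SO(2n)$ acting on the $2n$ Majorana modes, while the adjoint action of Haar FLO is $\SO(2n)$, the question reduces to a classical one about moments: does the uniform measure on signed permutations of $\{1,\dots,2n\}$ with determinant $+1$ reproduce the $4$-th moments of the Haar measure on $\SO(2n)$, when acting on the relevant representation? The natural arena is the commutant of the $4$-fold (adjoint) action. For $\SO(2n)$ the commutant of $V^{\otimes t}$ on the defining representation $V=\mathbb{R}^{2n}$ is the Brauer algebra $\mathfrak{B}_t(2n)$ (here $t=4$), whose dimension is $(2t-1)!! = 105$ for $t=4$, provided $2n$ is large enough; the signed-permutation group has a strictly larger commutant, and the extra invariant is what breaks the $4$-design property.

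\medskip

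\textbf{Key steps.} First, I would translate the $4$-design condition into Majorana language: Clifford FLO form a FLO $4$-design iff for every degree-$\le 4$ monomial $c_{i_1}c_{i_2}c_{i_3}c_{i_4}$ in Majorana operators, averaging the conjugated operator over the two ensembles agrees; equivalently, iff the fourth-order tensor $\mathbb{E}_{O}\,O^{\otimes 4}$ coincides for $O$ uniform in $B_{2n}\cap\SO(2n)$ and $O$ Haar in $\SO(2n)$. Second, I would identify a specific invariant tensor of the signed-permutation group that is \emph{not} a Brauer diagram: the prime candidate is the ``all-indices-equal'' delta, $\sum_{i=1}^{2n} e_i^{\otimes 4}$ (the tensor $\delta_{i_1 i_2 i_3 i_4}$), which is manifestly invariant under any signed permutation (each sign contributes $(\pm1)^4=1$) but is not in the span of the pair-contraction Brauer diagrams $\{\delta_{ab}\delta_{cd}, \delta_{ac}\delta_{bd}, \delta_{ad}\delta_{bc}\}$ for $2n\ge 4$. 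Third, I would compute the two averages of the associated operator — concretely, fix an operator such as $O = \sum_{i} c_i^{\otimes 4}$ under the adjoint map, or more cleanly pick a Majorana monomial like $c_1 c_2 c_3 c_4$ and a test state, and evaluate $\mathbb{E}[\,\mathrm{Tr}(\rho\, U c_1c_2c_3c_4 U^\dagger)\cdot(\dots)\,]$ — showing the Haar-FLO value is governed by the Brauer-algebra projector while the Clifford-FLO value picks up an extra contribution from the diagonal invariant, and the two are unequal. To keep the argument airtight and low-tech, I would instead directly contrast a single scalar: the fourth moment $\mathbb{E}[O_{11}^4]$ where $O_{11}$ is a fixed matrix element of the adjoint rep. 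For Haar $\SO(2n)$ this is $3/(2n(2n+2))$, while for the signed permutation group $O_{11}\in\{-1,0,1\}$ so $\mathbb{E}[O_{11}^4]=\mathbb{E}[O_{11}^2]=\Pr[O_{11}\ne 0]=1/(2n)$; these differ for all $n\ge 1$, which already shows the ensembles disagree at fourth order on the adjoint representation. The remaining (and essential) step is to promote this to a genuine statement about FLO unitaries on Hilbert space rather than just their $\SO(2n)$ images — i.e.\ to check that the relevant fourth moment superoperator $\mathbb{E}[U^{\otimes 2}\otimes \bar U^{\otimes 2}]$ actually sees this discrepancy, and that it is not washed out by the $Z^{\otimes n}$-parity structure or by the passage from the vector rep to the full $2^n$-dimensional rep.

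\medskip

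\textbf{Main obstacle.} The subtle point is the last one: a mismatch at the level of $\mathbb{E}[O^{\otimes 4}]$ on the $2n$-dimensional defining representation does not automatically imply a mismatch of the FLO unitaries as a $4$-design, because the FLO $t$-design condition concerns the representation of the group on $(\mathbb{C}^{2})^{\otimes n}$ and its $t$-th tensor power, and one must verify that the offending invariant survives the Jordan–Wigner embedding and the restriction to the parity sector — equivalently, that the map from degree-$4$ Majorana monomials into $\mathrm{End}((\mathbb{C}^2)^{\otimes n})^{\otimes 4}$ is injective enough to detect $\delta_{i_1i_2i_3i_4}$. I expect this to go through because products of distinct Majorana operators are linearly independent and the adjoint action is faithful modulo the global sign, so the ``extra'' signed-permutation invariant lifts to an operator identity that holds for Clifford FLO but fails on average for Haar FLO; pinning down this lift cleanly, and writing the two moment integrals so that the inequality is transparent, is where the real work lies. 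A convenient way to organize it is via the commutant: show $\dim(\text{commutant of Clifford-FLO on }t=4) > \dim(\text{commutant of FLO on }t=4)$, with the gap accounted for exactly by the diagonal invariant(s), which simultaneously reproves the known fact that $t=3$ \emph{is} fine (there the diagonal invariant $\delta_{i_1i_2i_3}$ happens to already lie in the relevant span, or the odd-degree Majorana sector is empty/constrained by parity) and explains why $t=4$ is the first failure.
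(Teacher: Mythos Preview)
Your plan is correct and pinpoints the same obstruction as the paper: the diagonal tensor $\sum_i e_i^{\otimes 4}$ (equivalently $\sum_l c_l^{\otimes 4}$) is invariant under signed permutations but not under Haar $\SO(2n)$. The paper packages this differently. Rather than comparing moments directly, it starts from $Q=\sum_{P\in\PC_n}P^{\otimes 4}$, which is known to lie in the $4$-fold commutant of the full Clifford group and hence of Clifford FLO, so no computation is needed on the Clifford side; it then twirls $Q$ over Haar FLO, restricts attention to the single-Majorana sector $\LC_1$ (where $Q$ contributes exactly your $\sum_l c_l^{\otimes 4}$), and uses Weingarten calculus for $\SO(2n)$ to show the result is a Brauer combination $\propto(\id\otimes\id+\mathrm{SWAP}+\Pi)$ rather than the diagonal tensor. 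The ``lift to Hilbert space'' you flag as the obstacle is handled there by the irrep decomposition $\LC=\bigoplus_\nu\LC_\nu$ together with the fact that FLO conjugation preserves Majorana degree, so the discrepancy in $\LC_1$ cannot be canceled by other sectors --- this is precisely the linear-independence-of-Majorana-products argument you anticipated. Your route via the scalar $\mathbb{E}[O_{11}^4]=\tfrac{1}{2n}$ versus $\tfrac{3}{2n(2n+2)}$ is more elementary and makes the numerical gap explicit; to finish it cleanly, just apply both twirls to $X=c_1^{\otimes 4}$ and read off the coefficient of $c_1^{\otimes 4}$ in $\mathbb{E}[(Uc_1U^\dagger)^{\otimes 4}]=\sum_{\vec m}\mathbb{E}[O_{1m_1}\cdots O_{1m_4}]\,c_{m_1}\otimes\cdots\otimes c_{m_4}$, using that the $c_{m_1}\otimes\cdots\otimes c_{m_4}$ are linearly independent in $\LC(\HC^{\otimes 4})$. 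The paper's $Q$-argument buys the Clifford-FLO invariance for free from standard Clifford theory; yours trades that for a self-contained moment calculation.
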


We recall that this property parallels the one exhibited by Clifford circuits, which form a $3$-design over the unitary group, but not a $4$-design~\cite{webb2016clifford,zhu2017multiqubit,kueng2015qubit,zhu2016clifford}. Similarly, other subgroups of the Clifford group are also $3$-designs (but not $4$-designs) over subgroups of the unitary group~\cite{hashagen2018real,mitsuhashi2023clifford}.

\begin{figure}[t]
    \centering
    \includegraphics[width=.8\linewidth]{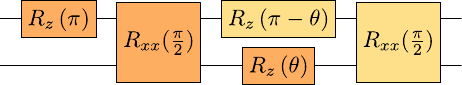}
    \caption{{\bf Decomposition of the $R_{xy-yx}(\theta)$ passive FLO gate}. We show how to implement an $R_{xy-yx}(\theta)$ rotation in terms of $R_z$ and $R_{xx}$ gates (up to an unobservable global sign). This decomposition was obtained following Eqs.~\eqref{eq:so-matrix-decomposition} and~\eqref{eq:R-O-tilde-prod} as indicated by the colors in the ladder-like decomposition.}
    \label{fig:xy-yx}
\end{figure}

Given the practical importance of random Clifford FLO, we now provide a simple algorithm to uniformly sample Clifford FLO circuits directly on quantum hardware.  The algorithm uses the triangular-shape circuit decomposition of Eqs.~\eqref{eq-ap:spin-matrix-decomposition} and~\eqref{eq-ap:R-prod} (see Supp. Fig.~\ref{fig-ap:so-circuits} for clarity), and is as follows:

\begin{algorithm}
\caption{Random Clifford FLO sampling}\label{alg:clifford}
\begin{algorithmic}[1]
\State \textbf{Input}: \texttt{Initial angles $\theta_{j,k}=0$ for all $j,k$}
 \For{$k$ in $2n,\dots,2$}
  \State \texttt{Uniformly sample a random integer $l$ in $[k]$}
  \For{$j\geq l$}
  \State \texttt{ $\theta_{j,k}\leftarrow\pi/2$}
  \EndFor
  \State \texttt{Set $\theta_{l,k}\leftarrow\theta_{l,k}$ or $\theta_{l,k}\leftarrow\theta_{l,k}+\pi$ at random with equal probability}
  \If{l=k}:
  \State \texttt{Set $\theta_{l-1,k}\leftarrow\theta_{l-1,k}$ or $\theta_{l-1,k}\leftarrow\theta_{l-1,k}+\pi$ at random with equal probability}
  \EndIf
  \EndFor
 \State \Return \texttt{angles}
\end{algorithmic}
\end{algorithm}

In simple terms, Algorithm~\ref{alg:clifford} goes over the layers $C_{2n-1}$ through $C_1$ (which have a ``ladder-like'' structure), and for each layer it picks a random integer $l$ from $1$ to $k$, where $k$ is the number of gates in the layer plus one. Then, all the angles $\theta_{j,k}$ corresponding to gates  acting on the $j$-th position within the layer, for $j\geq l$,  are set to $\pi/2$. Additionally, the angle $\theta_{l,k}$ for the gate in the $l$-th position (or $\theta_{l-1,k}$ if $l=k$)  is either left unchanged or added an extra $+\pi$ with equal probability.

We provide a proof for the correctness of Algorithm~\ref{alg:clifford} in the SI~\ref{ap:clifford}. Intuitively, it can be understood from the fact that random Clifford FLO sampling must be isomorphic to sampling random signed permutations matrices of $2n$ items with unit determinant. The latter can be done by sampling $2n-1$ integers from $[2n]$ without replacement (this fixes a permutation), and then assigning $2n-1$ signs at random. The first part of this procedure is equivalent to setting angles to $\pi/2$, while the second part corresponds to choosing between adding or not an extra $+\pi$ to certain angles.

The benefit of sampling random Clifford FLO circuits instead of Haar random FLO is twofold. First, it reduces the average number of two-qubit gates by a constant factor. Indeed, we analytically prove this quantity to be $n^2/2$, which is optimal, as this is exactly half the average number of inversions needed to express a random permutation. In practice, Algorithm~\ref{alg:clifford} can significantly reduce the number of gates in the circuit as compared to previous approaches that do not rely on classical compilation~\cite{heyraud2024unified}. Such reduction is shown in Fig.~\ref{fig:clifford}. 
Second, since (by definition) random Clifford FLO circuits can be fully implemented by only using Clifford gates, they are well suited for stabilizer-code-based fault-tolerant devices~\cite{gottesman1997stabilizer,fowler2012surface}.

We remark that Algorithm~\ref{alg:clifford} does not use the brick-wall architecture of Theorem~\ref{th:fermion-haar}, and thus it produces circuits with a suboptimal depth in general. The main difficulty in transforming  Algorithm~\ref{alg:clifford} into a sampling procedure that delivers circuits with an optimal depth is that for Clifford FLO circuits there exists a non-zero probability of sampling rotation angles that are zero. Hence, it is not at all clear how to obtain a general procedure that brings the circuit from a triangular shape with correlated angles and ``holes'' (corresponding to the gates whose rotation angles are zero), into a compact form. Yet, we stress again that our algorithm produces circuits with an optimal number of two-qubit gates on average, and that once a Clifford FLO circuit has been sampled using Algorithm~\ref{alg:clifford}, one can use standard transpilation techniques to reduce its depth~\cite{kokcu2022algebraic,camps2022algebraic} (at the cost of a classical computation overhead).

Alternatively, one could combine the procedure outlined in Ref.~\cite{heyraud2024unified} with the brick-wall architecture from Theorem~\ref{th:fermion-haar} to find that random Clifford FLO can be sampled using the following PDFs for the Clifford angles $\{0,\pi/2,\pi,3\pi/2\}$ in each gate,
\begin{equation}
    p_0 = p_\pi = \frac{1}{2(f_n+2)} \,,\quad p_{\frac{\pi}{2}}= p_{\frac{3\pi}{2}}=\frac{f_n+1}{2(f_n+2)}\,.
\end{equation}
Here, $f_n$ is the power of the sine function appearing in the Haar measure associated to each angle (see Eq.~\eqref{eq:f_function}), that is, $f_n(2j,2k-1)$ for the $\alpha_{jk}$ angles and $f_n(2j-1,2k)$ for the $\beta_{jk}$ ones. While this strategy produces circuits with a compressed depth, the number of gates is not optimal, as shown in Fig.~\ref{fig:clifford}. It thus may be more convenient to use Algorithm~\ref{alg:clifford} instead, and compress the circuits afterwards.

\begin{figure}[t]
    \centering
    \includegraphics[width=1\linewidth]{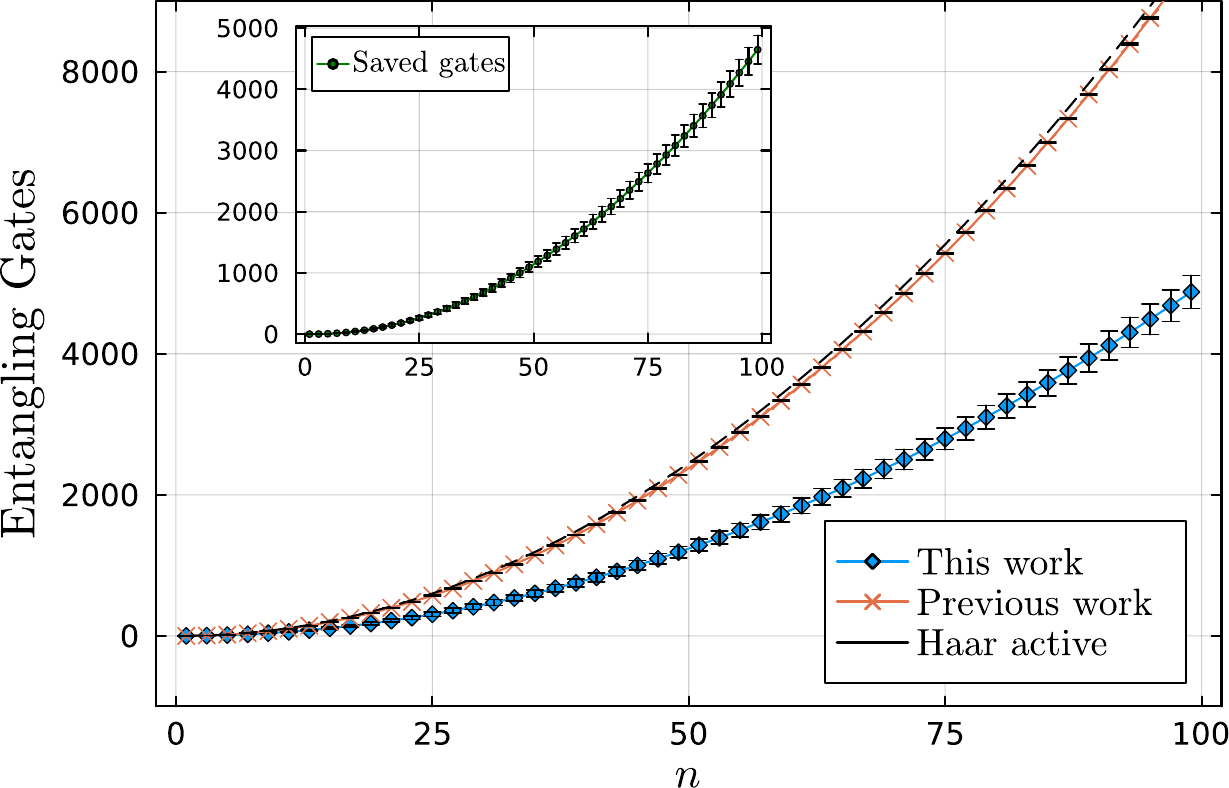}
    \caption{{\bf Number of two-qubit gates of random Clifford FLO circuits}. We compare the average number of two-qubit gates obtained by sampling $10^4$ random Clifford FLO circuits  for each circuit width $n$, according to Algorithm~\ref{alg:clifford} (blue diamonds) or the one proposed in Ref.~\cite{heyraud2024unified} (orange crosses). The error bars show the standard deviations, and the inset the difference in the total number of two-qubit gates between the two sampling protocols. We furthermore plot the number of two-qubit gates of Haar random active FLO circuits for comparison.}
    \label{fig:clifford}
\end{figure}

\begin{figure}[t]
    \centering
    \includegraphics[width=\linewidth]{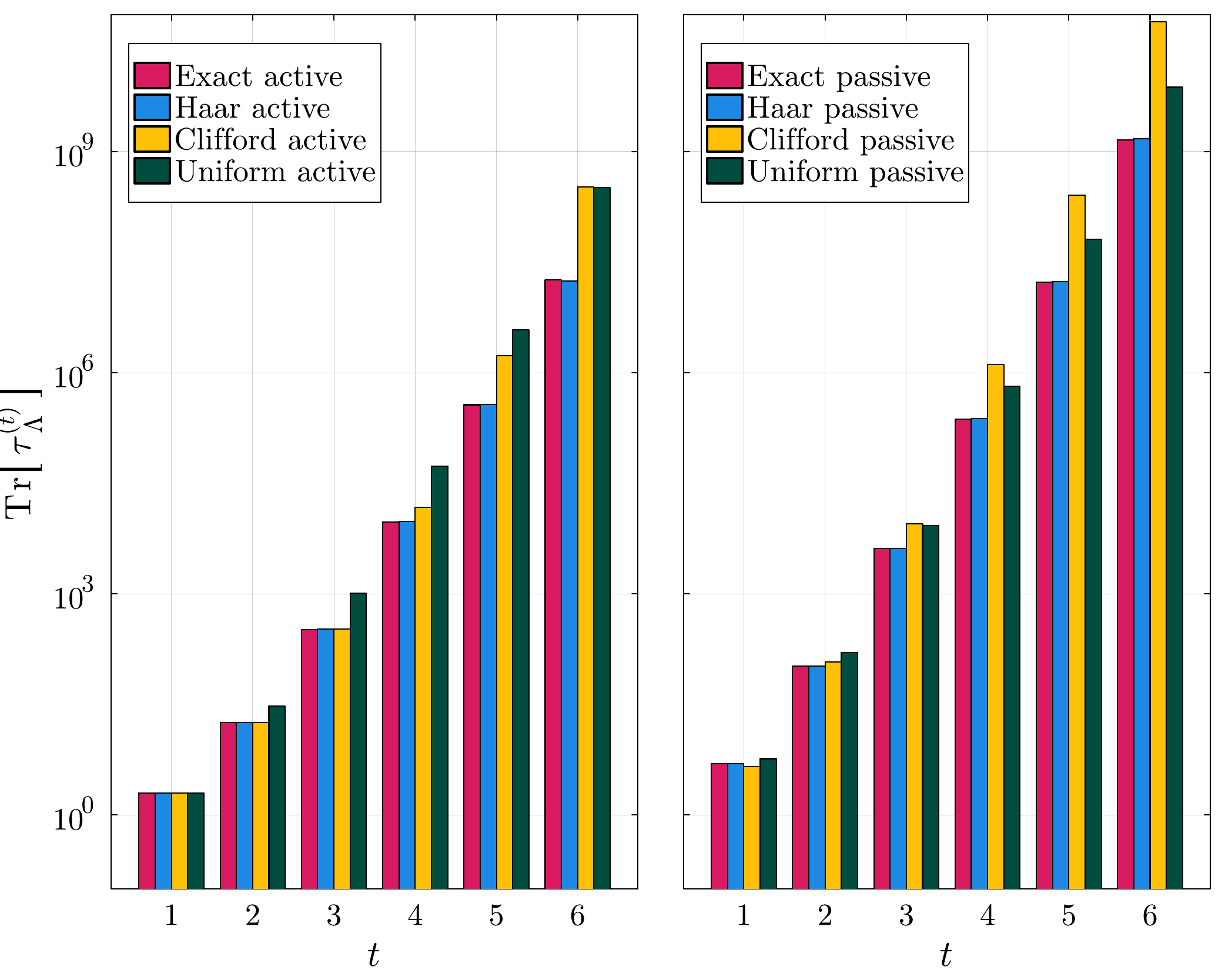}
    \caption{{\bf Deviations from the Haar measure}. We compute the expectation value $\Tr\left[\tau_\Lambda^{(t)}\right]$ for different circuit architectures and parameter sampling methods. We take $n=4$ qubits and sample $10^6$  unitaries in each case. In particular, we study the frame potential quantity for both active (left panel) and passive (right panel) FLO circuits. For each value of $t$ we compare: the exact value (magenta bars); the average value obtained by sampling the Haar random architectures from Theorems~\ref{th:fermion-haar} and~\ref{th:passive-fermion-haar} (blue bars); the average valued obtained by sampling the same circuits' architectures but with a uniform sampling of the parameters in $[0,2\pi)$ (yellow bars); and random Clifford FLO circuits sampled according to Algorithms~\ref{alg:clifford} and~\ref{alg-ap:clifford-passive} (dark green bars). We provide details for the computation of the exact values of $\Tr\left[\tau_\Lambda^{(t)}\right]$ for Haar random FLO circuits in the SI~\ref{ap:Howe}.}
    \label{fig:brick-haar}
\end{figure}

\subsection{Numerical experiments}

In this section, we present the results of numerical simulations that illustrate and verify our theoretical results. In particular, we used the architectures presented in Theorems~\ref{th:fermion-haar} and~\ref{th:passive-fermion-haar} to sample active and passive FLO circuits according to their respective Haar measures, as well as random Clifford FLO following Algorithm~\ref{alg:clifford}.

To begin, we will show that by randomly sampling unitaries according to our proposed schemes, we can recover the correct statistical properties of the Haar measure. For this purpose, we will construct the frame potentials~\cite{mele2023introduction} arising from the sampled unitaries. For convenience, we recall that given a set of  unitaries $\Lambda$ and a probability distribution $p(U)$ over the set, its $t$-th moment superoperator is defined as 
\begin{equation}
    \tau_\Lambda^{(t)}=\sum_{U\in \Lambda} p(U)\,  U^{\otimes t} \otimes (U^*)^{\otimes t} \,.
\end{equation}
If the set is continuous, then one simple replaces the summation by an integral. Moreover, when $\Lambda$ forms a group, and the probability distribution over the unitaries leads to the associated Haar measure, then the moment superoperator is a projector onto the group's $t$-th fold commutant~\cite{ragone2022representation,mele2023introduction}. Hence, in this case the eigenvalues of $\tau_\Lambda^{(t)}$ are either zero or one, and the quantity $\Tr\left[\tau_\Lambda^{(t)}\right]=\mathbb{E}_U\left[\left|\Tr\left[U\right]\right|^{2t}\right]$, which is known as the frame potential, is equal to the $t$-th-fold commutant dimension.

In Fig.~\ref{fig:brick-haar} we compute the frame potential $\Tr\left[\tau_\Lambda^{(t)}\right]$ for different values of $t$ when $\Lambda$ is a set of $10^6$ $4$-qubit unitaries sampled according to our proposed algorithms over active and passive FLO. There, we also plot the dimensions of the associated $t$-th-fold commutants, which we compute exactly using a general computational method (i.e., not specific to these groups and representations) that we present in the SI~\ref{ap:Howe}, and which may be of independent interest. As we can see from the figure, despite the statistical uncertainty arising from the fact that $\Lambda$ contains a finite number of samples, the dimensions obtained closely match. To further showcase that the correctness of our results is a consequence of correctly sampling the unitaries from the Haar measure, we also showcase in Fig.~\ref{fig:brick-haar} the value of $\Tr\left[\tau_\Lambda^{(t)}\right]$ obtained when sampling FLOs in an alternative, more naive, fashion. Namely, we use the same  circuit structure as in Theorems~\ref{th:fermion-haar} and \ref{th:passive-fermion-haar} (i.e. the same gate placements), 
but with uniform parameter distributions in $[0,2\pi)$ for all angles. As seen in Fig.~\ref{fig:brick-haar}, the ensuing values of $\Tr\left[\tau_\Lambda^{(t)}\right]$ deviate significantly from the Haar value for $t\geq2$ when we consider these alternative sampling methods that do not lead to the  Haar measure.

Finally, we employ Algorithm~\ref{alg:clifford} to produce random Clifford FLO and compare their frame potentials against the others. Interestingly, besides verifying that Clifford FLO are a FLO $3$-design, we find that $4$-qubit random  Clifford FLO circuits are much closer to Haar random circuits, for $t=4,5$, than those relying on uniformly  sampling the parameters, although we have proven that Clifford FLO are not an exact $4$-design (see SI~\ref{ap:4-design}). In addition, we use an analogous algorithm to Algorithm~\ref{alg:clifford}, that we introduce in the SI~\ref{ap:clifford}, to uniformly sample the group of passive Clifford FLO. Our numerical results indicate that passive Clifford FLO do not appear to form $t$-designs over passive FLO, which may be related to the fact that the particle number is not a Pauli symmetry~\cite{mitsuhashi2023clifford}.

\section{Discussion}

In this work we have presented \emph{optimal} quantum circuits for sampling Haar random active and passive FLO directly on quantum hardware. Unlike previous techniques which rely on sampling $2n\times 2n$ matrices and then compiling them into $n$-qubit circuits (or else use suboptimal triangular-shaped architectures), we only require sampling gate parameters and inputting them into a fixed circuit architecture, thus leading to more efficient schemes. As such, our proposed sampling methods may be included in any quantum information protocol which requires random FLO circuits.

Going forward, one may wonder if such exact parametrization of the Haar measure can be achieved for other quantum circuits that sample over Lie groups (or their intersection with the Clifford group). However, given that most circuits whose gates are generated by Pauli operators are either FLO (studied here) or lead to groups of exponentially large dimension~\cite{wiersema2023classification,kokcu2024classification,aguilar2024full}, the ensuing circuits would be exponentially deep. This does not preclude the  possibility of finding efficient methods to exactly sample from the Haar measure for circuits whose gate generators are not Paulis, but linear combinations thereof. Such is the case, for instance,  for permutation-invariant circuits~\cite{schatzki2022theoretical}, which belong to a group of dimension $\Theta(n^3)$ and for which one could potentially efficiently parametrize the Haar measure. Here, one can expect that non-local gates might be needed, as it is well known that permutation invariant gates generated by local Paulis fail to be universal~\cite{kazi2023universality,marvian2022restrictions}.  We leave, however, such question open for future work.

\section{Methods}

While we leave the detailed rigorous proofs of all of our results to the SI, we here provide an overview of the most relevant proof techniques.
In particular, in order to construct our optimal Haar random FLO circuits, we employ the following schematic strategy: 
\begin{enumerate}
    \item We start with known parametrizations of the standard representations of the classical compact Lie groups  $\SO(2n)$ and $\U(n)$, found by Hurwitz, together with the corresponding invariant measures. Here, we note that while there exist several parametrizations for these Lie groups, not all of them are well suited for implementation as quantum circuits. For instance, the construction of~\cite{spengler2012composite}  requires the use of multi-qubit gates instead of only using single- and two-qubit gates, like our scheme does. 

    \item We identify isomorphisms between the Lie algebras $\mathfrak{so}(2n)$ or $\mathfrak{u}(n)$, and the dynamical Lie algebra of the respective FLO circuits, active or passive.

    \item We use these isomorphisms to translate Hurwitz's decompositions into suboptimal  quantum circuits with a triangular shape.

    \item We derive a set of rules that allow us to track how the Haar measure transforms under certain rearrangements of the circuits' gates known as \textit{turnovers}~\cite{kokcu2022algebraic, camps2022algebraic}. This allows us to bring the previous triangular circuits into their optimal shape.
\end{enumerate}

We will now delve into more details for the case of active FLO (the reasoning for passive FLO is completely analogous), leaving the comprehensive proofs for the SI.

\subsection{Decomposition of $\SO(2n)$ with Givens rotations}

The standard representation of the special orthogonal group $\mathbb{SO}(d)$ is provided by the orthogonal matrices of size $d\times d$ and determinant equal to one, acting irreducibly on $\mathbb{R}^{d}$. These matrices satisfy  $OO^T=O^TO=\id_d$ for every $O\in\SO(d)$ (where $\id_d$ is the $d\times d$ identity matrix).

Any matrix $O\in\SO(d)$ can be decomposed as~\cite{diaconis2017hurwitz}
\begin{equation} \label{eq:so-matrix-decomposition}
    O=O_1 O_2\cdots O_{d-1}\,, 
\end{equation}
for
\begin{equation}
    O_j=R_j(\theta_{j,j+1})\cdots R_1(\theta_{1,j+1})\,,
\end{equation}
with Euler angles $\theta_{1,j+1}\in [0,2\pi)$, and $\theta_{j',j+1}\in [0,\pi]$ for $j'>1$. Here, the $R_j$ are Givens rotations defined by
\begin{equation} \label{eq:R-O-tilde-prod}
    R_j(\theta)=e^{\theta L_{jj+1}}= \!\begin{pmatrix}
        \id_{j-1} \\ & \cos\theta & \sin\theta \\ & -\sin\theta & \cos\theta \\ & & & \id_{d-j-1}
    \end{pmatrix}\,,
\end{equation}
where $\{L_{jk}\}_{1\leq j<k\leq d}$, and
\begin{equation}
    \left(L_{jk}\right)_{m,l} = \delta_{jm}\delta_{kl}-\delta_{jl}\delta_{km}\,,
\end{equation}
is an orthogonal basis for the vector space of anti-symmetric matrices of size $d\times d$, and hence for the special orthogonal algebra $\mathfrak{so}(d)$. 
The idea behind this decomposition is that the inverse of each $O_j$ (i.e., its transpose $O_j^T$) can always be used to bring the $j$-th row of $O$ to the corresponding canonical basis vector $\vec{e}_j^T$, where $(\vec{e}_j)_k=\delta_{jk}$. 

Moreover, the normalized Haar measure $d\mu(O)$ on the standard representation of $\SO(d)$ with respect to the parametrization in Eqs.~\eqref{eq:so-matrix-decomposition} and~\eqref{eq:R-O-tilde-prod} is given by~\cite{diaconis2017hurwitz}
    \begin{equation} \label{eq:haar-so}
        d\mu(O) = \NC \prod_{1\leq j <k\leq d}\sin(\theta_{j,k})^{j-1} d\theta_{j,k}\,,
    \end{equation}
    where $\NC=\left((2\pi)^{d-1} \prod_{k=3}^{d}\frac{\sqrt{\pi}^{k-2}}{\Gamma(k/2)}\right)^{-1}$.

\subsection{Mapping the Lie algebras of $\SO(2n)$ and $\mathbb{SPIN}(2n)$}

The group of active FLO unitaries, or mathcgate circuits, is well known to be isomorphic to the group $\mathbb{SPIN}(2n)$~\cite{guaita2024representation}. In order to study FLO circuits, it is most convenient to introduce the Majorana operators, which under the Jordan-Wigner transformation take the form 
\begin{equation}
\begin{split}
    c_1&=X\id\dots \id,\; c_3= ZX\id\dots \id, \;\;\dots, \; c_{2n-1}=Z\dots Z X\,, \nonumber\\
        c_2&=Y\id\dots \id,\; c_4= ZY\id\dots \id,\;\;\dots, \; \;\; c_{2n}\;\;\;=Z\dots Z Y\,.
\end{split}
\end{equation}
These operators act on the Hilbert space $(\mathbb{C}^{2})^{\otimes n}$ of $n$ qubits, are Hermitian (as they correspond to Pauli strings), and satisfy the anti-commutation relations $    \{c_j,c_k\}=2\delta_{jk}$, for $j,k=1,\dots ,2n$. 
It can be shown (see e.g.,~\cite{diaz2023showcasing}) that the dynamical Lie algebra $\g$ of active FLO circuits, i.e., the Lie closure --or the real vector space spanned by the nested commutators-- of the generators $i\GC$ in Eq.~\eqref{eq:fermion-circuits}~\cite{zeier2011symmetry},   is
\begin{equation}\label{eq:dla}
    \g={\rm span}_{\mathbb R}\{c_j c_k\}_{1\leq j<k \leq 2n}\,.
\end{equation} 
The Lie algebra in Eq.~\eqref{eq:dla} is the real vector space spanned by the product of two distinct Majoranas, and it is isomorphic to the orthogonal Lie algebra $\mathfrak{so}(2n)$. 
As we show in the SI, this isomorphism can be realized by the following linear map
\begin{equation} \label{eq:isomorphism}
    \varphi(L_{jk})= \frac{c_j c_k}{2}\,,
\end{equation} 
between the generators of $\so(2n)$ and the products of two distinct Majorana operators. As such,  any FLO circuit will be a unitary in the Lie group $e^{\g}$.  

\begin{figure}[t]
    \centering

     \includegraphics[width=\linewidth]{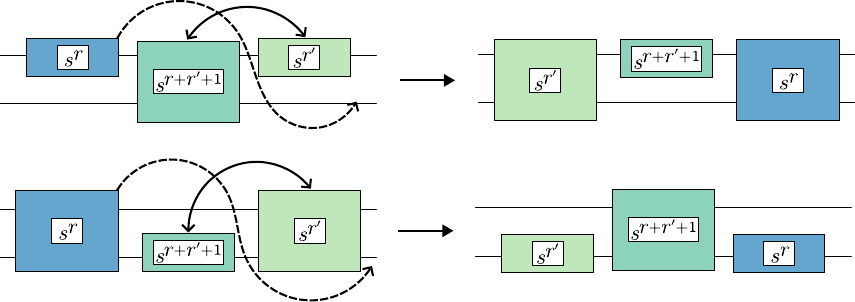}
    \caption{{\bf Bringing the FLO Haar measure to optimal depth}. We graphically show the set of rules that allow us to find optimal Haar random active FLO circuits. In particular,  these rules specify how the Haar measure transforms under certain ``turnovers'' of the gates, according to Eqs.~\eqref{eq:turnover_before} and~\eqref{eq:turnover_after}.}
    \label{fig:turnover-Haar}
\end{figure}

\subsection{Translating Hurwitz's decomposition into quantum circuits}

While the Lie algebras of active FLO circuits and special orthogonal matrices of even dimension are isomorphic, the group of unitaries of the form~\eqref{eq:fermion-circuits} implemented by these circuits and $\SO(2n)$ are not. This is a consequence of the fact that $\SO(2n)$ is not a simply-connected Lie group~\cite{hall2013lie}. Instead, FLO unitaries are a representation of the $\mathbb{SPIN}(2n)$ group, which is the double cover of $\mathbb{SO}(2n)$~\cite{guaita2024representation}. This can be understood from the factor $1/2$ in Eq.~\eqref{eq:isomorphism}: consider a matrix $O=e^{\theta L_{jk}}\in\SO(2n)$, and let us use Eq.~\eqref{eq:isomorphism} to obtain $U=e^{\theta c_j c_k/2}$. It is clear that $O'= e^{(\theta+2\pi) L_{jk}}=O$ but $U'=e^{(\theta+2\pi) c_j c_k/2}=-U$. Hence, both $U$ and $-U$ are mapped to the same $O$, which implies that the groups are not isomorphic. 
However, the adjoint representation ${\rm Ad}_U(\cdot) = U(\cdot)U^\dagger$ of these unitaries acting on the vector space of Majorana operators is isomorphic to that of $\SO(2n)$ on $\mathbb{R}^{2n}$, since in the adjoint representation both $U$ and $-U$ have the exact same action. 
This isomorphism is realized as~\cite{jozsa2008matchgates}
\begin{equation} \label{eq:action-majorana}
    U c_{l\,} U^{\dagger} = \sum_
{m=1}^{2n} (O)_{lm}c_{m}\,,
\end{equation}
where $(O)_{lm}$ are the entries of an special orthogonal matrix belonging to $\SO(2n)$.

This fact will allow us to use the isomorphism $\varphi$ in Eq.~\eqref{eq:isomorphism} to straightforwardly map the decomposition~\eqref{eq:so-matrix-decomposition}, and associated Haar measure~\eqref{eq:haar-so}, of the matrices in the standard representation of $\SO(2n)$,  to (the adjoint action of) a quantum circuit, by replacing the Givens rotations $R_j(\theta)$ with
\begin{equation} 
    U_{R_j}(\theta)=e^{\theta c_j c_{j+1}/2}\,.
\end{equation}
One can readily verify that for odd $j=1,3,\dots$, we obtain $c_j c_{j+1}=iZ_j$, while for even $j=2,4,\dots$, we find $c_j c_{j+1}=iX_jX_{j+1}$, recovering the usual matchgate circuits generators $\GC=\{Z_q\}_{q=1}^n \cup \{X_qX_{q+1}\}_{q=1}^{n-1}$.

As explained in the main text, this direct mapping produces quantum circuits with a triangular shape, and hence a suboptimal depth.

\subsection{Compressing the circuits to optimal depth}

Finally, we resort to the following ``turnover'' property~\cite{kokcu2022algebraic,camps2022algebraic},
\begin{equation}
    U_{R_j}(\alpha)U_{R_{j+1}}(\beta)U_{R_j}(\gamma) = U_{R_{j+1}}(a)U_{R_j}(b)U_{R_{j+1}}(c) \,,
\end{equation}
for conveniently chosen $a(\alpha,\beta,\gamma)$, $b(\alpha,\beta,\gamma)$ and $c(\alpha,\beta,\gamma)$,
to shuffle gates around and bring the previously obtained triangular circuits into circuits with an optimally compressed brick-wall structure.
Specifically, we derive a set of rules specifying how the Haar measure transforms under certain turnovers. Namely, we prove that
the unnormalized distribution
    \begin{equation} \label{eq:turnover_before}
       \chi_{r,r'}(\alpha,\beta,\gamma) = \sin^r (\alpha) \sin^{r+r'+1}(\beta) \sin^{r'}(\gamma) \,,
    \end{equation}
    transforms into
    \begin{equation} \label{eq:turnover_after}
        f_{r,r'}(a,b,c) = \sin^{r'} (a)\sin^{r+r'+1} (b) \sin^{r}(c) \,.
    \end{equation}
    That is,  $\chi_{r,r'}(\alpha,\beta,\gamma)\,d\alpha\, d\beta\, d\gamma =f_{r,r'}(a, b, c)\, da\, db\, dc$, 
as graphically depicted in Fig.~\ref{fig:turnover-Haar}. In the SI, we rigorously show that this simple set of rules is sufficient to obtain optimal Haar random active FLO circuits from those obtained via Hurwitz's decomposition.

\section*{Acknowledgments}

We thank Michael Ragone, Antonio Anna Mele and Andrea Palermo for useful conversations. PB, MC and DGM were supported by the Laboratory Directed Research and Development (LDRD) program of Los Alamos National Laboratory (LANL) under project numbers 20230049DR and 20230527ECR. NLD was also initially supported by CONICET  Argentina. NLD was also supported by the Center for Nonlinear Studies at LANL. MC and DGM also acknowledge support by LANL's ASC Beyond Moore’s Law project.

\bibliography{quantum}

\renewcommand\appendixname{Supp. Info.}
\appendix
\onecolumngrid

\renewcommand\figurename{Supplemental Figure}

\setcounter{lemma}{0}
\setcounter{theorem}{0}
\setcounter{algorithm}{0}
\setcounter{figure}{0}

\newcounter{supfig}
\setcounter{supfig}{\value{figure}}

\section{Proof of Theorem 1}
\label{ap:theo-1}

In this section we provide the detailed proof of Theorem~\ref{th:fermion-haar}, which we here recall for convenience.
\begin{theorem} \label{th-ap:fermion-haar}
    Any fermionic linear optics circuit can be decomposed (up to a global minus sign) as 
    \begin{equation} \label{eq-ap:spin-matrix-decomposition-th}
            U(\thv)= L_{2n} L_{2n-1}\cdots L_1\,,
    \end{equation}
    where 
    \begin{align}\label{eq-ap:R-prod-th}
        &L_{2k-1} = \prod_{j=1}^{n-1} e^{i\alpha_{jk} X_j X_{j+1}}\,,\quad L_{2k}= \prod_{j=1}^{n} e^{i \beta_{jk} Z_j}\,,
    \end{align}
    for $1<k<n$. The parameters take values  $\alpha_{j,1}\in [0,2\pi)$, $\beta_{j,n}\in [0,2\pi)$ and $\alpha_{jk}, \in [0,\pi]$, $\beta_{j,k}\in [0,\pi]$.
    Moreover, the normalized Haar measure for the adjoint representation with respect to the parametrization in Eqs.~\eqref{eq-ap:spin-matrix-decomposition-th} and~\eqref{eq-ap:R-prod-th} is given by
     \begin{equation}\label{eq-ap:active_haar-th}
        d\mu(U)=\NC \prod_{k=1}^n \mu(L_{2k} L_{2k-1})\prod d\alpha\,d\beta\,,
    \end{equation}
    where $\NC=\left((2\pi)^{2n-1} \prod_{s=3}^{2n}\frac{\sqrt{\pi}^{\,s-2}}{\Gamma(s/2)}\right)^{-1}$,
    \begin{equation}\label{eq-ap:active_haar_layer}
       \mu(L_{2k} L_{2k-1})= \prod_{j=1}^{n-1}\sin(\alpha_{jk})^{f_n(2j,2k-1)}\prod_{j=1}^{n}\sin(\beta_{jk})^{f_n(2j-1,2k)}\,,
    \end{equation}
    and
    \begin{equation} \label{eq-ap:f_function}
        f_n(u,v)=
    \begin{cases}
        \min(2v-2, 4n-2u-1) \quad &{\rm if} \quad u>v, \\
        \min(4n-2v, 2u-1) \quad &{\rm if} \quad u<v \\
    \end{cases}\,.
    \end{equation}
\end{theorem}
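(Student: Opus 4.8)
The plan is to follow the four-step strategy sketched in the Methods: lift Hurwitz's parametrization of the standard representation of $\SO(2n)$ to a quantum circuit through the Lie-algebra isomorphism $\varphi$ of Eq.~\eqref{eq:isomorphism}, obtain a triangular FLO circuit with a known measure, and then apply a sequence of \emph{turnover} moves to compress it into the brick-wall layout of Eqs.~\eqref{eq-ap:spin-matrix-decomposition-th}--\eqref{eq-ap:R-prod-th}, keeping track of how the Haar measure transforms.

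\textbf{Step 1 (Hurwitz $\to$ triangular FLO circuit).} Starting from the decomposition $O=O_1\cdots O_{2n-1}$ of $\SO(2n)$ in Eq.~\eqref{eq:so-matrix-decomposition} with Givens rotations $R_j(\theta)$, I replace each $R_j(\theta)$ by $U_{R_j}(\theta)=e^{\theta c_jc_{j+1}/2}$, as dictated by $\varphi(L_{jk})=c_jc_k/2$. By the adjoint-action isomorphism of Eq.~\eqref{eq:action-majorana}, the resulting unitary is a FLO circuit, unique up to the global sign coming from $\mathbb{SPIN}(2n)$ being the double cover of $\SO(2n)$; since $c_{2q-1}c_{2q}=iZ_q$ and $c_{2q}c_{2q+1}=iX_qX_{q+1}$, the gates are exactly the $R_z$ and $R_{xx}$ rotations generated by $\GC$. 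Because $\dim\so(2n)=n(2n-1)$ equals the number of Givens rotations, and the exponential of $\varphi$ is surjective onto the adjoint group, this already establishes the decomposition claim once Step 2 has reshuffled the gates into the stated layered order. The Haar measure of this triangular circuit is, by Eq.~\eqref{eq:haar-so}, $\NC\prod_{1\le j<k\le 2n}\sin(\theta_{j,k})^{j-1}\,d\theta_{j,k}$ with the normalization $\NC$ of the theorem, which is left invariant under all subsequent moves.

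\textbf{Step 2 (turnover schedule).} The Givens rotations obey the Euler-angle turnover identity $U_{R_j}(\alpha)U_{R_{j+1}}(\beta)U_{R_j}(\gamma)=U_{R_{j+1}}(a)U_{R_j}(b)U_{R_{j+1}}(c)$, under which, by Eqs.~\eqref{eq:turnover_before}--\eqref{eq:turnover_after}, the weighted differential is invariant provided the sine exponents transform as $(r,\,r+r'+1,\,r')\mapsto(r',\,r+r'+1,\,r)$; gates on disjoint qubit pairs commute freely and carry their weights along. I then exhibit an explicit schedule of such moves that sends the triangular layout of Step 1 to the $2n$ layers $L_1,\dots,L_{2n}$ of alternating $R_{xx}$ and $R_z$ rotations. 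The gates forming $L_1$ and $L_{2n}$ are precisely those that are never the ``middle'' or ``outer'' gate of any turnover, which is why their angles inherit the uniform exponent $0$ (equivalently, the range $[0,2\pi)$) from the $\theta_{1,k}$ of Hurwitz's parametrization.

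\textbf{Step 3 (exponent bookkeeping $\to$ $f_n$).} Assigning to each gate of the brick-wall circuit the coordinate pair $(u,v)$ of Figs.~\ref{fig:flo-circuits} and~\ref{fig:microwave}, I prove by induction over the turnover schedule that the exponent on $\sin$ of the angle at position $(u,v)$ equals $f_n(u,v)$ of Eq.~\eqref{eq-ap:f_function}. The two branches of the $\min$ in $f_n$ correspond to whether the ``fresh'' Hurwitz exponents $2u-1$ or $2v-2$ have already saturated to the reflected values $4n-2u-1$ or $4n-2v$ by the time the schedule reaches that gate, and the alternation of head/tail along each anti-diagonal is exactly the repeated $(r,r')\mapsto(r',r)$ swap. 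Combining the turnover-invariance of the total weight with the invariance of $\NC$, and grouping the per-gate factors layer by layer, gives Eqs.~\eqref{eq-ap:active_haar-th}--\eqref{eq-ap:active_haar_layer}.

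\textbf{Main obstacle.} The substantive difficulty is Step 3: verifying that the particular schedule of Step 2 yields exactly the closed form $f_n$ rather than some recursively defined exponent pattern. This is a combinatorial induction on the anti-diagonal structure — tracking how many times each gate has been reflected and at what stage its exponent saturates — and the bulk of the case analysis lies in reconciling the $u>v$ and $u<v$ branches, handling the two boundary layers, and keeping straight the parity bookkeeping that places $R_z$ gates at odd coordinates and $R_{xx}$ gates at even ones. Correctly designing the schedule in Step 2 so that it is both realizable (only using turnovers and commutations of neighboring gates) and leads to this clean pattern is the crux of the argument.
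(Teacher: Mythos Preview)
Your plan is correct and mirrors the paper's proof exactly: lift Hurwitz's triangular $\SO(2n)$ parametrization to a FLO circuit via $\varphi$, then compress via turnovers while tracking the sine exponents through the swap rule $(r,\,r+r'+1,\,r')\mapsto(r',\,r+r'+1,\,r)$. For the Step~2--3 ``main obstacle,'' the paper's concrete schedule is to pass each \emph{odd}-length anti-diagonal $d_k$ (for $k<2n-1$) through all subsequent ones, which acts on the exponent vector $\vec d_m=(m-1,m-2,\ldots,0)$ of every later anti-diagonal by the permutation $\Sigma_k=\sigma_{1,2}\sigma_{2,3}\cdots\sigma_{k,k+1}$; composing these gives the interleaved patterns $\vec{\bar d}_{2k'}=(0,2,\ldots,2k'-2,2k'-1,\ldots,3,1)$ and $\vec{\bar d}_{2k'+1}=(1,3,\ldots,2k'-1,2k',\ldots,2,0)$, from which $f_n$ is read off directly---so no separate induction or boundary case analysis is required.
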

The outline of the proof is the following. We first recall a decomposition of the elements in the standard representation of $\mathbb{SO}(d)$ in terms of Givens rotations, as reported in Ref.~\cite{diaconis2017hurwitz}, together with the associated Haar measure. Next, we explicitly write down an isomorphism between the Lie algebras of the groups $\mathbb{SO}(2n)$ and $\mathbb{SPIN}(2n)$, which will provide us the necessary intuition to map the previous decomposition from $\SO(2n)$ to fermionic linear optics (FLO) circuits. Lastly, making use of the circuit transformations described in Refs.~\cite{kokcu2022algebraic,camps2022algebraic}, we will prove how the Haar measure transforms under such rearrangements of the  gates, and obtain our optimal Haar random FLO circuits.

\subsection{Decomposition of $\mathbb{SO}(d)$ matrices}
\label{ap:decomposition}

We first show that any matrix $O\in\SO(d)$ can be decomposed as
\begin{equation}\label{eq-ap:so-matrix-decomposition}
    O=O_1 O_2\cdots O_{d-1}\,, 
\end{equation}
for
\begin{equation}\label{eq-ap:R-O-tilde-prod}
    O_j=R_j(\theta_{j,j+1})\cdots R_1(\theta_{1,j+1})\,,
\end{equation}
with Euler angles $\theta_{1,j+1}\in [0,2\pi)$, and $\theta_{j',j+1}\in [0,\pi]$ for $j'>1$. Here, the Givens rotations are defined by
\begin{equation}\label{eq-ap:givens}
    R_j(\theta)=e^{\theta L_{jj+1}}= \!\begin{pmatrix}
        \id_{j-1} \\ & \cos\theta & \sin\theta \\ & -\sin\theta & \cos\theta \\ & & & \id_{d-j-1}
    \end{pmatrix}\,,
\end{equation}
where $\{L_{jk}\}_{1\leq j<k\leq d}$ is an orthogonal basis of the special orthogonal algebra $\mathfrak{so}(d)$, with 
\begin{equation}\label{eq-ap:L_def}
    \left(L_{jk}\right)_{m,l} = \delta_{jm}\delta_{kl}-\delta_{jl}\delta_{km} \,.
\end{equation}
To do so, we need to prove that given an arbitrary $O\in\SO(d)$, we can always multiply it by the inverse of~\eqref{eq-ap:so-matrix-decomposition} and obtain the $d\times d$ identity matrix, $\id_d$. That is, that there always exist angles $\theta_{j,k}$ such that
\begin{equation}
    O O_{d-1}^TO_{d-2}^T\cdots O_1^T = \id_d\,,
\end{equation}
is satisfied. Let us start with the product $OO_{d-1}^T$, and let us first elucidate the action of the Givens rotation $R_1^T(\theta_{1,d})$ when multiplying $O$ from the left,
\begin{align}\nonumber
    O R_1^T(\theta_{1,d}) &= \begin{pmatrix}
        o_{11} & \cdots & o_{1 d} \\
        \vdots & \ddots & \vdots \\
        o_{d 1}  & \cdots & o_{d d} 
    \end{pmatrix} \begin{pmatrix}
         \cos\theta_{1,d} & -\sin\theta_{1,d} \\ \sin\theta_{1,d} & \;\;\;\cos\theta_{1,d} \\ & & & \id_{d-2}
    \end{pmatrix} 
    \\&= \begin{pmatrix}
        o_{11} \cos\theta_{1,d} + o_{12} \sin \theta_{1,d} & -o_{11} \sin \theta_{1,d} + o_{12}\cos\theta_{1,d} & o_{13} & \cdots & o_{1 d} \\ \vdots &
        \vdots & \vdots & \ddots & \vdots \\
        o_{d 1} \cos\theta_{1,d} + o_{d 2} \sin \theta_{1,d} & -o_{d 1} \sin \theta_{1,d} + o_{d 2}\cos\theta_{1,d} & o_{d 3} & \cdots & o_{d d} 
    \end{pmatrix}\,.
\end{align}
We can see that $R_1^T(\theta_{1,d})$ acts non-trivially on the first two columns of $O$, and as the identity on the rest. Analogously, $R_j^T(\theta_{j,k})$ only acts non-trivially on columns $j, j+1$. Now, we will use $R_1^T(\theta_{1,d})$ to set to zero the $(d,1)$ entry of the matrix $O R_1^T(\theta_{1,d})$. We do so by solving 
\begin{equation} \label{eq-ap:cos-sin}
     o_{d 1} \cos\theta_{1,d} + o_{d 2} \sin \theta_{1,d} = 0\,,
\end{equation}
whose solution is 
\begin{equation}\label{eq-ap:arctan}
    \theta_{1,d}= \arctan \left(-\frac{o_{d1}}{o_{d 2}}\right)\,,
\end{equation}
whenever $o_{d 2}\neq 0$ and $\theta_{1,d}=\pi/2$ otherwise. Importantly,  this equation can always be solved with a value $\theta_{1,d}\in[0,\pi]$. Next, we apply $R_2^T(\theta_{2,d})$, which is used to set to zero the $(d,2)$ entry of the matrix. Note that $R_2^T(\theta_{2,d})$ acts trivially on the first column, which guarantees that the $(d,1)$ entry remains equal to zero. We iterate this process through $R_{d-1}^T(\theta_{d-1,d})$, setting the first $d-1$ entries of the bottom row to zero,
\small
    \begin{equation} \label{eq-ap:R-action}
    \!\!\!\!\begin{pmatrix}
        o_{11} & \cdots & o_{1 d} \\
        \vdots & \ddots & \vdots \\
        o_{d 1}  & \cdots & o_{d d} 
    \end{pmatrix} \xrightarrow{R_1^T(\theta_{1,d})} \begin{pmatrix}
        o_{11}' & o_{12}' & \cdots & o_{1 d} \\
        \vdots & \vdots & \ddots & \vdots \\
       0  & o_{d 2}' &\cdots & o_{d d} 
    \end{pmatrix} \xrightarrow{R_2^T(\theta_{2,d})} \begin{pmatrix}
        o_{11}' & o_{12}'' & o_{13}' & \cdots & o_{1 d} \\
        \vdots & \vdots & \vdots & \ddots & \vdots \\
       0  & 0 & o_{d 3}' & \cdots & o_{d d} 
    \end{pmatrix} \rightarrow \cdots \rightarrow \begin{pmatrix}
        o_{11}' & o_{12}'' & \cdots & o_{1 d-1}' & 0 \\
        \vdots & \vdots & \ddots & \vdots &\vdots \\
        0  & 0 & \cdots & 0 & \pm 1
    \end{pmatrix}\,.
\end{equation}
\normalsize

Since the matrix $O_{d-1}$ is orthogonal, the last entry of the bottom row in Eq.~\eqref{eq-ap:R-action} must be equal to $\pm 1$. Furthermore, all entries in the last column but the last one must be zero, which also follows from the orthogonality of the matrices. We can then use the freedom to choose $\theta_{1,d}\in[0,2\pi)$ instead of $\theta_{1,d}\in[0,\pi]$ to set this entry to $+1$. Indeed, changing $\theta_{1,d}$ to $\theta_{1,d}+\pi$ still satisfies Eq.~\eqref{eq-ap:arctan}, but changes $o_{d2}'$ to $-o_{d2}'$; then, $\theta_{2,d}$ must change to $\pi-\theta_{2,d}$ to satisfy Eq.~\eqref{eq-ap:arctan}. In turn, the previous changes $o_{d3}'=-o_{d 2}' \sin \theta_{2,d} + o_{d 3}\cos \theta_{2,d}$ to $o_{d3}'=o_{d 2}' \sin \theta_{2,d} - o_{d 3}\cos \theta_{2,d}=-o_{d 3}'$. And so on. 

Successively applying $O_{d-2}^T$, $O_{d-3}^T,\dots$, we make zeros in the $d-2$, $d-3,\dots$ rows, and eventually obtain the identity matrix,
\begin{equation}
        \begin{pmatrix}
        o_{11} & \cdots & o_{1 d} \\
        \vdots & \ddots & \vdots \\
        o_{d 1}  & \cdots & o_{d d} 
    \end{pmatrix} \xrightarrow{O_{d-1}^T} \begin{pmatrix}
        o_{11}' & o_{12}'' & \cdots & o_{1 d-1}' & 0 \\
        \vdots & \vdots & \ddots & \vdots &\vdots \\
        0  & 0 & \cdots & 0 & \pm 1
    \end{pmatrix} \xrightarrow{O_{d-2}^T}\quad\cdots\quad \xrightarrow{O_1^T} \begin{pmatrix}
        1 & \cdots & 0 \\
        \vdots & \ddots & \vdots \\
        0  & \cdots & 1 
    \end{pmatrix}\,,
    \end{equation}
where in the last step, the $(1,1)$ entry of the matrix must be $1$ because the matrices $O_j$ belong to $\SO(d)$, and hence the matrix determinant must be equal to $1$.

We furthermore present the (normalized) right-and-left-invariant Haar measure on the standard representation of $\SO(d)$ under the previous decomposition in the following lemma.

\begin{lemma}\label{lem-ap:so-haar-measure}
    The normalized Haar measure $d\mu(O)$ on the standard representation of $\SO(d)$ with respect to the parametrization in Eqs.~\eqref{eq-ap:so-matrix-decomposition} and~\eqref{eq-ap:R-O-tilde-prod} is given by
    \begin{equation}
        d\mu(O) = \NC \prod_{1\leq j <k\leq d}\sin(\theta_{j,k})^{j-1} d\theta_{j,k}\,,
    \end{equation}
    where $\NC=\left((2\pi)^{d-1} \prod_{k=3}^{d}\frac{\sqrt{\pi}^{k-2}}{\Gamma(k/2)}\right)^{-1}$.
\end{lemma}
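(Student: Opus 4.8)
The plan is to derive the Haar measure density for the Hurwitz/Givens parametrization of $\SO(d)$ inductively on the dimension $d$, by exploiting the homogeneous-space structure $\SO(d)/\SO(d-1)\cong S^{d-1}$ induced by the decomposition in Eqs.~\eqref{eq-ap:so-matrix-decomposition} and~\eqref{eq-ap:R-O-tilde-prod}. First I would observe that the outermost factor $O_{d-1}=R_{d-1}(\theta_{d-1,d})\cdots R_1(\theta_{1,d})$ is exactly the sequence of Givens rotations that rotates the last canonical basis vector $\vec{e}_d$ to an arbitrary unit vector in $\mathbb{R}^d$; that is, the map $(\theta_{1,d},\dots,\theta_{d-1,d})\mapsto O_{d-1}\vec{e}_d$ is a spherical-coordinate parametrization of $S^{d-1}$, with $\theta_{1,d}\in[0,2\pi)$ playing the role of the azimuthal angle and $\theta_{2,d},\dots,\theta_{d-1,d}\in[0,\pi]$ the polar angles. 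The remaining product $O_1\cdots O_{d-2}$ lies in the stabilizer of $\vec{e}_d$, which is a copy of $\SO(d-1)$ acting on the first $d-1$ coordinates. Since the Haar measure on $\SO(d)$ disintegrates as the $\SO(d-1)$-invariant (Haar) measure on the fiber times the $\SO(d)$-invariant (uniform) measure on the base $S^{d-1}$, and since the parametrization respects this decomposition, the density factorizes accordingly.

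The key computational step is then to compute the Jacobian of the spherical-coordinate map onto $S^{d-1}$. A standard calculation gives that the uniform (surface) measure on $S^{d-1}$ in these coordinates is proportional to $\prod_{j=2}^{d-1}\sin(\theta_{j,d})^{j-1}\,d\theta_{1,d}\cdots d\theta_{d-1,d}$: each successive polar angle $\theta_{j,d}$ contributes a factor $\sin(\theta_{j,d})^{j-1}$ because introducing the $(j+1)$-th coordinate subdivides a lower-dimensional sphere according to the usual recursive formula for the round metric. Relabelling so that the index matches the paper's convention (the angle $\theta_{j,k}$ appears with power $j-1$, and the azimuthal angle $\theta_{1,k}$ with power $0$ as claimed), the density contributed by the $k$-th block of angles is $\prod_{j=1}^{k-1}\sin(\theta_{j,k})^{j-1}$. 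Combining this with the inductive hypothesis that the measure on $\SO(d-1)$ is $\prod_{1\leq j<k\leq d-1}\sin(\theta_{j,k})^{j-1}\,d\theta_{j,k}$ up to normalization, and noting that the base case $\SO(2)\cong S^1$ is trivially uniform in $\theta_{1,2}\in[0,2\pi)$, yields the product $\prod_{1\leq j<k\leq d}\sin(\theta_{j,k})^{j-1}\,d\theta_{j,k}$ over all angles.

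The normalization constant $\NC$ follows by integrating the unnormalized density: for each fixed $k$, $\int_0^{2\pi}d\theta_{1,k}\prod_{j=2}^{k-1}\int_0^\pi\sin(\theta_{j,k})^{j-1}\,d\theta_{j,k} = 2\pi\prod_{j=2}^{k-1}B\!\left(\tfrac{j}{2},\tfrac12\right)=2\pi\prod_{j=2}^{k-1}\frac{\sqrt{\pi}\,\Gamma(j/2)}{\Gamma((j+1)/2)}$, which telescopes to $\tfrac{(2\pi)\sqrt{\pi}^{\,k-2}}{\Gamma(k/2)}$ (using $\Gamma(1)=1$); multiplying over $k=2,\dots,d$ and taking the reciprocal reproduces $\NC=\left((2\pi)^{d-1}\prod_{k=3}^{d}\frac{\sqrt{\pi}^{\,k-2}}{\Gamma(k/2)}\right)^{-1}$, after pulling the $k=2$ factor into the $(2\pi)^{d-1}$. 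One must also check left-and-right invariance: right invariance is immediate from the fiber-bundle disintegration argument (the $\SO(d-1)$ subgroup acts by Haar translation on the fiber and the sphere measure is $\SO(d)$-invariant), and left invariance then follows because a compact group's right-invariant measure is automatically bi-invariant (unimodularity), or can be checked directly.

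I expect the main obstacle to be the bookkeeping in the Jacobian calculation --- specifically, making rigorous the claim that the ordered product of Givens rotations $R_{d-1}(\theta_{d-1,d})\cdots R_1(\theta_{1,d})$ applied to $\vec{e}_d$ really does trace out $S^{d-1}$ with the standard spherical Jacobian $\prod_j\sin(\theta_{j,d})^{j-1}$, and that the residual angles genuinely parametrize the stabilizer subgroup without overcounting. This requires carefully tracking which columns each $R_j$ acts on (as set up in Eq.~\eqref{eq-ap:R-action}) and verifying that the coordinate ranges $[0,\pi]$ versus $[0,2\pi)$ are exactly those needed for the map to be a bijection (a.e.), so that no extra factors of $2$ or missing domains corrupt $\NC$. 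An alternative to the geometric argument, which sidesteps some of this, is to compute the density directly as the pullback of the Haar form: write $O^{-1}dO$ in terms of the basis $\{L_{jk}\}$ and take the determinant of the resulting linear map from $\{d\theta_{j,k}\}$; this is the approach effectively taken in Ref.~\cite{diaconis2017hurwitz}, and either route will do.
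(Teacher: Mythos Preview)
Your proposal is correct, and in fact goes beyond what the paper does: the paper does not prove this lemma at all but simply attributes it to Hurwitz and refers the reader to Ref.~\cite{diaconis2017hurwitz} for a proof. Your inductive argument via the homogeneous-space structure $\SO(d)/\SO(d-1)\cong S^{d-1}$, together with the spherical-coordinate Jacobian and the telescoping Gamma-function normalization, is the standard derivation and is essentially what one finds in that reference (you yourself note the alternative Maurer--Cartan pullback route taken there). So there is no discrepancy of approach to flag; you are supplying the argument the paper chose to outsource.
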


This result was first proven by Hurwitz in the context of invariant theory, and its  proof can be found in Ref.~\cite{diaconis2017hurwitz}. As we will see below, we will use Lemma~\ref{lem-ap:so-haar-measure}  to find quantum circuits implementing Haar random active FLO transformations.

\subsection{Isomorphism between the Lie algebras of $\mathbb{SO}(2n)$ and $\mathbb{SPIN}(2n)$}
\label{ap:lemma3}

We now provide an explicit isomorphism between the dynamical Lie algebra of FLO circuits and $\mathfrak{so}(2n)$.
\begin{lemma} \label{lem-ap:isomorphism}
    The isomorphism $\varphi$ connecting $\g$ in Eq.~\eqref{eq:dla} and $\mathfrak{so}(2n)$ is given by the linear map
   \begin{equation} \label{eq-ap:isomorphism}
       \varphi(L_{jk})= \frac{c_j c_k}{2}\,.
   \end{equation} 
\end{lemma}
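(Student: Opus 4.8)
The plan is to verify that $\varphi$, defined by linearly extending $L_{jk}\mapsto c_jc_k/2$, is (a) a well-defined linear map onto $\g$, (b) a bijection, and (c) a homomorphism of Lie algebras. Part (a) is immediate: since $\{L_{jk}\}_{1\le j<k\le 2n}$ is by construction a basis of $\so(2n)$, prescribing the images on this basis and extending $\mathbb{R}$-linearly defines $\varphi$ unambiguously, and its image is ${\rm span}_{\mathbb R}\{c_jc_k/2\}_{1\le j<k\le 2n}=\g$. Note in passing that each $c_jc_k$ is anti-Hermitian, $(c_jc_k)^\dagger=c_kc_j=-c_jc_k$, so that $\g$ is genuinely a real Lie algebra of anti-Hermitian operators, consistent with $\so(2n)$ consisting of real antisymmetric matrices.

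For (b) I would show the $n(2n-1)$ operators $\{c_jc_k\}_{1\le j<k\le 2n}$ are $\mathbb{R}$-linearly independent. This follows from the explicit Jordan--Wigner forms: each $c_jc_k$ with $j<k$ equals $i$ times a nontrivial Pauli string, and distinct pairs $(j,k)$ give distinct Pauli strings (equivalently, the ordered products of subsets of the $c_l$'s form the standard Clifford/Majorana operator basis of the $2^n\times2^n$ matrix algebra, whose degree-two part is exactly $\{c_jc_k\}_{j<k}$). Since distinct Pauli strings are Hilbert--Schmidt orthogonal, the set is independent; and as $\binom{2n}{2}=n(2n-1)=\dim\so(2n)$, the map $\varphi$ sends a basis to a basis and is therefore a vector-space isomorphism.

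For (c) I would compute both brackets and compare. From $(L_{jk})_{m,l}=\delta_{jm}\delta_{kl}-\delta_{jl}\delta_{km}$ a direct computation gives the structure constants $[L_{ab},L_{cd}]=\delta_{bc}L_{ad}-\delta_{bd}L_{ac}-\delta_{ac}L_{bd}+\delta_{ad}L_{bc}$. On the Majorana side, using $\{c_i,c_j\}=2\delta_{ij}$ (hence $c_i^2=\id$ and $c_ic_j=-c_jc_i$ for $i\neq j$): if $\{a,b\}\cap\{c,d\}=\emptyset$ the two bilinears commute and $[c_ac_b,c_cc_d]=0$; if they share exactly one index, say $b=c$ with $a,b,d$ distinct, a short reordering gives $[c_ac_b,c_bc_d]=2c_ac_d$, and every other single-overlap configuration reduces to this one by $c_ic_j=-c_jc_i$; if $\{a,b\}=\{c,d\}$ both sides vanish. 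The factor $\tfrac14$ coming from the two $\tfrac12$'s then reproduces the orthogonal structure constants exactly; for instance $[\varphi(L_{ij}),\varphi(L_{jl})]=\tfrac14[c_ic_j,c_jc_l]=\tfrac14\cdot 2\,c_ic_l=\tfrac12 c_ic_l=\varphi(L_{il})=\varphi([L_{ij},L_{jl}])$ for distinct $i,j,l$. By bilinearity this identity extends to all of $\so(2n)$, so $\varphi$ is a Lie-algebra isomorphism.

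I expect the only real obstacle to be the index bookkeeping in step (c): carefully tracking the signs produced when commuting Majorana monomials past one another, and checking that each pattern of shared indices — including the degenerate cases where the two bilinears coincide up to sign — is consistent with the $\so(2n)$ relations. Everything else reduces to the linear-independence argument and a dimension count.
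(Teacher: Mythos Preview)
Your proposal is correct and follows essentially the same route as the paper: verify the homomorphism property by computing the $\so(2n)$ structure constants and matching them against the Majorana-bilinear commutators obtained from $\{c_i,c_j\}=2\delta_{ij}$, and establish bijectivity via Hilbert--Schmidt (trace) orthogonality of the distinct bilinears together with a dimension count. The only cosmetic difference is that the paper writes the commutator identity in one line rather than splitting into the no-overlap/one-overlap/two-overlap cases, and phrases orthogonality as $\Tr[c_jc_kc_{j'}c_{k'}]=-2^n\delta_{jj'}\delta_{kk'}$ rather than invoking Pauli-string orthogonality.
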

\begin{proof}
    To show that $\varphi$ is indeed an isomorphism, we first need to show that it is an homomorphism, i.e., we need to prove that
    \begin{equation} \label{eq-ap:homomorphism}
        \varphi\left([L_{jk}, L_{j'k'}]\right)= [\varphi(L_{jk}), \varphi(L_{j'k'})]\,.
    \end{equation}
     We have
    \begin{equation} \begin{split}
        (L_{jk} L_{j'k'})_{mn} &= \sum_l (L_{jk})_{ml} (L_{j'k'})_{ln} = \sum_l (\delta_{jm}\delta_{kl}-\delta_{jl}\delta_{km}) (\delta_{j'l}\delta_{k'n}-\delta_{j'n}\delta_{k'l}) \\&= \delta_{kj'} \delta_{jm} \delta_{k'n} + \delta_{jk'}\delta_{km} \delta_{j'n}   -  \delta_{kk'}\delta_{jm} \delta_{j'n} - \delta_{jj'}\delta_{km} \delta_{k'n}\,,\end{split}
    \end{equation}
    from which it follows that
    \begin{equation}
        [L_{jk}, L_{j'k'}] = \delta_{kj'} L_{jk'} + \delta_{jk'} L_{kj'} -\delta_{kk'} L_{jj'} - \delta_{jj'} L_{kk'}\,.
    \end{equation}
    Hence,
    \begin{equation}\label{eq-ap:comm_Ls}\begin{split}
        \varphi\left([L_{jk}, L_{j'k'}]\right) = \delta_{k j'}\, \varphi(L_{jk'})+\delta_{jk'} \varphi(L_{kj'}) -\delta_{kk'} \varphi(L_{jj'}) - \delta_{jj'} \varphi(L_{kk'})\,.\end{split}
    \end{equation}
    On the other hand, using the anti-commutation relations of the Majorana operators, we find that
    \begin{equation} \begin{split}
        \left[\frac{c_j c_k}{2}, \frac{c_{j'} c_{k'}}{2}\right] = \delta_{k j'} \frac{ c_{j} c_{k'}}{2} + \delta_{jk'} \frac{ c_{k} c_{j'}}{2} -\delta_{k k'} \frac{ c_{j} c_{j'}}{2} - \delta_{j j'} \frac{ c_{k} c_{k'}}{2}  \,, \end{split}
    \end{equation}
     whenever $c_{j}c_{k}\neq\pm \,c_{j'}c_{k'}$ (and zero otherwise), and thus Eq.~\eqref{eq-ap:homomorphism} is satisfied.
     It remains to prove that $\varphi$ is bijective. For this, it suffices to notice that
     \begin{equation}
         \Tr\left[c_j c_k c_{j'} c_{k'}\right]= -2^n \,\delta_{jj'}\delta_{kk'} \,,
     \end{equation}
     for $j<k$ and $j'<k'.$
    That is, two ordered products of two Majorana operators are either orthogonal or equal. 
\end{proof}

\begin{figure*}[t]
    \centering
    \includegraphics[width=1.\linewidth]{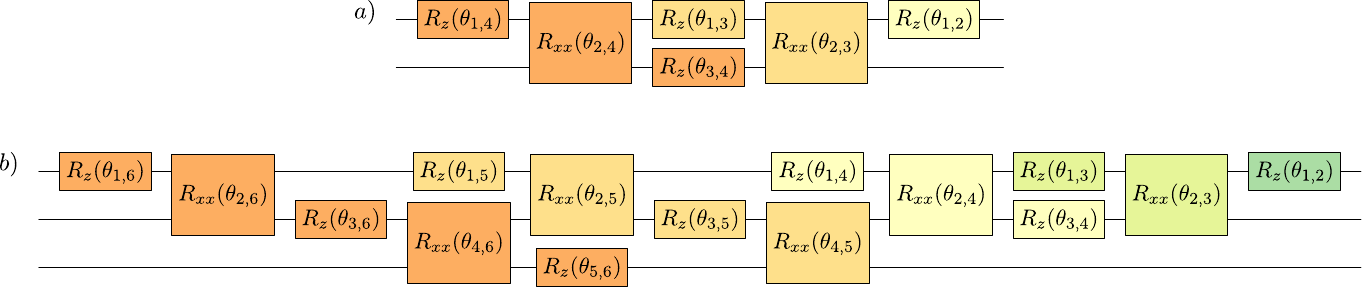}
    \caption{{\bf Suboptimal  Haar random active FLO circuits}. These quantum circuits produce Haar random active FLO on $n=2$ (a) and $n=3$ (b) qubits, respectively, when the parameters are sampled according to Eq.~\eqref{eq-ap:tfim_haar}. Gates are colored as per Eq.~\eqref{eq-ap:spin-matrix-decomposition}, explicitly showing the ladder-like structure of the circuits. }
    \refstepcounter{supfig}
    \label{fig-ap:so-circuits}
\end{figure*}

\subsection{From $\SO(2n)$ matrices to FLO circuits}

We here explain how to transform the decomposition of $\SO(2n)$ matrices in Eqs.~\eqref{eq-ap:so-matrix-decomposition} and~\eqref{eq-ap:R-O-tilde-prod} into FLO circuits, by simply using the isomorphism in Eq.~\eqref{eq-ap:isomorphism}. This results in Haar random active FLO circuits consisting of gates arranged in ``ladders'', as shown in Supp. Fig.~\ref{fig-ap:so-circuits}, and thus with a suboptimal depth. In particular, we prove the following useful lemma.
\begin{lemma} \label{lem-ap:fermion-haar}
Any fermionic linear optics circuit can be decomposed (up to a global minus sign) as 
\begin{equation} \label{eq-ap:spin-matrix-decomposition}
        U(\thv)= C_1 C_2\cdots C_{2n-1}\,,
\end{equation}
for 
\begin{equation}\label{eq-ap:R-prod}
    C_j= U_{R_j}(\theta_{j,j+1})U_{R_{j-1}}(\theta_{j-1,j+1})\cdots U_{R_1}(\theta_{1,j+1})\,,
\end{equation}
and
    \begin{equation} \label{eq-ap:U_Rj}
       U_{R_j}(\theta)=e^{\theta c_j c_{j+1}/2}\,,   
    \end{equation}
    with parameter values  $\theta_{1,j+1}\in [0,2\pi)$ and $\theta_{j',j+1}\in [0,\pi]$, for $j'>1$.
    
    Moreover, the normalized Haar measure for the adjoint representation with respect to the parametrization in Eqs.~\eqref{eq-ap:spin-matrix-decomposition} and~\eqref{eq-ap:R-prod} is given by
    \begin{equation}\label{eq-ap:tfim_haar}
        d\mu(U) = \NC\prod_{1\leq j <k\leq 2n}\sin(\theta_{j,k})^{j-1} d\theta_{j,k}\,,
    \end{equation}
    where $\NC=\left((2\pi)^{2n-1} \prod_{k=3}^{2n}\frac{\sqrt{\pi}^{k-2}}{\Gamma(k/2)}\right)^{-1}$.
\end{lemma}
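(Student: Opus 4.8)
The plan is to combine the two ingredients already in hand: the Givens-rotation decomposition of $\SO(d)$ (Eqs.~\eqref{eq-ap:so-matrix-decomposition}--\eqref{eq-ap:R-O-tilde-prod}) together with its Haar measure (Lemma~\ref{lem-ap:so-haar-measure}), and the Lie-algebra isomorphism $\varphi$ of Lemma~\ref{lem-ap:isomorphism}. Setting $d=2n$, the decomposition $O=O_1O_2\cdots O_{2n-1}$ with $O_j=R_j(\theta_{j,j+1})\cdots R_1(\theta_{1,j+1})$ expresses any $O\in\SO(2n)$ as an ordered product of Givens rotations $R_j(\theta)=e^{\theta L_{j,j+1}}$. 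Applying $\varphi$, which sends $L_{j,j+1}\mapsto c_jc_{j+1}/2$, each $R_j(\theta)$ is replaced by $U_{R_j}(\theta)=e^{\theta c_jc_{j+1}/2}$, yielding a unitary $U(\thv)=C_1C_2\cdots C_{2n-1}$ of exactly the claimed form in Eqs.~\eqref{eq-ap:spin-matrix-decomposition}--\eqref{eq-ap:R-prod}.

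First I would make precise the sense in which this ``transfers'' the decomposition. The homomorphism $\varphi$ exponentiates to a Lie-group homomorphism from $\mathbb{SPIN}(2n)=e^{\g}$ onto $\SO(2n)$ whose kernel is $\{\pm\id\}$; equivalently, the adjoint action $\mathrm{Ad}_{U(\thv)}$ on the span of Majorana operators realizes the matrix $O$ via Eq.~\eqref{eq:action-majorana}. So I would argue: given an arbitrary FLO circuit $U$, its adjoint action is some $O\in\SO(2n)$; decompose that $O$ as in Eq.~\eqref{eq-ap:so-matrix-decomposition} with angles in the stated ranges (this is exactly what Section~\ref{ap:decomposition} establishes, including the trick of widening $\theta_{1,j+1}$ to $[0,2\pi)$ to fix the sign of the diagonal entry so the product lands in $\SO$ rather than $\mathrm{O}$); then the corresponding product $C_1\cdots C_{2n-1}$ of $U_{R_j}$'s has the same adjoint action as $U$, hence equals $U$ up to the $\{\pm\id\}$ ambiguity — which is precisely the ``up to a global minus sign'' in the statement. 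Conversely any such product is manifestly a FLO circuit since each $U_{R_j}(\theta)=e^{i\theta Z_j}$ (odd $j$) or $e^{i\theta X_jX_{j+1}}$ (even $j$) lies in the gate set $\GC$ of Eq.~\eqref{eq:fermion-circuits}. I should also note the identification $c_jc_{j+1}=iZ_j$ for odd $j$ and $=iX_jX_{j+1}$ for even $j$, which matches the generators to matchgates and confirms the $\alpha$/$\beta$ reindexing used later in Theorem~\ref{th:fermion-haar}.

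Next I would address the Haar measure claim, Eq.~\eqref{eq-ap:tfim_haar}. The key point is that the Haar measure of $\mathbb{SPIN}(2n)$ pushes forward, under the two-to-one covering map, to the Haar measure of $\SO(2n)$, and since the covering is a local diffeomorphism the density in the $\theta$-coordinates is literally the same — the Jacobian factors $\prod_{j<k}\sin(\theta_{j,k})^{j-1}$ and the normalization constant $\NC$ carry over verbatim from Lemma~\ref{lem-ap:so-haar-measure} with $d=2n$. I would make this rigorous by observing that the coordinate chart $(\theta_{j,k})\mapsto C_1\cdots C_{2n-1}$ on the spin group is, up to the sign ambiguity, the pullback of the Hurwitz chart on $\SO(2n)$, so invariance of $d\mu$ under left and right translation on $\mathbb{SPIN}(2n)$ follows from the corresponding invariance downstairs (translations commute with the covering map). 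Normalization is inherited because the covering is measure-preserving up to the global constant already absorbed into $\NC$.

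The main obstacle I anticipate is bookkeeping the sign/branch subtleties cleanly rather than any deep difficulty: one must be careful that (i) the angle ranges $[0,2\pi)$ versus $[0,\pi]$ are exactly those needed so the Hurwitz decomposition surjects onto $\SO$ (not just $\mathrm{O}$), as worked out in Eq.~\eqref{eq-ap:R-action} and the paragraph after it; (ii) lifting from $\SO(2n)$ to $\mathbb{SPIN}(2n)$ genuinely only introduces the harmless $\pm\id$ (so the statement must carry the ``up to a global minus sign'' qualifier, and uniqueness of angles holds only modulo that); and (iii) the fact that $\varphi$ is an isomorphism of Lie \emph{algebras} (Lemma~\ref{lem-ap:isomorphism}) is what licenses replacing each $R_j$ by $U_{R_j}$ factor-by-factor — I would invoke that $e^{\varphi(X)}$ and $e^{X}$ correspond under the group homomorphism, so products of exponentials map to products of exponentials with the same parameters. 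Once these points are nailed down, Lemma~\ref{lem-ap:fermion-haar} follows immediately, and it will then serve as the starting ``triangular-circuit'' form that the turnover rules (Eqs.~\eqref{eq:turnover_before}--\eqref{eq:turnover_after}) compress into the brick-wall circuit of Theorem~\ref{th:fermion-haar}.
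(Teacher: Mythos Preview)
Your proposal is correct and follows essentially the same approach as the paper: both transfer Hurwitz's Givens decomposition of $\SO(2n)$ to FLO via the Lie-algebra isomorphism $\varphi$ (Lemma~\ref{lem-ap:isomorphism}), identify the adjoint action on Majoranas with the standard $\SO(2n)$ action to handle the $\pm\id$ ambiguity, and then inherit the Haar measure by invoking invariance downstairs on $\SO(2n)$ through the group homomorphism. The paper makes the adjoint-representation step slightly more explicit by computing the matrix elements $(\phi_1^{\rm ad}(c_jc_k/2))_{lm}=(L_{jk})_{lm}$ directly, whereas you phrase the same content in terms of the double cover and pushforward of measures, but the substance is identical.
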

\begin{proof}
We begin by recalling that under the adjoint action of the group of matchgate unitaries (i.e., the Lie group $e^{\mathfrak{g}}\cong\mathbb{SPIN}(2n)$, where $\mathfrak{g}$ is the matchgate dynamical Lie algebra defined in Eq.~\eqref{eq:dla}), the space of linear operators $\LC$ on the $n$-qubit Hilbert space decomposes onto irreducible representations (irreps) as~\cite{diaz2023showcasing}
\begin{equation} \label{eq-ap:irreps}
    \LC=\bigoplus_{\nu=0}^{2n} \LC_\nu\,,
\end{equation}
where $\LC_\nu={\rm span}_{\mathbb{C}}\{c_{j_1}c_{j_2}\cdots c_{j_\nu}\}_{1\leq j_1<j_2<\cdots< j_\nu\leq 2n}$ (with $\LC_0={\rm span}_{\mathbb{C}}\id$). That is, each irrep $\LC_\nu$ spans from complex valued linear combinations of products of $\nu$ strictly different Majorana operators. Here, the adjoint action of an element $U\in e^\g$ on an irrep $\LC_\nu$ is given by $\Phi_\nu^{\rm ad}\left(U\right)(\cdot)=U(\cdot)U^\dagger$. The adjoint representation of the matchgate group induces the adjoint representation of its Lie algebra $\g$.
\begin{definition}[Adjoint representation of $\g$]
    Given an element $G\in\g$, its adjoint representation on an irrep $\LC_\nu$ is obtained via the map $\phi_\nu^{\rm ad}:\g\rightarrow\mathbb{R}^{\dim(\LC_\nu)} \times \mathbb{R}^{\dim(\LC_\nu)}$, defined by
    \begin{equation}
        (\phi_\nu^{\rm ad}(G))_{lm} = \Tr\left[B_m^\nu \left[G,B_l^\nu\right]\right]\,,
    \end{equation}
    where the operators $\{B_l\}_l$ are a Hermitian basis for $\LC_\nu$.
\end{definition}

Moreover, it can be checked that the following definition holds~\cite{goh2023lie}.

\begin{definition}[Adjoint representation of $e^\g$] Given an element $U\in e^\g$, its adjoint representation on an irrep $\LC_\nu$ is a linear map $\Phi_\nu^{\rm ad}:e^\g\rightarrow \mathbb{GL}\left(\mathbb{R}^{\dim(\LC_\nu)}\right)\subset \mathbb{R}^{\dim(\LC_\nu)} \times \mathbb{R}^{\dim(\LC_\nu)}$, defined by
\begin{equation}
    \Phi_\nu^{\rm ad}\left(U=e^G\right) = e^{\phi_\nu^{\rm ad}(G)}\,,
\end{equation}    
where $\mathbb{GL}\left(\mathbb{R}^{\dim(\LC_\nu)}\right)$ is the general linear group of invertible matrices acting on $\mathbb{R}^{\dim(\LC_\nu)}$.
\end{definition}

Let us now use these definitions to compute the adjoint representation of the elements $U\in e^\g$ on $\LC_1$, the vector space spanned by the Majorana operators. We need to compute $\phi_\nu^{\rm ad}(\frac{c_j c_k}{2})$ for $\nu=1$. We choose as a Hermitian orthonormal basis for $\LC_1$ the set of orthonormal operators $\{c_j/\sqrt{2^n}\}_{j=1}^{2n}$. Hence, we find
\begin{align}
    \left(\phi_1^{\rm ad}\left(\frac{c_j c_k}{2}\right)\right)_{lm} = \Tr\left[\frac{c_m}{\sqrt{2^n}} \left[\frac{c_j c_k}{2},\frac{c_l}{\sqrt{2^n}} \right]\right] \nonumber =  \delta_{kl}\delta_{mj} - \delta_{jl}\delta_{km} = (L_{jk})_{lm}\,,
\end{align}
which implies
\begin{equation}
    e^{\theta c_j c_{j+1}/2} c_l\, e^{-\theta c_j c_{j+1}/2} = \sum_m \left(e^{\theta L_{jj+1}}\right)_{lm} c_m\,.
\end{equation}
We thus find that any fermionic linear optic circuit transformation (i.e., its adjoint action $\Phi_1^{\rm ad}$, up to a global minus sign) can be decomposed as in Eqs.~\eqref{eq-ap:spin-matrix-decomposition} and~\eqref{eq-ap:R-prod}, since any special orthogonal matrix $O$ can be written as in Eqs.~\eqref{eq-ap:so-matrix-decomposition} and~\eqref{eq-ap:R-O-tilde-prod}, and $\Phi_1^{\rm ad}$ and the standard representation of $\SO(2n)$ are isomorphic. Namely, we recover the well known result of Eq.~\eqref{eq:action-majorana}:
\begin{equation} \label{eq-ap:action-majorana}
    U c_{l\,} U^{\dagger} = \sum_
{m=1}^{2n} (O)_{lm}c_{m}\,.
\end{equation}

To prove that Eq.~\eqref{eq-ap:tfim_haar} gives the Haar measure on $\Phi_1^{\rm ad}$, we simply show that it is left-and-right invariant, as follows. We start by assigning 
    \begin{equation}
        d\mu(U) = d\mu(\Phi(U)) =d\mu(V)\,.
    \end{equation}
    Notice that this implies that since $d\mu(V)$ is not trivial, neither is $d\mu(U)$ (a trivial measure is always left-and-right invariant).
    Then, we have
    \begin{equation}\label{eq-ap:haar-invariance}\begin{split}
         d\mu(U'U) = d\mu\left(\Phi(U'U)\right) = d\mu\left(\Phi(U')\,\Phi(U)\right)= d\mu(V' V)= d\mu(V) = d\mu\left(\Phi(U)\right) = d\mu(U) \,,
    \end{split}
    \end{equation}
    where we used the invariance of the Haar measure on $\SO(2n)$ from Lemma~\ref{lem-ap:so-haar-measure}. Indeed, Eq.~\eqref{eq-ap:haar-invariance} indicates that since the representations are homomorphic, the invariance of the Haar measure on $\SO(2n)$ implies the invariance of the Haar measure on $\Phi_1^{\rm ad}$. The proof follows similarly for the right invariance.
    
    Finally, the measure is normalized since for $j>1$ we have
    \begin{equation}
        \int_0^\pi  \sin(\theta_{j,k})^{j-1} d\theta_{j,k} = \frac{\sqrt{\pi} \,\Gamma\left(\frac{j}{2}\right)}{\Gamma\left(\frac{j+1}{2}\right)}\,,
    \end{equation}
    and for $j=1$,
    \begin{equation}
        \int_0^{2\pi} d\theta_{1,k} = 2\pi\,.
    \end{equation}
Hence,
\begin{align}
    &\prod_{2\leq j <k\leq 2n}\int_0^{\pi} \sin(\theta_{j,k})^{j-1} d\theta_{j,k} =  \prod_{2\leq j <k\leq 2n} \frac{\sqrt{\pi} \,\Gamma\left(\frac{j}{2}\right)}{\Gamma\left(\frac{j+1}{2}\right)}=  \prod_{k=3}^{2n} \prod_{j=2}^{k-1} \frac{\sqrt{\pi} \,\Gamma\left(\frac{j}{2}\right)}{\Gamma\left(\frac{j+1}{2}\right)} = \prod_{k=3}^{2n} \frac{\sqrt{\pi}^{k-2}}{\Gamma(k/2)}\,,
\end{align}
and we find the normalization 
\begin{equation}
    \NC=\left((2\pi)^{2n-1} \prod_{k=3}^{2n}\frac{\sqrt{\pi}^{k-2}}{\Gamma(k/2)}\right)^{-1}\,.
\end{equation}
\end{proof}

\subsection{Optimal Haar random FLO circuits}
\label{ap:optimal_active}

\begin{figure}[t]
    \centering
    \includegraphics[width=1\linewidth]{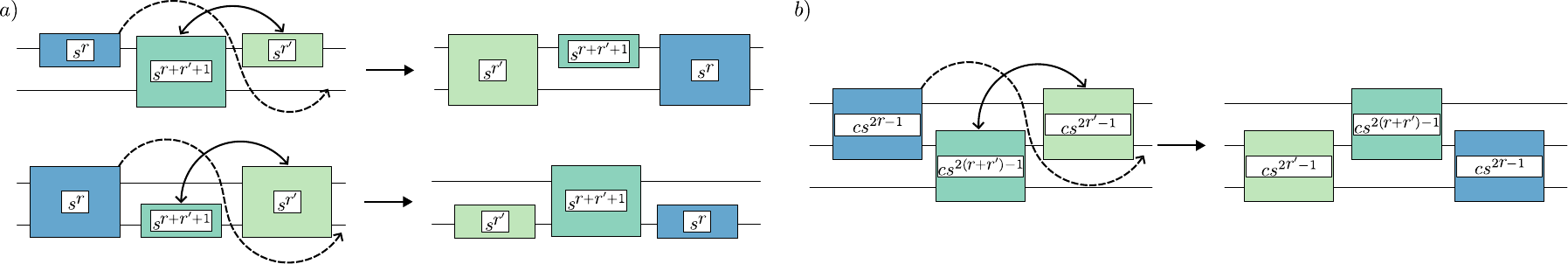}
    
    \caption{{\bf Transformation of the active and passive FLO Haar measure under ``turnovers''}. We graphically depict the transformation rules that we use to obtain optimal Haar random  active (a) and passive (b) FLO circuits. These rules specify how the local probability density functions providing the Haar measure change under certain ``turnovers'' of the gates, according to Lemmas~\ref{lem-ap:turnover-active} and~\ref{lem-ap:turnover-passive}, respectively. In the pictures, ``s'' stands for sine, while ``c'' is for cosine.}
    \refstepcounter{supfig}
    \label{fig-ap:turnover}
\end{figure}

We now make use of the ``turnover'' property of the gates $U_{R_j}$ and $U_{R_{j+1}}$ defined in Eq.~\eqref{eq-ap:U_Rj}, to bring the circuits in Lemma~\ref{lem-ap:fermion-haar} to optimal depth. This property was explained in Refs.~\cite{kokcu2022algebraic,camps2022algebraic}, and corresponds to the following transformation
\begin{equation}\label{eq-ap:turnover}
    U_{R_j}(\alpha)U_{R_{j+1}}(\beta)U_{R_j}(\gamma) = U_{R_{j+1}}(a)U_{R_j}(b)U_{R_{j+1}}(c) \,,
\end{equation}
under the appropriate relations between the set of angles $\{\alpha,\beta,\gamma\}$ and $\{a,b,c\}$, which can be shown to always exist~\cite{kokcu2022algebraic}. We here extend the turnover property to show how it affects the Haar measure in Eq.~\eqref{eq-ap:tfim_haar}. The result is stated in the following lemma and graphically depicted in Supp. Fig.~\ref{fig-ap:turnover}.

\begin{lemma} \label{lem-ap:turnover-active}
    Under the turnover property in Eq.~\eqref{eq-ap:turnover}, the unnormalized distribution
    \begin{equation}\label{eq-ap:turnover_f}
       \chi_{r,r'}(\alpha,\beta,\gamma) = \sin^r (\alpha) \sin^{r+r'+1}(\beta) \sin^{r'}(\gamma) \,,
    \end{equation}
    transforms into
    \begin{equation}\label{eq-ap:turnover_g}
        f_{r,r'}(a,b,c) = \sin^{r'} (a)\sin^{r+r'+1} (b) \sin^{r}(c) \,.
    \end{equation}
    Namely,  $\chi_{r,r'}(\alpha,\beta,\gamma)\,d\alpha\, d\beta\, d\gamma =f_{r,r'}(a, b, c)\, da\, db\, dc$. 
\end{lemma}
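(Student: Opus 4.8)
The plan is to observe that the turnover in Eq.~\eqref{eq-ap:turnover} lives entirely inside a $3\times 3$ rotation block — i.e.\ inside a copy of $\SO(3)\subset\SO(2n)$ in the adjoint (Majorana) picture — and then to split the claimed identity of $3$-forms into two elementary scalar relations between the turnover angles together with the single special case $r=r'=0$, which I will recognize as the Haar measure on $\SO(3)$ written in two different Euler conventions. Concretely, the first step is the bookkeeping observation that one can rewrite
\[
   \chi_{r,r'}(\alpha,\beta,\gamma)=\sin(\beta)\,\big(\sin\alpha\sin\beta\big)^{r}\big(\sin\gamma\sin\beta\big)^{r'},\qquad
   f_{r,r'}(a,b,c)=\sin(b)\,\big(\sin c\sin b\big)^{r}\big(\sin a\sin b\big)^{r'}.
\]
Hence, if I can prove the two ``turnover invariances'' $\sin\alpha\sin\beta=\sin c\sin b$ and $\sin\gamma\sin\beta=\sin a\sin b$, then $\chi_{r,r'}=(\sin\beta/\sin b)\,f_{r,r'}$ simultaneously for all $r,r'\ge 0$, and the statement $\chi_{r,r'}\,d\alpha\,d\beta\,d\gamma=f_{r,r'}\,da\,db\,dc$ collapses to the single identity $\sin(\beta)\,d\alpha\,d\beta\,d\gamma=\sin(b)\,da\,db\,dc$, with no case split or induction on $r,r'$.

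To establish the two invariances I will use Eq.~\eqref{eq-ap:action-majorana}: the adjoint action of $U_{R_j}(\theta)$ on the Majoranas is the Givens rotation $R_j(\theta)$ of Eq.~\eqref{eq-ap:givens}, which is supported on coordinates $j,j+1$ only. Hence both sides of the turnover act, on the three coordinates $j,j+1,j+2$, as the $3\times 3$ products $R_1(\gamma)R_2(\beta)R_1(\alpha)$ and $R_2(c)R_1(b)R_2(a)$ (with the relabeling $j,j+1,j+2\mapsto 1,2,3$, and recalling that the adjoint representation reverses the order of a product; the overall-sign ambiguity of the theorem is invisible here, so these two matrices are literally equal). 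Multiplying them out and reading off the $(1,3)$ and $(3,1)$ entries immediately yields $\sin\alpha\sin\beta=\sin b\sin c$ and $\sin\gamma\sin\beta=\sin a\sin b$ — exactly the invariances needed. This computation uses only the Givens angles, so it is uniform in $j$ (and in whether the gates are $R_z$ or $R_{xx}$ rotations), and it does not require the double-cover structure of the matchgate group.

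For the base case I will invoke uniqueness of Haar measure: $\sin(\beta)\,d\alpha\,d\beta\,d\gamma$ is, up to the normalization constant $\NC$, precisely the $d=3$ instance of Lemma~\ref{lem-ap:so-haar-measure} for the factorization $R_1R_2R_1$, since the weight $\prod_{1\le j<k\le 3}\sin(\theta_{j,k})^{\,j-1}$ reduces to $\sin(\theta_{2,3})$, the sine of the middle angle. Swapping the two rotation planes is an $O(3)$-conjugation, under which the Haar measure of the $\SO(3)$ block is invariant (again by uniqueness), and in the swapped coordinates it reads $\sin(b)\,da\,db\,dc$. Since left-invariant measures on the compact group $\SO(3)$ are unique up to scale and both expressions carry the same normalization, they coincide; combined with the reduction above, this proves the lemma.

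The computations involved are all short, so the only points that will require care are: (a) justifying that $\sin(b)\,da\,db\,dc$ is the Haar measure of the \emph{same} block and not merely proportional to it, which the uniqueness together with the $O(3)$-symmetry argument handles; and (b) the fact that the turnover map $(\alpha,\beta,\gamma)\leftrightarrow(a,b,c)$ is a diffeomorphism only on a dense open subset of parameter space (away from $\beta,b\in\{0,\pi\}$), so the identity of $3$-forms — hence of the associated measures — is asserted up to a set of measure zero, which is all that the proof, and its later use in the circuit-compression argument, actually require. I expect (a) to be the only genuinely non-mechanical ingredient.
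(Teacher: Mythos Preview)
Your proposal is correct and follows essentially the same route as the paper's proof: both extract the two scalar relations $\sin\alpha\sin\beta=\sin b\sin c$ and $\sin\beta\sin\gamma=\sin a\sin b$ from the corner entries of the $3\times 3$ rotation block, and both reduce the general $(r,r')$ case to the Jacobian identity $\sin(\beta)\,d\alpha\,d\beta\,d\gamma=\sin(b)\,da\,db\,dc$ via uniqueness of Haar measure. The only cosmetic difference is that the paper phrases the base case as the Haar measure on $\SU(2)$ (noting that $\{iZ_q,iX_qX_{q+1}\}$ close into $\mathfrak{su}(2)$), whereas you work directly with the $\SO(3)$ adjoint block and justify the equality of the two Euler forms by an explicit $O(3)$-conjugation; these are equivalent, and your version arguably makes the symmetry argument slightly more transparent.
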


\begin{proof}
    We can derive relations between the angles $(\alpha,\beta,\gamma)$ and $(a,b,c)$ by inspecting the equalities stemming from Eq.~\eqref{eq-ap:turnover} when mapped to the standard representation of $\mathbb{SO}(3)$ (i.e., when replacing $U_{R_j}$ by $R_j$, see Eq.~\eqref{eq-ap:givens}). Particularly, the left-hand side reads
    \begin{equation} 
        \left(
        \begin{array}{ccc}
         \cos (\alpha ) \cos (\gamma )-\sin (\alpha ) \cos (\beta ) \sin (\gamma ) \quad & \sin (\alpha ) \cos (\beta ) \cos (\gamma )+\cos (\alpha ) \sin (\gamma )\quad & \sin (\alpha ) \sin (\beta ) \\
         -\sin (\alpha ) \cos (\gamma )-\cos (\alpha ) \cos (\beta ) \sin (\gamma ) \quad & \cos (\alpha ) \cos (\beta ) \cos (\gamma )-\sin (\alpha ) \sin (\gamma ) \quad & \cos (\alpha ) \sin (\beta ) \\
         \sin (\beta ) \sin (\gamma )\quad & -\sin (\beta ) \cos (\gamma ) \quad & \cos (\beta ) \\
        \end{array}
        \right)\,,
    \end{equation}
    while the right-hand side takes the form
    \begin{equation}
        \left(
        \begin{array}{ccc}
         \cos (b) \quad & \sin (b) \cos (c) \quad & \sin (b) \sin (c) \\
         -\cos (a) \sin (b) \quad & \cos (a) \cos (b) \cos (c)-\sin (a) \sin (c) \quad& \cos (a) \cos (b) \sin (c)+\sin (a) \cos (c) \\
         \sin (a) \sin (b) \quad & -\sin (a) \cos (b) \cos (c)-\cos (a) \sin (c) \quad & \cos (a) \cos (c)-\sin (a) \cos (b) \sin (c) \\
        \end{array}
        \right)\,.
    \end{equation}
    We will be interested in the relations obtained by equating the corner elements in the main anti-diagonal of these two matrices. Namely,
    \begin{align}\label{eq-ap:angle_relations_su2}
    \nonumber
        \sin(\alpha)\sin(\beta)&=\sin(b)\sin(c)\,, \\
        \sin(\beta)\sin(\gamma)&=\sin(a)\sin(b) \,.
    \end{align}
    Moreover, we make use of the fact that both the decompositions $U_{R_j}(\alpha)U_{R_{j+1}}(\beta)U_{R_j}(\gamma)$ and $U_{R_{j+1}}(a)U_{R_j}(b)U_{R_{j+1}}(c)$ in Eq.~\eqref{eq-ap:turnover} are parameterizations of the action of $\mathbb{SU}(2)$ over $\mathbb{C}^2$ \cite{kokcu2022algebraic}, corresponding to two different sets of Euler angles (one can readily verify that the sets $\{iZ_q,iX_qX_{q+1}\}$ and $\{iZ_{q+1},iX_qX_{q+1}\}$ both generate the $\mathfrak{su}(2)$ algebra under the Lie closure). Since we can write the unique Haar measure of $\mathbb{SU}(2)$ as $d\mu=\sin(\beta)\,d\alpha\, d\beta\, d\gamma$ or $d\mu=\sin(b)\,da\, db\, dc$, it follows that the Jacobian of the transformation is 
    \begin{equation}
        J = \left\vert \det\left[\frac{\partial(\alpha, \beta, \gamma)}{\partial(a,b,c)}\right]\right\vert = \frac{\sin (b)}{\sin (\beta)} \,.
    \end{equation} 
    Notice that the Jacobian is independent of the actual probability density function considered, i.e., of the powers $r, r'$ of the sine functions appearing in Eqs.~\eqref{eq-ap:turnover_f} and \eqref{eq-ap:turnover_g}.
    Lastly, we explicitly carry out the transformation between the two sets of angles using the Jacobian $J$ and the relations found in Eq.~\eqref{eq-ap:angle_relations_su2}:
    \begin{align}
        \nonumber
       \chi_{r,r'}(\alpha,\beta,\gamma) \,d\alpha\, d\beta\, d\gamma 
       \nonumber
       &= J\, \sin^r (\alpha) \sin^{r+r'+1}(\beta) \sin^{r'}(\gamma)\,da\, db\, dc
        \\& \nonumber = \frac{\sin (b)}{\sin (\beta)} \sin^r (\alpha) \sin^{r+r'+1}(\beta) \sin^{r'}(\gamma)\,da\, db\, dc \\
        \nonumber
        &= \sin(b) (\sin(\alpha)\sin(\beta))^r (\sin(\beta)\sin(\gamma))^{r'} da\, db\, dc \\& \nonumber= \sin(b) (\sin(b)\sin(c))^r (\sin(a)\sin(b))^{r'} da\, db\, dc \\
        \nonumber
        &= \sin^{r'} (a) \sin^{r+r'+1} (b) \sin^{r}(c) \, da\, db\, dc \nonumber \\&=  f_{r,r'}(a, b, c) \,da\, db\, dc\, ,
    \end{align}
    which completes the proof.
\end{proof}

We can interpret this lemma as enabling the passage of the rightmost gate $U_{R_j}$  through and eventually to the left of the subsequent gates $U_{R_{j+1}},U_{R_j}$ in Eq.~\eqref{eq-ap:turnover}, transforming it into a $U_{R_{j+1}}$ in the process, see Supp. Fig.~\ref{fig-ap:turnover}\footnote{Beware of the unfortunate standard convention whereby right in Eq.~\eqref{eq-ap:turnover} means left in the circuit diagram.}. During this transformation, the probability distribution function (PDF) for the angle associated with the gate passing through remains unchanged, that is, it retains the same power $p'$ in the sinusoidal distribution. The other two gates involved in the turnover preserve their gate type but exchange the powers of their respective sinusoidal PDFs. Here, we stress the important caveat that the aforementioned transformation of the Haar measure under a turnover of the gates is only valid when the original power of the sinusoidal function associated with the middle gate equals the sum of the powers of the two external gates plus one, i.e., $r+r'+1$.

By repeatedly applying the turnover property, one can show --as detailed in Refs.~\cite{kokcu2022algebraic,camps2022algebraic}-- that the triangle-shaped circuits in Lemma~\ref{lem-ap:fermion-haar}, obtained by mapping Hurwitz's decomposition of special orthogonal matrices into FLO, can be optimized into brick-wall-shaped circuits. This optimization yields the circuit architecture presented in Theorem~\ref{th-ap:fermion-haar}.
Yet, we need to show that the Haar measure in Eq.~\eqref{eq-ap:tfim_haar} transforms into the one reported in Eqs.~\eqref{eq-ap:active_haar-th},~\eqref{eq-ap:active_haar_layer} and~\eqref{eq-ap:f_function}. To illustrate the previous fact, we first observe that a ``ladder-like'' or triangular Haar random FLO circuit on $n$ qubits can be visualized as a stack of $2n-1$ ``anti-diagonals'' $\{d_k\}_{k=1}^{2n-1}$ of length $k$, arranged from left to right in the circuit, as depicted in Supp. Fig.~\ref{fig-ap:triangle_active_FLO}. Within each anti-diagonal $d_k$, the power of the sinusoidal PDF associated to each angle in Hurwitz's construction is given by $k-j$, where $j$ indicates the position of the gate within $d_k$ and takes values $j=1,\dots,k$ (see Supp. Fig.~\ref{fig-ap:triangle_active_FLO}).
The transformation from triangular to brick-wall circuits involves applying turnover operations to all anti-diagonals $d_{k}$  with odd $k$ (excluding the last one, $d_{2n-1}$), ``reflecting'' them with respect to $d_{2n-1}$. Throughout this process of successive turnovers, each gate moving through the circuit alternates between $U_{R_{\rm odd}}$ (Pauli-$Z$ rotations) and $U_{R_{\rm even}}$ (Pauli-$XX$ rotations) until it reaches its final position in the circuit, eventually yielding the optimal architecture described in Eq.~\eqref{eq-ap:spin-matrix-decomposition-th}.

Regarding the final Haar measure, we note that turning over the gate in the $j$-th position of anti-diagonal $d_k$ has the effect of swapping the PDFs of the $j$-th and $(j+1)$-th gates of all subsequent anti-diagonals $d_{i}$ with $i>k$ (again, see Supp. Fig.~\ref{fig-ap:triangle_active_FLO}). Denoting by $\vec{d_k}$ the vector containing the powers of the sinusoidal PDFs in $d_k$, i.e., whose entries  are $(\vec{d_k})_j=k-j$, and by $\sigma_{a,b}$ the swap operator exchanging positions $(a,b)$ of a vector, the combined effect of turning over the entire anti-diagonal $d_{k}$ through the subsequent diagonals is represented as
\begin{equation}
\Sigma_{k}=\sigma_{1,2}\,\sigma_{2,3}\cdots\sigma_{k,k+1}\,.    
\end{equation}
Crucially, during the turnover of an entire anti-diagonal $d_k$, we proceed starting from the last gate --the one associated with the last entry of the vector $\vec{d_k}$-- to the first, yielding precisely the action $\Sigma$ introduced above. This ordering, combined with the initial assignment of the entries of all the $\vec{d}_k$ vectors, ensures that the condition required for transforming the Haar measure --specified by Eqs.~\eqref{eq-ap:turnover_f} and~\eqref{eq-ap:turnover_g}-- is satisfied throughout the entire compression process.

\begin{figure}[t]
    \centering
    \includegraphics[width=\linewidth]{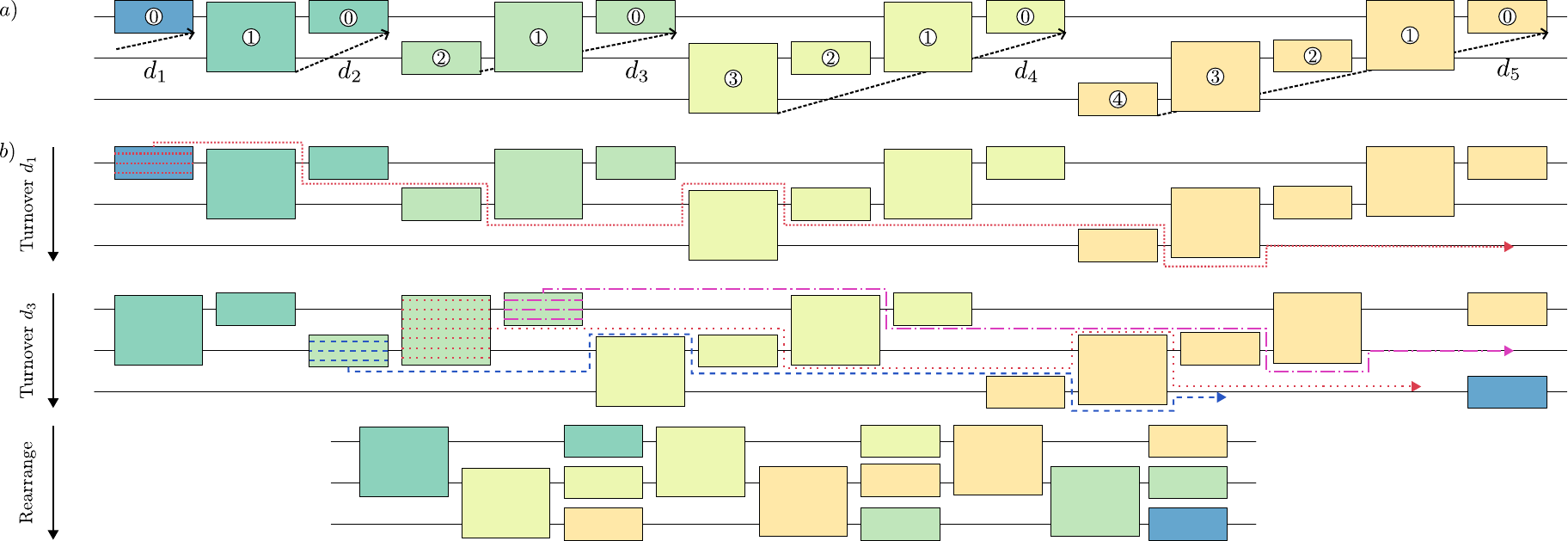}
    \caption{{\bf Compressing the circuits to optimal depth} (a) We show how the powers of the sine functions corresponding to the independent parameters' unnormalized PDFs are arranged in the ``ladder-like'' or triangular Haar random active FLO circuits in Lemma~\ref{lem-ap:fermion-haar}. We use different colors to highlight the diagonals $d_k$ that are used to keep track of the arrangement of said exponents while optimizing the circuit's architecture.
    (b) We illustrate how to transform a triangular-shaped circuit into an optimally compressed, brick-wall-structured one. We consider the case of an $n=3$ active FLO circuit, consisting of diagonals $d_1,\dots, d_5$.}
    \refstepcounter{supfig}
    \label{fig-ap:triangle_active_FLO}
\end{figure}

After applying all the turnovers, and renaming $k$ as $2k'\equiv k$ or $2k'+1\equiv k$, for $k'=1,2,\dots, n-1$ the resulting anti-diagonals become:
\begin{align}
    \nonumber
    \vec{\Bar{d}_{2k'}} &= \Sigma_{2k'-1}\Sigma_{2k'-3}\dots\Sigma_1 \vec{d_{2k'}}
    =(0,2,\dots,2k'-2,2k'-1,2k'-3,\dots,3,1)\,,\\[8pt]
    \vec{\Bar{d}_{2k'+1}} &= \Sigma_{2k'-1}\Sigma_{2k-3}\dots\Sigma_1 \vec{d_{2k'+1}}
    =(1,3,\dots,2k'-1,2k',2k'-2,\dots,2,0)\,,
\end{align}
alongside the unchanged anti-diagonal $\vec{\Bar{d}_1}=\vec{d_1}=(0)$. 
By construction, after all the turnovers have been carried out the ensuing circuit structure will consist of the stack $\Bar{d}_2,\Bar{d}_4,\dots,\Bar{d}_{2n-2},\Bar{d}_{2n-1},\Bar{d}_{2n-3},\dots,\Bar{d}_1$.
Finally, the function appearing in the active FLO Haar measure in Eq.~\eqref{eq-ap:active_haar_layer} is
\begin{equation}
    f_n(u,v)=
    \begin{cases}
        \min(2v-2,\, 4n-2u-1)\,, &{\rm if}\quad u>v, \\[6pt]
        \min(4n-2v,\, 2u-1)\,, &{\rm if}\quad u<v\,,
    \end{cases}
\end{equation}
which precisely corresponds to reading out the entries of the vectors $\vec{\Bar{d}_k}$ according to the sequential ordering of the gates in the optimized circuit.

\section{Proof of Theorem 2}
\label{ap:theo-2}
In this section we provide the proof of Theorem~\ref{th:passive-fermion-haar}, which we here recall for convenience.
\begin{theorem} \label{th-ap:passive-fermion-haar-th}
    Any passive fermionic linear optics circuit can be decomposed (up to a global minus sign) as 
    \begin{equation} \label{eq-ap:passive-flo-matrix-decomposition-th}
            U(\thv)= e^{i\frac{\lambda_n}{2}\sum_{q=1}^n Z_q} \tilde{L}_n \tilde{L}_{n-1} \cdots \tilde{L}_1\,,
    \end{equation}
    for 
    \begin{align}\label{eq-ap:R-tilde-prod-th}
        \nonumber
        \tilde{L}_{2k} &= \prod_{j=1}^{\lfloor \frac{n}{2}\rfloor} U_{\tilde{R}_{2j-1}}(\theta_{j,2k},\phi_{j,2k},\lambda_{2j-1}\delta_{2k, n})\,,\\
        \tilde{L}_{2k+1} &= \prod_{j=1}^{\lfloor \frac{n-1}{2}\rfloor} U_{\tilde{R}_{2j}}(\theta_{j,2k+1},\phi_{j,2k+1},\lambda_{2j-1}\delta_{2k+1, n}+\lambda_{2j}\delta_{2k+1,1})\,,
    \end{align}
    with  $k\in 0,1,\dots,\left\lfloor\frac{n}{2}\right\rfloor$, $\tilde{L}_0\equiv\id$, and
    \begin{equation} \label{eq-ap:U-R-tilde-th}
       U_{\tilde{R}_j}(\theta,\phi,\lambda)= e^{i\frac{\phi+\lambda}{4} Z_j} e^{i\frac{-\phi-\lambda}{4} Z_{j+1}} e^{i\frac{\theta}{2} (X_jY_{j+1}-Y_j X_{j+1})} e^{i\frac{-\phi+\lambda}{4} Z_j} e^{i\frac{\phi-\lambda}{4} Z_{j+1}} \,\!,
    \end{equation}
    with parameter values $\theta\in[0,\frac{\pi}{2}]$, $\phi\in[0,2\pi)$, and $\lambda\in[0,2\pi)$. 
    In addition, the normalized Haar measure for the adjoint representation with respect to the parametrization in Eqs.~\eqref{eq-ap:passive-flo-matrix-decomposition-th} and~\eqref{eq-ap:R-tilde-prod-th} is given by
    \begin{equation} \label{eq-ap:passive-FLO-haar-th}
        d\mu(U)= \NC \prod_{k=0}^{\lfloor \frac{n}{2}\rfloor} \mu(\tilde{L}_{2k}) \mu(\tilde{L}_{2k+1})\prod d\theta \,d\phi\, d\lambda\,, 
    \end{equation}
    where  $\NC=\left(\sqrt{2\pi}^{n(n+1)}\prod_{s=1}^{n-1}(2s)^{s-n}\right)^{-1}$,
    \begin{align}\label{eq-ap:passive-FLO-haar-layer-th}
        \mu(\tilde{L}_{2k}) &= \prod_{j} \cos( \theta_{j,2k})\sin(\theta_{j,2k})^{g_n(2j-1,2k)}   \nonumber\\ 
        \mu(\tilde{L}_{2k+1}) &= \prod_{j} \cos( \theta_{j,2k+1})\sin(\theta_{j,2k+1})^{g_n(2j,2k+1)} \,,
    \end{align}
    with $\mu(\tilde{L}_{0})\equiv1$, and
    \begin{equation} \label{eq-ap:g_function}
        g_n(u,v)=
    \begin{cases}
        \min(4v-3, 4n-4u-1) \quad &{\rm if} \quad u>v, \\
        \min(4n-4v+1, 4u-1) \quad &{\rm if} \quad u<v \\
    \end{cases}\,.
    \end{equation}
\end{theorem}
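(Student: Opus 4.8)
The plan is to follow, essentially verbatim, the four-step strategy used for Theorem~\ref{th-ap:fermion-haar}, replacing $\SO(2n)$ throughout by $\U(n)$. First I would recall Hurwitz's decomposition of the standard representation of $\U(n)$~\cite{diaconis2017hurwitz,zyczkowski1994random}: every $W\in\U(n)$ factorizes as an ordered product of two-level (Givens-type) unitaries $E_{j,j+1}(\theta,\phi)$ acting nontrivially only on the coordinates $j,j+1$ — a $\mathrm{U}(2)$ rotation specified by a mixing angle $\theta\in[0,\tfrac{\pi}{2}]$ and a relative phase $\phi\in[0,2\pi)$ — followed by a final diagonal phase matrix $\mathrm{diag}(e^{i\lambda_1},\dots,e^{i\lambda_n})$; the normalized invariant measure in these coordinates is $\propto\prod\cos(\theta_{j,k})\sin(\theta_{j,k})^{2j-1}\,d\theta_{j,k}\,d\phi_{j,k}\prod_q d\lambda_q$ with all phases uniform, the column index $k$ and the position $j$ within a column playing exactly the roles they did in Eqs.~\eqref{eq-ap:so-matrix-decomposition}--\eqref{eq-ap:R-O-tilde-prod}.

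Second, I would write down the explicit Lie-algebra isomorphism between the dynamical Lie algebra of passive FLO (the Lie closure of $i\GC$ with $\GC$ as in Eq.~\eqref{eq:generators}, i.e. the real span of the particle-number-conserving Majorana bilinears) and $\mathfrak{u}(n)$, sending the off-diagonal skew generator $E_{jk}-E_{kj}$ to $\tfrac{i}{4}(X_jY_k-Y_jX_k)$ on neighboring qubits and the diagonal $iE_{jj}$ to $\tfrac{i}{2}Z_j$, and checking that this is a Lie homomorphism from the Majorana anticommutation relations and a bijection from trace-orthogonality, exactly as in Lemma~\ref{lem-ap:isomorphism} (as there, the factor $\tfrac14$ versus $\tfrac12$ records the double-cover, so the resulting circuits reproduce passive FLO only up to the global sign that lies in the kernel of the adjoint action). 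Applying this isomorphism to the single-particle adjoint representation, the two-level unitary $E_{j,j+1}(\theta,\phi)$ together with the relevant column phase is realized by the gate $U_{\tilde R_j}(\theta,\phi,\lambda)$ of Eq.~\eqref{eq-ap:U-R-tilde-th}: the conjugating $R_z$ rotations supply the $\mathrm{ZXZ}$-type Euler parametrization of the $\mathrm{U}(2)$ block while the $R_{xy-yx}(\theta)$ rotation supplies the mixing, and the $n$ residual column phases $\lambda_1,\dots,\lambda_n$ are distributed into the final layer $e^{i\frac{\lambda_n}{2}\sum_q Z_q}$ and into the boundary insertions encoded by the Kronecker deltas in Eq.~\eqref{eq-ap:R-tilde-prod-th}. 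By the left-right-invariance argument of Eq.~\eqref{eq-ap:haar-invariance} (which uses only that the two representations are homomorphic) the Hurwitz measure transports to a triangular passive FLO circuit whose $\theta$-weights are $\cos(\theta)\sin(\theta)^{2j-1}$ and whose phases are uniform; the normalization then follows from $\int_0^{\pi/2}\cos(\theta)\sin(\theta)^{2j-1}\,d\theta=\tfrac{1}{2j}$ and $\int_0^{2\pi}d\phi=2\pi$, yielding $\NC=\big(\sqrt{2\pi}^{\,n(n+1)}\prod_{s=1}^{n-1}(2s)^{s-n}\big)^{-1}$.

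Third, I would establish the passive turnover identity $U_{\tilde R_j}U_{\tilde R_{j+1}}U_{\tilde R_j}=U_{\tilde R_{j+1}}U_{\tilde R_j}U_{\tilde R_{j+1}}$ together with its effect on the measure (the passive analogue of Lemma~\ref{lem-ap:turnover-active}, Supp.\ Fig.~\ref{fig-ap:turnover}(b)). As in the active case this reduces to $\mathrm{SU}(2)$-style bookkeeping: $\{iZ_j,iZ_{j+1},i(X_jY_{j+1}-Y_jX_{j+1})\}$ generates $\mathfrak{u}(2)$, so both sides are two Euler parametrizations of $\mathrm{U}(2)$, the phases $\phi,\lambda$ remain uniform, and comparing the unique $\mathrm{U}(2)$ Haar density $\propto\cos(\beta)\sin(\beta)$ on the two sides fixes the Jacobian $J=\cos(b)\sin(b)/(\cos(\beta)\sin(\beta))$; matching the anti-diagonal corner entries of the $3\times3$ rotation blocks then shows that the cosine-modulated odd sine powers of the two outer gates are interchanged while the middle one is preserved, provided the middle power equals the sum of the outer ones plus one, exactly as in Lemma~\ref{lem-ap:turnover-active}. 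Finally I would rerun the anti-diagonal bookkeeping of Supp.\ Fig.~\ref{fig-ap:triangle_active_FLO}: the triangular circuit is a stack of anti-diagonals $d_k$ carrying odd $\theta$-powers at position $j$, the reflection of the odd-index anti-diagonals through the longest one is effected by the same sequence of adjacent transpositions $\Sigma_k$, and reading the reshuffled powers off the resulting brick-wall layers reproduces precisely the function $g_n$ of Eq.~\eqref{eq-ap:g_function} and the layer measures in Eq.~\eqref{eq-ap:passive-FLO-haar-layer-th}; the parity of $n$ enters only in whether the longest anti-diagonal is reflected with those above or below it, which also accounts for the $\lfloor n/2\rfloor$ versus $\lfloor(n-1)/2\rfloor$ layer sizes.

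The step I expect to be the main obstacle is the joint bookkeeping in the third step. Unlike the active case, each two-qubit gate now carries two uniform phases in addition to the mixing angle, so one must verify that the turnover relations never couple $\phi,\lambda$ to $\theta$ (so the phases stay exactly uniform throughout the compression), that the boundary phase terms collapse correctly into the single final layer $e^{i\frac{\lambda_n}{2}\sum_q Z_q}$ together with the $\delta$-localized $\lambda$'s of Eq.~\eqref{eq-ap:R-tilde-prod-th}, and that the balance condition inherited from Lemma~\ref{lem-ap:turnover-active} — middle power equals the sum of the outer powers plus one — is compatible with the initial odd-power assignment on every triple of gates touched during the reflection of an entire anti-diagonal. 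Getting this arithmetic exactly right, rather than off by a constant, is what pins down the precise form of $g_n$.
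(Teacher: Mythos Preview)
Your four-step strategy matches the paper's exactly, and steps one, two, and four are essentially what the paper does (the paper routes the isomorphism through $\mathfrak{so}(2n)\cap\mathfrak{sp}(2n,\mathbb{R})$ rather than writing $\mathfrak{u}(n)\to\mathfrak{g}$ directly, but this is cosmetic). The triangular circuit and its measure are stated as Lemma~\ref{lem-ap:passive-fermion-haar}, and the anti-diagonal bookkeeping is declared verbatim identical to the active case.

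There is, however, a genuine error in your third step. The passive turnover does \emph{not} reduce to $\mathrm{U}(2)$: the triple $U_{\tilde R_j}U_{\tilde R_{j+1}}U_{\tilde R_j}$ acts on three modes $j,j+1,j+2$, and the generators you list are only those of a single $U_{\tilde R_j}$. Adding the generators of $U_{\tilde R_{j+1}}$ one gets $\mathfrak{u}(3)$, not $\mathfrak{u}(2)$, consistent with the parameter count (three gates $\times$ three parameters $=9=\dim\mathfrak{u}(3)$). Consequently the Jacobian is not $\cos(b)\sin(b)/(\cos(\beta)\sin(\beta))$ but the full ratio of $\mathrm{U}(3)$ Hurwitz densities,
\[
J=\frac{\cos(t_1)\sin(t_1)\,\cos(t_2)\sin^3(t_2)\,\cos(t_3)\sin(t_3)}{\cos(\theta_1)\sin(\theta_1)\,\cos(\theta_2)\sin^3(\theta_2)\,\cos(\theta_3)\sin(\theta_3)}\,,
\]
and one still needs the corner-entry relations $\sin(\theta_1)\sin(\theta_2)=\sin(t_2)\sin(t_3)$, $\sin(\theta_2)\sin(\theta_3)=\sin(t_1)\sin(t_2)$ to push the computation through. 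The paper (Lemma~\ref{lem-ap:turnover-passive}) extracts this Jacobian not by comparing Euler parametrizations of a single group element, but by conjugating the $\tilde R_1\tilde R_2\tilde R_1$ parametrization of $\mathrm{U}(3)$ by the signed permutation $M$ that swaps $\tilde R_1\leftrightarrow\tilde R_2$; bi-invariance of the Haar measure then forces both parametrizations to carry the same density, which fixes $J$ and simultaneously shows that the phases transform as $p_i=-\phi_i$, $l_i=-\lambda_i$ and hence remain uniform. With this correction your proof goes through and coincides with the paper's.
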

The outline of the proof is the same as for Theorem~\ref{th-ap:fermion-haar}. We first recall a decomposition of the elements in the standard representation of $\mathbb{U}(d)$, also reported in Ref.~\cite{diaconis2017hurwitz}, and the associated Haar measure. Next, we explicitly write down an isomorphism between the algebras of the groups $\mathbb{U}(n)$ and $\SO(2n)\cap\SPBB(2n,\mathbb{R})$, which we will use -together with the previous map between $\SO(2n)$ and $\mathbb{SPIN}(2n)$- to translate the $\mathbb{U}(n)$ decomposition to passive fermionic linear optics circuits. Lastly, using  circuit reshaping techniques analogous to those used for the active FLO case, we will prove how to obtain our optimal Haar random passive FLO circuits.

\subsection{Decomposition of $\U(d)$ matrices}
\label{ap:U-matrices}

The derivation for the unitary group $\U(d)$ is completely analogous to that for $\SO(d)$. We recall that the standard representation of the unitary group $\mathbb{U}(d)$ consists of the unitary matrices of size $d\times d$, acting irreducibly on $\mathbb{C}^{d}$. These matrices satisfy that $UU^\dagger=U^\dagger U=\id_d$ for every $U\in\U(d)$. Here, the statement is that any $U\in\U(d)$ can be factorized as
\begin{equation} \label{eq-ap:u-matrix-decomposition}
    U = e^{i\lambda_1}\, U_1U_2\cdots U_{d}\,,
\end{equation}
with
\begin{equation}    \label{eq-ap:R-U-tilde-prod}
   U_j = \tilde{R}_j(\theta_{j,j+1},\phi_{j,j+1},0) \cdots \tilde{R}_1(\theta_{1,j+1},\phi_{1,j+1},\lambda_{j+1})\,,
\end{equation}
and rotation matrices
\begin{align}
     \tilde{R}_j(\theta, \phi, \lambda) = e^{\frac{\phi+\lambda}{2} P_j} e^{\frac{-\phi-\lambda}{2} P_{j+1}} e^{\theta L_{jj+1}} e^{\frac{-\phi+\lambda}{2} P_j} e^{\frac{\phi-\lambda}{2} P_{j+1}} =\begin{pmatrix}
        \id_{j-1} \\ & e^{i\lambda} \cos\theta & e^{i\phi}\sin\theta \\ & -e^{-i\phi}\sin\theta & e^{-i\lambda}\cos\theta \\ & & & \id_{d-j-1}
    \end{pmatrix}\,, \label{eq-ap:R-tilde}
\end{align}
where we recall from Eq.~\eqref{eq-ap:L_def} the definition of the $\mathfrak{so}(d)$ basis elements, $\left(L_{jk}\right)_{m,l} = \delta_{jm}\delta_{kl}-\delta_{jl}\delta_{km}$, while the $\{P_j\}_{j=1}^d$ are a basis for the (real) vector space of purely-imaginary diagonal matrices, i.e.,
\begin{equation} \label{eq-ap:P_def}
    \left(P_{j}\right)_{m,l} = i\delta_{jm}\delta_{jl}\,.
\end{equation}
The angles can take values $\theta_{j',j+1}\in[0,\frac{\pi}{2}]$, $\phi_{j',j+1}\in[0,2\pi)$ (with $j'\leq j$), and $\lambda_j\in[0,2\pi)$. 
The main difference with the special orthogonal group is that since the entries of a unitary matrix $U$ can be complex, instead of Eq.~\eqref{eq-ap:cos-sin}, we now have to solve
\begin{equation}\label{eq-ap:u-cos-sin}
    u_{d 1}  \cos\theta_{1,d} + u_{d 2} e^{-i\phi}\sin \theta_{1,d} = 0\,,
\end{equation}
where we used the definition of the matrices $\tilde{R}_j$ in Eq.~\eqref{eq-ap:R-tilde}, and we set $\lambda=0$ as per Eq.~\eqref{eq-ap:R-U-tilde-prod}.
That is, we use the parameter $\phi$ to transform $-u_{d 1}/u_{d2}$ into a real number, and then solve the equation as in the orthogonal case. Since  $\phi$ can take values in $[0,2\pi)$, we can adjust the sign of $-e^{i\lambda}u_{d 1}/u_{d 2}$ at will, and hence we only need $\theta\in[0,\frac{\pi}{2}]$ to solve Eq.~\eqref{eq-ap:u-cos-sin}. Moreover, in the $\tilde{R}_1$ rotations the parameter $\lambda$ is not necessarily zero but instead is used to ensure that the final diagonal entry after making zeros in a row is $+1$ instead of a complex phase. Finally, in the last step, after multiplying by $U_1^\dagger$ from the left the $(1,1)$ matrix entry is not fixed as before but instead can also be an arbitrary complex phase. This is the reason why we need to add the phase $e^{i\lambda_1}$ in Eq.~\eqref{eq-ap:u-matrix-decomposition}.

\subsection{The isomorphism between the Lie algebras of $\U(n)$ and $\SO(2n)\cap\SPBB(2n,\mathbb{R})$}
\label{ap:un-lie-algebra}

We now provide an explicit isomorphism between the Lie algebras of the groups $\U(n)$ and $\SO(2n)\cap\SPBB(2n,\mathbb{R})$. Analogously to the case of active FLO, we will use this isomorphism (together with the one between the algebras of $\SO(2n)$ and $\mathbb{SPIN}(2n)$ presented above) to obtain a set of generators for passive FLO circuits and translate between parametrizations of $\U(n)$ and particle-preserving matchgate circuits.
The aforementioned isomorphism is realized for a matrix $O\in\SO(2n) \cap \mathbb{SP}(2n,\mathbb{R})$ and a matrix $U\in\U(n)$ by the map~\cite{mele2024efficient}
\begin{equation} \label{eq:U-SO-SP}
   O= {\rm Re}(U) \otimes \id + {\rm Im}(U) \otimes iY\,,
    \end{equation}
where $Y=\left(\begin{smallmatrix} 0 & -i \\ i & 0 \end{smallmatrix}\right)$ is the usual single-qubit Pauli matrix. At the Lie algebra level, the isomorphism reads
\begin{equation} \label{eq:u-so-sp}
    o = {\rm Re}(u) \otimes \id + {\rm Im}(u) \otimes iY\,,
\end{equation}
where $o\in \so(2n)\cap \mathfrak{sp}(2n,\mathbb{R})$ and $u\in\mathfrak{u}(n)$. 
Using Eq.~\eqref{eq:u-so-sp}, we can map a set of generators of $\mathfrak{u}(n)$ to a set of generators of $\so(2n)\cap \mathfrak{sp}(2n,\mathbb{R})$, as
\begin{equation}
    L_{2j-1,2j} = -iP_j\otimes iY\,, \qquad  L_{2j, 2j+2} + L_{2j-1, 2j+1} = L_{jj+1} \otimes \id\,.
\end{equation}
Indeed, since the $\mathfrak{u}(n)$ algebra can be entirely generated from the matrices $P_j$ and $L_{jj+1}$ via nested commutation,\footnote{Interestingly, the $P_j$ matrices are a basis for a Cartan subalgebra of $\mathfrak{u}(n)$.}  we thus find that the Lie algebra  $\so(2n)\cap \mathfrak{sp}(2n,\mathbb{R})$ can be generated from $ \{L_{2j-1, 2j}\}_{j\in[n]} \cup \{L_{2j, 2j+2} + L_{2j-1, 2j+1}\}_{j\in[n-1]}$.
We then transform these generators into Majorana operators using Lemma~\ref{lem-ap:isomorphism}, obtaining a set of generators $\{c_{2j-1}c_{2j}\}_{j\in[n]} \cup \{c_{2j}c_{2j+2} + c_{2j-1} c_{2j+1}\}_{j\in[n-1]}$ (notice that we omitted the factor $1/2$ as for now we are just concerned with finding a set of generators for passive FLO). Finally, in terms of Pauli matrices these generators are represented by
\begin{equation}
    \GC =  \{Z_q\}_{q=1}^n \cup \{X_qY_{q+1}-Y_q X_{q+1}\}_{q=1}^{n-1} \,.
\end{equation}

\subsection{From $\U(n)$ matrices to passive FLO circuits}

Using the previous isomorphisms between $\U(n)$, $\SO(2n)\cap\SPBB(2n,\mathbb{R})$, and the adjoint action of passive FLO, one can directly translate Hurwitz's construction in the SI~\ref{ap:U-matrices} to obtain passive Haar random FLO circuits with a triangular shape, as shown in Fig.~\ref{fig-ap:u-circuits}. In particular, the following lemma holds.

\begin{lemma} \label{lem-ap:passive-fermion-haar}
Any passive fermionic linear optics circuit can be decomposed (up to a global minus sign) as 
\begin{equation} \label{eq-ap:passive-flo-matrix-decomposition}
        U(\thv)= e^{i\frac{\lambda_1}{2}\sum_{q=1}^n Z_q} \tilde{C}_1 \tilde{C}_2\cdots \tilde{C}_{n-1}\,,
\end{equation}
for 
\begin{equation}\label{eq-ap:R-tilde-prod}
    \tilde{C}_j=  U_{\tilde{R}_j}(\theta_{j,j+1},\phi_{j,j+1},0) \cdots U_{\tilde{R}_1}(\theta_{1,j+1},\phi_{1,j+1},\lambda_{j+1})\,,
\end{equation} 
and
    \begin{align} \label{eq-ap:U-R-tilde}
       U_{\tilde{R}_j}(\theta,\phi,\lambda)= e^{i\frac{\phi+\lambda}{4} Z_j} e^{i\frac{-\phi-\lambda}{4} Z_{j+1}} e^{i\frac{\theta}{2} (X_jY_{j+1}-Y_j X_{j+1})} e^{i\frac{-\phi+\lambda}{4} Z_j} e^{i\frac{\phi-\lambda}{4} Z_{j+1}} \,\!,
    \end{align}
    with parameter values   $\theta\in[0,\frac{\pi}{2}]$, $\phi\in[0,2\pi)$, and $\lambda\in[0,2\pi)$. 

    In addition, the normalized Haar measure for the adjoint representation with respect to the parametrization in Eqs.~\eqref{eq:passive-flo-matrix-decomposition} and~\eqref{eq:R-tilde-prod} is given by
    \begin{equation}\label{eq-ap:passive-FLO-haar}
        d\mu(U) = \NC   \prod_{1\leq j <k\leq n} \cos( \theta_{j,k})\sin(\theta_{j,k})^{2j-1} d\theta_{j,k} d\phi_{j,k}\prod_{j=1}^n d\lambda_j\,,
    \end{equation}
    where $\NC=\left(\sqrt{2\pi}^{n(n+1)}\prod_{k=1}^{n-1}(2k)^{k-n}\right)^{-1}$.
\end{lemma}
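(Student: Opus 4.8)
The plan is to follow, \emph{mutatis mutandis}, the proof of Lemma~\ref{lem-ap:fermion-haar}, replacing $\SO(2n)$ by $\U(n)$ everywhere. First I would invoke the Hurwitz decomposition of the standard representation of $\U(n)$ recalled in SI~\ref{ap:U-matrices}, i.e.\ Eqs.~\eqref{eq-ap:u-matrix-decomposition}--\eqref{eq-ap:R-tilde}, together with its associated invariant measure, which (from Ref.~\cite{diaconis2017hurwitz}) has the form $d\mu(U)\propto\prod_{1\le j<k\le n}\cos(\theta_{j,k})\sin(\theta_{j,k})^{2j-1}\,d\theta_{j,k}\,d\phi_{j,k}\prod_{j}d\lambda_j$, the exponent $2j-1$ being the complex-dimension count analogous to the $j-1$ of the orthogonal case.

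Next I would compose two isomorphisms to carry this decomposition into FLO circuits: the linear map $\varphi$ of Lemma~\ref{lem-ap:isomorphism}, $\varphi(L_{jk})=c_jc_k/2$, and the identification of $\mathfrak{u}(n)$ with $\so(2n)\cap\spf(2n,\mathbb{R})$ given by Eq.~\eqref{eq:u-so-sp}. Concretely, I would track that the imaginary-diagonal generator $P_j$ of $\mathfrak{u}(n)$ is sent to $L_{2j-1,2j}$, hence to the circuit rotation $e^{\frac{s}{2}iZ_j}$, while the off-diagonal $\mathfrak{so}(n)$ generator $L_{j,j+1}$ is sent to $L_{2j,2j+2}+L_{2j-1,2j+1}$, hence (using $YZ=iX$ and $XZ=-iY$) to $e^{i\frac{\theta}{2}(X_jY_{j+1}-Y_jX_{j+1})}$. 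A short Pauli computation then shows that applying the composed map to the matrix $\tilde{R}_j(\theta,\phi,\lambda)$ of Eq.~\eqref{eq-ap:R-tilde} yields exactly the five-gate circuit $U_{\tilde{R}_j}(\theta,\phi,\lambda)$ of Eq.~\eqref{eq-ap:U-R-tilde}, the quarter-angle coefficients $\tfrac{\pm\phi\pm\lambda}{4}$ arising from the factor $\tfrac12$ in $\varphi$. The triangular decomposition Eq.~\eqref{eq-ap:passive-flo-matrix-decomposition} then follows term by term from Eqs.~\eqref{eq-ap:u-matrix-decomposition}--\eqref{eq-ap:R-U-tilde-prod}, with the global $\U(n)$ phase $e^{i\lambda_1}\id_n$ translating to the prefactor $e^{i\frac{\lambda_1}{2}\sum_q Z_q}$; the ``up to a global minus sign'' caveat is inherited as before from the fact that $e^{\g}\cong\mathbb{SPIN}(2n)$ double-covers $\SO(2n)$, so it is the adjoint action (Eq.~\eqref{eq-ap:action-majorana}) that is faithfully reproduced. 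For the measure I would transfer invariance exactly as in Eq.~\eqref{eq-ap:haar-invariance}: since $\Phi_1^{\rm ad}$ restricted to passive FLO is homomorphic to the standard representation of $\U(n)$ through $\SO(2n)\cap\SPBB(2n,\mathbb{R})$, left/right invariance of the $\U(n)$ Hurwitz measure pushes forward to left/right invariance of the measure in Eq.~\eqref{eq-ap:passive-FLO-haar}. The normalization follows from $\int_0^{\pi/2}\cos\theta\sin^{2j-1}\theta\,d\theta=\tfrac{1}{2j}$ and $\int_0^{2\pi}d\phi=\int_0^{2\pi}d\lambda=2\pi$: the $\tfrac{n(n-1)}{2}$ phases $\phi_{j,k}$ together with the $n$ phases $\lambda_j$ give a factor $(2\pi)^{n(n+1)/2}=\sqrt{2\pi}^{n(n+1)}$, and $\prod_{k=2}^{n}\prod_{j=1}^{k-1}\tfrac{1}{2j}=\prod_{j=1}^{n-1}(2j)^{j-n}$ gives the rest, so $\NC=\big(\sqrt{2\pi}^{n(n+1)}\prod_{j=1}^{n-1}(2j)^{j-n}\big)^{-1}$.

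I expect the main obstacle to be the bookkeeping in the second step: verifying that the specific sandwich of $Z_j$- and $Z_{j+1}$-rotations reproduces $\tilde{R}_j$ requires care with (i) the factor $\tfrac12$ in $\varphi$, (ii) the index correspondence whereby a $\U(n)$ Givens rotation on modes $(j,j+1)$ becomes a matchgate on qubits $(j,j+1)$ involving Majoranas $2j-1,\dots,2j+2$, and (iii) the Pauli signs that decide whether the two-qubit generator is $X_jY_{j+1}-Y_jX_{j+1}$ or $X_jY_{j+1}+Y_jX_{j+1}$. Every other step is a faithful transcription of the corresponding step in the proof of Lemma~\ref{lem-ap:fermion-haar}.
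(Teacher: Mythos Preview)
Your proposal is correct and follows exactly the approach the paper takes; indeed, the paper's own proof of this lemma consists of the single sentence ``Completely analogous to that of Lemma~\ref{lem-ap:fermion-haar}'', and your outline faithfully spells out that analogy (Hurwitz decomposition of $\U(n)$, transport via the composed isomorphisms of Eq.~\eqref{eq:u-so-sp} and Lemma~\ref{lem-ap:isomorphism}, invariance argument as in Eq.~\eqref{eq-ap:haar-invariance}, and the normalization count). The bookkeeping concerns you flag are the right ones, and your generator identifications and normalization computation are correct.
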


\begin{proof}
    Completely analogous to that of Lemma~\ref{lem-ap:fermion-haar}.
\end{proof}

\begin{figure*}[t]
    \centering
    \includegraphics[width=\linewidth]{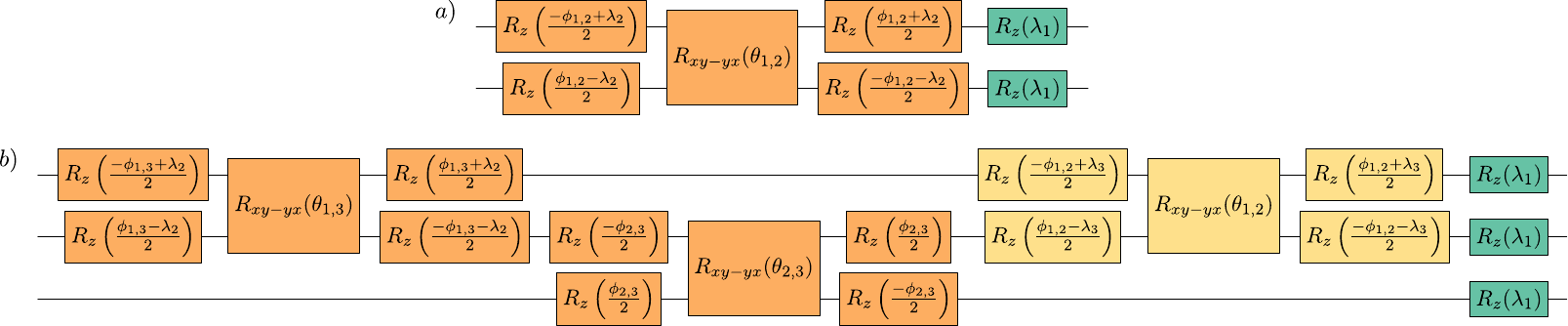}
    \caption{{\bf Suboptimal Haar random passive FLO circuits}. These quantum circuits produce Haar random passive FLO  on $n=2$ (a) and $n=3$ (b)  qubits, respectively, when the parameters are sampled according to Eq.~\eqref{eq-ap:passive-FLO-haar}. Gates are colored as per Eq.~\eqref{eq-ap:passive-flo-matrix-decomposition}, again showing the ladder-like structure of the circuits. }
    \label{fig-ap:u-circuits}
\end{figure*}

\subsection{Optimal Haar random passive FLO circuits}

We finally proceed to determine how the Haar measure on passive FLO transforms under the turnover property of the gates $U_{\tilde{R}_j}$ and $U_{\tilde{R}_{j+1}}$. Just as for the case of active FLO circuits, this property can be expressed by the following equality
\begin{equation}\label{eq-ap:turnover_passive}
    U_{\tilde{R}_j}(\theta_1,\phi_1,\lambda_1)U_{\tilde{R}_{j+1}}(\theta_2,\phi_2,\lambda_2)U_{\tilde{R}_j}(\theta_3,\phi_3,\lambda_3) = U_{\tilde{R}_{j+1}}(t_1,p_1,l_1)U_{\tilde{R}_j}(t_2,p_2,l_2)U_{\tilde{R}_{j+1}}(t_3,p_3,l_3) \,,
\end{equation}
under an appropriate, always existing, mapping of the original parameters $\{\theta_i,\phi_i,\lambda_i\}$ to the new ones $\{t_i,p_i,l_i\}$. The existence and validity of Eq.~\eqref{eq-ap:turnover_passive} is established within the proof of the following lemma.
\begin{lemma}\label{lem-ap:turnover-passive}
    Under the turnover property in Eq.~\eqref{eq-ap:turnover_passive}, the unnormalized distribution
    \begin{equation}\label{eq-ap:turnover_passive_f}
       \chi_{r,r'}(\theta_1,\theta_2,\theta_3) = \cos(\theta_1)\sin^{2r-1}(\theta_1) \cos(\theta_2)\sin^{2r+2r'-1}(\theta_2) \cos(\theta_3)\sin^{2r'-1}(\theta_3) \,,
    \end{equation}
    transforms into
    \begin{equation}\label{eq-ap:turnover_passive_g}
        f_{r,r'}(t_1,t_2,t_3) = \cos(t_1)\sin^{2r'-1}(t_1) \cos(t_2)\sin^{2r+2r'-1} (t_2) \cos(t_3)\sin^{2r}(t_3) \,.
    \end{equation}
    Namely,  $\chi_{r,r'}(\theta_1,\theta_2,\theta_3)\,d\theta_1\, d\theta_2\, d\theta_3 =f_{r,r'}(t_1, t_2, t_3)\, dt_1\, dt_2\, dt_3$. 
\end{lemma}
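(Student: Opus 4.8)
The argument parallels that of Lemma~\ref{lem-ap:turnover-active}, the only new feature being that each gate $U_{\tilde R_j}$ now carries three parameters. As a first step I would establish the turnover identity~\eqref{eq-ap:turnover_passive} itself. The gates $U_{\tilde R_j}$ and $U_{\tilde R_{j+1}}$ act only on qubits $j,j+1,j+2$; by Lemma~\ref{lem-ap:isomorphism} together with the Lie-algebra isomorphism~\eqref{eq:u-so-sp}, their adjoint action on the associated three-mode space is implemented by two overlapping Givens-type rotations $\tilde R_j,\tilde R_{j+1}$ of the form~\eqref{eq-ap:R-tilde}, whose Lie closure is $\mathfrak{u}(3)$. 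Both orderings $\tilde R_j\tilde R_{j+1}\tilde R_j$ and $\tilde R_{j+1}\tilde R_j\tilde R_{j+1}$ are then instances of the Hurwitz decomposition of $\U(3)$ recalled in SI~\ref{ap:U-matrices} — the second one after reversing the labels of the three coordinates, and modulo the overall phase factor, which is invisible in the adjoint representation — and are therefore each (generically) surjective onto the relevant group. Hence for every choice of $\{\theta_i,\phi_i,\lambda_i\}$ there exist $\{t_i,p_i,l_i\}$ making the two sides of~\eqref{eq-ap:turnover_passive} coincide; the existence of this parameter map also follows from Refs.~\cite{kokcu2022algebraic,camps2022algebraic}.

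Next I would extract the relations between the two sets of angles by mapping~\eqref{eq-ap:turnover_passive} to the defining representation, i.e.\ replacing each $U_{\tilde R}$ by the explicit matrix~\eqref{eq-ap:R-tilde}, multiplying out both sides, and equating the corner entries lying on the main anti-diagonal. Their moduli yield, mirroring~\eqref{eq-ap:angle_relations_su2},
\[
\sin\theta_1\sin\theta_2=\sin t_2\sin t_3\,,\qquad \sin\theta_2\sin\theta_3=\sin t_1\sin t_2\,,
\]
while the phases of these and of the remaining entries determine the $p_i,l_i$ in terms of the $\phi_i,\lambda_i$ (and the $\theta_i$) in such a way that the uniform phase measures are carried over with unit Jacobian; this is why the asserted identity involves only the $\theta$-variables. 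For the Jacobian in the $\theta$-variables I would use the same device as in the active case: the normalized Haar measure of $\U(3)$ is parametrization-independent, and writing it in the two orderings by means of Lemma~\ref{lem-ap:passive-fermion-haar} (specialized to $n=3$) gives $\lvert\det\partial(\theta,\phi,\lambda)/\partial(t,p,l)\rvert$ as the ratio of the corresponding products of $\cos(\cdot)\sin^{\bullet}(\cdot)$ factors, a ratio that is manifestly independent of the powers $r,r'$ in~\eqref{eq-ap:turnover_passive_f}.

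Substituting this Jacobian into $\chi_{r,r'}(\theta_1,\theta_2,\theta_3)$, rewriting the resulting pure powers of $\sin\theta_i$ as powers of $\sin\theta_1\sin\theta_2$ and $\sin\theta_2\sin\theta_3$, and then using the two relations above to pass to the $t_i$, the expression collapses to $f_{r,r'}(t_1,t_2,t_3)$, in complete analogy with the closing display in the proof of Lemma~\ref{lem-ap:turnover-active} (see also Supp.\ Fig.~\ref{fig-ap:turnover}(b)). With this lemma in hand, the triangle-to-brick-wall compression proceeds exactly as in SI~\ref{ap:optimal_active}: turning over the odd anti-diagonals of the ladder circuit of Lemma~\ref{lem-ap:passive-fermion-haar} while tracking the exponents produces the function $g_n$ of~\eqref{eq-ap:g_function}, and hence Theorem~\ref{th-ap:passive-fermion-haar-th}.

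The step I expect to be most delicate is the power bookkeeping in the Jacobian comparison. Each $U_{\tilde R}$ contributes a Givens angle restricted to $\theta\in[0,\pi/2]$ with a $\cos\theta$ weight together with two full-circle phases, so one must check that the half-interval and the $\cos$ factors are handled consistently in the two Haar expressions, that the one-parameter redundancy of the nine-parameter, three-gate word relative to the eight-dimensional group it parametrizes does not spoil the change of variables, and — most importantly — that the sine powers of the $t_i$ come out in the \emph{asymmetric} pattern $2r'-1,\ 2r+2r'-1,\ 2r$ of~\eqref{eq-ap:turnover_passive_g} rather than in a symmetric one. It is precisely this asymmetry, inherited from the difference between the Haar exponents in the two orderings, that produces the $+1$ shifts responsible for the odd powers appearing in $g_n$.
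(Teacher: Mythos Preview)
Your approach is essentially the paper's: map to the defining $\U(3)$ representation, read off the sine relations from the anti-diagonal corners, pull the Jacobian from Haar invariance, and substitute; the paper makes your ``label reversal'' explicit via conjugation by the signed permutation $M=\left(\begin{smallmatrix}0&0&1\\0&-1&0\\1&0&0\end{smallmatrix}\right)$, which yields $p_i=-\phi_i$, $l_i=-\lambda_i$ and thereby settles your phase-Jacobian and parameter-redundancy worries in one stroke. Your final concern about the ``asymmetric'' exponent $2r$ on $t_3$ is chasing a typo in the statement---the paper's own derivation (and the active-case analogy) gives $\sin^{2r-1}(t_3)$, i.e.\ the symmetric $r\leftrightarrow r'$ swap.
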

\begin{proof}
    We begin by noticing that given a $\U(3)$ matrix  $\mathcal{A}_1 \equiv e^{i\lambda_3} \tilde{R}_1(\theta_1,\phi_1,\lambda_1)\tilde{R}_{2}(\theta_2,\phi_2,0)\tilde{R}_1(\theta_3,\phi_3,\lambda_2)$, one can always re-cast it into the alternative parameterization $\mathcal{A}_2 \equiv e^{i\lambda_3}\tilde{R}_2(\theta_1,-\phi_1,-\lambda_1)\tilde{R}_{1}(\theta_2,-\phi_2,0)\tilde{R}_2(\theta_3,-\phi_3,-\lambda_2)$ by means of a similarity transformation, i.e., $\mathcal{A}_2 = M\mathcal{A}_1M^{-1}$, where 
    \begin{equation}
        M = \begin{pmatrix}
            0 & 0 & 1 \\
            0 & -1 & 0 \\
            1 & 0 & 0 \\
        \end{pmatrix}
    \end{equation}
    is itself an element of $\U(3)$. Indeed, one can readily prove that the following relations hold: $M\tilde{R}_1(\theta_1,\phi_1,\lambda_1)M^{-1}=\tilde{R}_2(\theta_1,-\phi_1,-\lambda_1)$, and $M\tilde{R}_2(\theta_2,\phi_2,\lambda_2)M^{-1}=\tilde{R}_1(\theta_2,-\phi_2,-\lambda_2)$. Using the latter, $\mathcal{A}_2 = M\mathcal{A}_1M^{-1}$ follows by direct inspection. To avoid confusion, we further rename the parameters in $\mathcal{A}_2$  from Greek to Roman letters, using $t_i,p_i,l_i$ instead of $\theta_i,\phi_i,\lambda_i$.

    Now we observe that the parameterization $\mathcal{A}_1$ corresponds to Hurwitz's decomposition of the $\U(3)$ matrices into Givens rotations explained in the SI~\ref{ap:U-matrices}, and hence it has an associated Haar measure with PDFs for the $\theta$ angles given by $\chi_{1,1}(\theta_1,\theta_2,\theta_3)$, and uniform for the rest. 
    Since the unique Haar measure is both left and right invariant, applying the similarity transform induced by $M$ must leave it unchanged. Hence, we find the Jacobian of the transformation to simply be given by the ratio of the two density functions in the corresponding parametrizations $\mathcal{A}_1$ and $\AC_2$, 
    \begin{equation} \label{eq-ap:jacobian-u}
        J = \frac{f_{1,1}}{\chi_{1,1}}=\frac{\cos(t_1)\sin(t_1)\cos(t_2)\sin^3(t_2)\cos(t_3)\sin(t_3)}{\cos(\theta_1)\sin(\theta_1)\cos(\theta_2)\sin^3(\theta_2)\cos(\theta_3)\sin(\theta_3)} \,.
    \end{equation}
    Here we used the fact that the transformations $p_i=-\phi_i$ and $l_i=-\lambda_1$ have a trivial effect on the Jacobian, since the latter is defined as an absolute value. Furthermore, sampling phases uniformly in $[0,2\pi)$ is clearly equivalent to sampling them in $[0,-2\pi)$.

    Furthermore, we can compare the matrix entries of $\AC_1$ and $\AC_2$, similarly to what we did in the active FLO case. The ensuing matrices, being now complex, are too cumbersome to be fully reported here, but one can check by comparing the module of the corner entries in the main anti-diagonal of these two matrices that  relations analogous to those in Eq.~\eqref{eq-ap:angle_relations_su2} hold in this case too, i.e.,
    \begin{align}\label{eq-ap:angle_relations_u3}
    \nonumber
        \sin(\theta_1)\sin(\theta_2)&=\sin(t_2)\sin(t_3)\,, \\
        \sin(\theta_2)\sin(\theta_3)&=\sin(t_1)\sin(t_2) \,.
    \end{align}

    Lastly, using the fact that the Jacobian is independent of the actual probability density function considered, i.e., of the powers $r, r'$ of the sine functions appearing in Eqs.~\eqref{eq-ap:turnover_passive_f} and \eqref{eq-ap:turnover_passive_g}, we can explicitly carry out the transformation between the two sets of angles using and Eqs.~\eqref{eq-ap:jacobian-u} and~\eqref{eq-ap:angle_relations_u3}:
    \footnotesize
    \begin{align}
        \nonumber
       \chi(\theta_1,\theta_2,\theta_3) \,d\theta_1\, d\theta_2\, d\theta_3 
       \nonumber
       &= J\, \cos(\theta_1)\sin^{2r-1}(\theta_1) \cos(\theta_2)\sin^{2r+2r'-1}(\theta_2) \cos(\theta_3)\sin^{2r'-1}(\theta_3)\,dt_1\, dt_2\, dt_3
        \\& \nonumber = \frac{\cos(t_1)\sin(t_1)\cos(t_2)\sin^3(t_2)\cos(t_3)\sin(t_3)}{\cos(\theta_1)\sin(\theta_1)\cos(\theta_2)\sin^3(\theta_2)\cos(\theta_3)\sin(\theta_3)} \cos(\theta_1)\sin^{2r-1}(\theta_1) \cos(\theta_2)\sin^{2r+2r'-1}(\theta_2) \cos(\theta_3)\sin^{2r'-1}(\theta_3)\,dt_1\, dt_2\, dt_3 \\
        \nonumber
        &= \cos(t_1)\sin(t_1)\cos(t_2)\sin^3(t_2)\cos(t_3)\sin(t_1)\left(\sin(\theta_1)\sin(\theta_2)\right)^{2r-2}\left(\sin(\theta_3)\sin(\theta_3)\right)^{2r'-2} dt_1\, dt_2\, dt_3 \\
        \nonumber
        &= \cos(t_1)\sin^{2r'-1} (t_1) \cos(t_2)\sin^{2r+2r'-1}(t_2) \cos(t_3)\sin^{2r-1}(t_3) \,dt_1\, dt_2\, dt_3 \nonumber \\
        &=  f(t_1, t_2, t_3) \,dt_1\, dt_2\, dt_3\, ,
    \end{align}
    \normalsize
    which completes the proof.
\end{proof}

With the turnover property established, we can proceed to optimize the Haar random passive FLO circuit architecture from the triangular shape in Lemma~\ref{lem-ap:passive-fermion-haar}, to a maximally compressed one.
This optimization is completely identical to the one carried out in the SI~\ref{ap:optimal_active} for active FLO, hence we will refrain from performing it explicitly again here.
Let us only notice that now the powers of the sinusoidal functions appearing in the parameters' PDFs are given by the odd integers $2r-1\geq1$, as opposed to the active FLO case where they were given by the integers $r\geq 0$. Since this is the only change to the definition of the $\vec{d_k}$ vectors (together with the fact that now there are $n$ such vectors instead of $2n-1)$, and since Lemma~\ref{lem-ap:turnover-passive} is completely analogous to Lemma~\ref{lem-ap:turnover-active}, one can follow the proof of Theorem~\ref{th-ap:fermion-haar} step by step to recover the statement in Theorem~\ref{th-ap:passive-fermion-haar-th}.

\section{Compilation costs}
\label{ap:compilation}

Let us discuss the cost of sampling a $2n\times 2n$ matrix $O$ from the Haar measure on the standard representation of $\SO(2n)$, and then compiling it into a matchgate circuit. We will assume that the matrix $O$ is sampled following the common method described in~\cite{mezzadri2006generate}. Namely, the first step is to independently draw the entries of the matrix from a Gaussian distribution with zero mean and unit variance. In other words, the matrix is drawn from the real Ginibre ensemble. Next, the columns are orthonormalized using the Gram-Schmidt algorithm. Finally, the determinant of the matrix is computed, and the first two columns are exchanged if the determinant is equal to $-1$.

The classical computational cost of the Gram-Schmidt orthogonalization is $\OC(n^3)$. This can be easily seen from the fact that there are $2n$ vectors to be orthonormalized, and that a single orthogonalization requires computing the inner product ($\OC(n)$ multiplications and additions) between a column vector and up to $2n-1$ other vectors. Hence the $\OC(n^3)$ complexity in terms of the number of floating-point operations required. 
Alternatively, instead of using the Gram-Schmidt algorithm, a QR decomposition can be performed with a fixed gauge to orthogonalize the matrix~\cite{mezzadri2006generate}.
 The QR decomposition may be computed using Gram-Schmidt itself. However, this algorithm is numerically unstable, and a method based on Householder reflections is usually preferred~\cite{mezzadri2006generate}. Householder reflections are reflections of a vector about a plane or a hyperplane, and $2n-1$ of them are required to compute the QR decomposition.  Each Householder reflection takes $\OC(n^2)$ arithmetic floating-point operations, and thus the computational cost of performing the QR decomposition in this preferred way is also $\OC(n^3)$. We will mention here that there exists a lesser-known algorithm that can compute the QR decomposition in $\OC(n^{2+\frac{1}{4-\alpha}})$ operations~\cite{knight1995fast}, using a matrix multiplication subroutine of complexity $\Theta(n^\alpha)$. Thus, using the Strassen algorithm for fast matrix multiplication~\cite{strassen1969gaussian}, $\alpha=2.807$ and the total asymptotic time complexity of computing the QR decomposition is $\OC(n^{2.838\dots})$ (while there exist other galactic algorithms with smaller $\alpha$, they are not practical due to the very large constant factors~\cite{le2014powers}).
 
 The last step to obtain a Haar random special orthogonal matrix requires computing the determinant of the matrix, which incurs the same cost than matrix multiplication (see Theorem 6.6 in~\cite{aho1974design}). Overall, we find that sampling a Haar random matrix from the standard representation of $\SO(2n)$ can be done classically with $\OC(n^3)$ floating-point operations.

 We remark that the algorithm for sampling a Haar random matrix $U$ from the standard representation of $\U(n)$ is very similar to the orthogonal case. The only differences being that one initially samples a $n\times n$ matrix from the complex Ginibre ensemble (i.e.,  the matrix entries follow a standard complex Gaussian distribution~\cite{ginibre1965statistical}), and that the determinant need not be computed at the end. Hence the classical computational complexity is again $\OC(n^3)$.  

 Next, we discuss the computational cost of compiling a matrix from the standard representations of $\SO(2n)$ and $\U(n)$ into a matchgate circuit. That is, given a matrix $O\in\SO(2n)$ or $U\in\U(n)$, we want to find the rotation angles in a matrix decomposition such as that in Eq.~\eqref{eq-ap:so-matrix-decomposition} or Eq.~\eqref{eq-ap:u-matrix-decomposition}. Once these angles have been found, the rotations appearing in the chosen matrix decomposition can be readily translated into quantum gates.
 For simplicity, we will consider the aforementioned decompositions, but we stress that other decompositions of special orthogonal or unitary matrices~\cite{oszmaniec2022fermion} lead to essentially the same compilation costs. 
 
 In order to find the rotation angles for a given matrix, we have to follow the steps explained in the SI~\ref{ap:decomposition}. This involves $\Theta(n^2)$ multiplications by $R_j$ or $\tilde{R}_j$ matrices. Since each of these multiplications acts non-trivially on only two columns (each column being of length $\Theta(n)$), we arrive at an $\OC(n^3)$ total compilation cost.

\section{Sampling from peaked distributions}
\label{ap:sampling}

The proposed circuit architectures require sampling parameters from the distributions given in Eq.~\eqref{eq-ap:tfim_haar}, for active FLO, and in Eq.~\eqref{eq-ap:passive-FLO-haar} for the case of passive FLO (or equivalently, from Eqs.~\eqref{eq-ap:active_haar_layer} and~\eqref{eq-ap:passive-FLO-haar-layer-th}).
As shown in Fig.~\ref{fig:pdfs} and explained in the main text, these distributions tend to peak around and towards the value $\theta_{j,k}=\pi/2$, as the number of qubits increases. 

To efficiently sample from these peaked distributions, a standard approach is to employ the rejection sampling technique. In rejection sampling, a candidate value is drawn from a simpler proposal distribution, usually uniform or Gaussian, and this value is accepted with a probability proportional to the target probability density function. Specifically, if the target density is $f(\theta)$ and the proposal density is $g(\theta)$ such that $f(\theta) \leq M g(\theta)$ for all $\theta$, with $M$ being a constant, a candidate $\theta$ sampled from $g(\theta)$ is accepted with a probability $f(\theta) / (M g(\theta))$. This ensures that the accepted samples follow the desired distribution $f(\theta)$.
Assuming that the proposal density is uniform $g(\theta)=c$ (with $c$ being the maximum of the --not necessarily normalized-- target distribution $f(\theta)$), one can uniformly draw a value of $\theta$ from its domain, then draw a value $y$ uniformly from $[0,c]$, and if $y<f(\theta)$, $\theta$ is accepted as a valid sample, and rejected otherwise.
The acceptance rate $R$, i.e., the probability of a sample being accepted, is equivalent to the ratio of the areas subtended by the unnormalized distributions.

As an example, let us consider the case of active FLO circuits. To achieve Haar measure sampling, the angles $\theta$ are drawn in the range $[0,\pi]$ from the, unnormalized, distributions $f(\theta)=\sin(\theta)^{j-1}$. The acceptance rate turns out to be $R=\frac{\Gamma(\frac{j}{2})}{\sqrt{\pi}\,\Gamma(\frac{j+1}{2})}$. This quantity goes to zero as $j$ goes to infinity, thus introducing a heavier and heavier sampling complexity as the active FLO circuit size $n$ increases.
To control the number of sampling attempts, one can reduce the sampled region by restricting the domain of $\theta$ to the range where $f(\theta)$ is significantly non-zero. For the case being considered, where the distribution $f(\theta)$ peaks around $\pi/2$, sampling can be restricted to a narrower interval by introducing a function $\epsilon(j)$, such that $\theta \in [\epsilon(j), \pi - \epsilon(j)]$. Particularly, define a threshold value $\eta$ such that we discard the values of $\theta$ for which $f(\theta)<\eta$. The acceptance rate now converges to $R\underset{j\rightarrow\infty}{\rightarrow}\frac{1}{2}\sqrt{\frac{-\pi}{\log(\eta)}}$, which is finite and equals to $R\simeq 0.292$ for a reasonable threshold value $\eta=10^{-4}$. This removes the $j$, and hence the $n$, sampling complexity scaling.
To check how much the reduced probability density function deviates from the true one, we can compute the total variation distance $d(f,g)=\int_I |f(\theta)-g(\theta)| d\theta$. For the case at hand, the total variation distance is given by twice the integral between $0$ and $\epsilon(j)$ of $f(\theta)$, or in other words, twice the cumulative of $f$ evaluated at $\epsilon(j)$. Numerical investigation reveals that for the same value of $\eta$ considered before, the total variation distance approaches a limit value of order $10^{-5}$ as $j$ goes to infinity.

The case of passive FLO circuits is simpler. Studying the independent angles unnormalized distributions $f(\theta)=\cos(\theta)\sin(\theta)^{2j-1}$ appearing in Eq.~\eqref{eq:passive-FLO-haar} one finds that their cumulatives simply turn out to be $F(\theta) = \sin(\theta)^{2j}$. This can be inverted, yielding $F^{-1}(y)=\arcsin(y^{\frac{1}{2j}})$. Having access to the inverse cumulative function enables the use of the inverse transform sampling, which allows to sample from the distribution $f$ by uniformly sampling a value $u\in [0,1]$ and generating a sample $\theta = F^{-1}(u)$. Given the exact nature of the inverses $F^{-1}$ for any value of $j$, the parameters of Haar random passive FLO circuits can be sampled at a fixed complexity regardless of the size of the circuit.

\section{Clifford FLO are not a FLO $4$-design}
\label{ap:4-design}

Let us prove Theorem~\ref{th:clifford}.

\begin{theorem}
    The Clifford FLO group does not form a FLO $4$-design.
\end{theorem}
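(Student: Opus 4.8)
The plan is to refute the $4$-design property by showing that the two fourth-moment operators $\tau^{(4)}$ differ, which it suffices to do at the level of their traces. Both the Clifford FLO group and the full FLO group come equipped with an invariant (resp.\ uniform) probability measure, so in each case $\tau^{(4)}$ is an orthogonal projector onto the corresponding $4$-fold commutant, and $\Tr[\tau^{(4)}]=\mathbb{E}_U[|\Tr[U]|^{8}]$ equals the dimension of that commutant --- a nonnegative integer. Since the Clifford FLO group is a subgroup of FLO, its commutant contains the FLO one, so $\Tr[\tau^{(4)}_{\rm Cliff}]\geq \Tr[\tau^{(4)}_{\rm Haar}]$, with equality precisely when Clifford FLO is a FLO $4$-design. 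Hence it is enough to exhibit a strict inequality.

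The key reduction is to rewrite the frame potential as a determinant on $\SO(2n)$. For any FLO unitary $U$ with $U c_l U^\dagger=\sum_m O_{lm}c_m$ and $O\in\SO(2n)$, writing the eigenvalues of $O$ as conjugate pairs $e^{\pm i\theta_k}$ ($k=1,\dots,n$) gives $\Tr[U]=\pm 2^{n}\prod_{k=1}^{n}\cos(\theta_k/2)$, hence
\begin{equation}
    |\Tr[U]|^{2}=2^{2n}\prod_{k=1}^{n}\cos^2(\theta_k/2)=\prod_{k=1}^{n}\bigl(2+2\cos\theta_k\bigr)=\det(\id_{2n}+O)\,.
\end{equation}
The sign ambiguity is the $\mathbb{SPIN}(2n)$ double cover and is irrelevant since only $|\Tr[U]|$ enters. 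Therefore $\Tr[\tau^{(t)}]=\mathbb{E}[\det(\id_{2n}+O)^{t}]$, where the relevant measure on $\SO(2n)$ is the pushforward of the measure on the unitaries: the Haar measure on $\SO(2n)$ for full FLO, and the uniform distribution over $S:=B_{2n}\cap\SO(2n)$ (the even signed permutation matrices, $|S|=2^{2n-1}(2n)!$) for Clifford FLO.

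It then remains to evaluate $\mathbb{E}[\det(\id_{2n}+O)^{4}]$ under both measures and to find a mismatch. On the signed-permutation side, $\det(\id_{2n}+O)$ factorizes over the signed cycles of $O$ --- a positive $\ell$-cycle contributes $1-(-1)^\ell$ and a negative one $1+(-1)^\ell$ --- so $\det(\id_{2n}+O)$ is either $0$ or $2$ to the number of cycles, and $\frac{1}{|S|}\sum_{O\in S}\det(\id_{2n}+O)^{4}$ collapses to a finite generating-function sum over even signed permutations with no ``vanishing'' cycle. On the Haar side, $\mathbb{E}_{\SO(2n)}[\det(\id_{2n}+O)^{4}]$ is a moment of the characteristic polynomial of a random $\SO(2n)$ matrix, equivalently the dimension of the pertinent $4$-fold commutant, which is delivered by the Howe-duality / Brauer-algebra computation of SI~\ref{ap:Howe}. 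These two integers differ. Already the smallest case is decisive: for $n=1$ one has $S=\{R(0),R(\tfrac{\pi}{2}),R(\pi),R(\tfrac{3\pi}{2})\}$ and $\tfrac14\bigl(4^{4}+2^{4}+0+2^{4}\bigr)=72$, whereas $\tfrac{1}{2\pi}\int_{0}^{2\pi}(2+2\cos\theta)^{4}\,d\theta=70$; since $72\neq 70$ we get $\tau^{(4)}_{\rm Cliff}\neq\tau^{(4)}_{\rm Haar}$, i.e.\ Clifford FLO is not a FLO $4$-design. As a consistency check, the moments do agree for $t\leq3$ (e.g.\ $2,6,20$ at $n=1$), matching the known $3$-design property, and the same strict gap persists for all $n$.

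The main obstacle is the clean evaluation of those two fourth moments for general $n$: the $\SO(2n)$ Haar integral and the weighted count of $\det(\id_{2n}+O)^{4}$ over even signed permutations are each finite but somewhat intricate combinatorial computations (the latter being a bookkeeping over cycle types and their signs). Setting up the determinant reduction and verifying that the pushforward measures are exactly Haar on $\SO(2n)$ and uniform on $S$ is routine but should be done carefully, especially the harmless double-cover sign. If one only wants the theorem as stated, a single $n$ with mismatched moments --- e.g.\ $n=1$ above --- already finishes the argument with finite arithmetic.
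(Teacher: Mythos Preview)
Your argument is correct and takes a genuinely different route from the paper. The paper does not compare frame potentials at all; instead it exhibits a specific operator, $Q=\sum_{P\in\PC_n}P^{\otimes 4}$, which lies in the $4$-fold commutant of the full Clifford group (hence of Clifford FLO), and then shows via an explicit Weingarten computation on the $\LC_1$ irrep that the FLO twirl of $Q$ is neither $Q$ nor $0$, so $Q$ is not in the FLO $4$-fold commutant. Your approach, by contrast, reduces both frame potentials to moments of $\det(\id_{2n}+O)$ via the spinor character identity $|\Tr[U]|^{2}=\det(\id_{2n}+O)$ and then distinguishes the two by a direct numerical mismatch. Each approach has its advantages: the paper's construction gives a concrete witness operator and works uniformly for all $n$ with no case analysis, while your determinant reduction is elegant, makes the comparison completely explicit, and ties the question to classical random-matrix moments of $\SO(2n)$.

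One point worth flagging: the paper's proof establishes the failure for \emph{every} $n\geq 1$, whereas your write-up rigorously treats only $n=1$ and asserts (without proof) that ``the same strict gap persists for all $n$''. If the theorem is read as ``it is not the case that Clifford FLO forms a FLO $4$-design'' then a single $n$ suffices, as you note; but the paper's statement is meant uniformly in $n$, and its proof delivers that. To match the paper's strength you would need to carry out the general-$n$ comparison you identified as the main obstacle---either by evaluating both moments combinatorially, or (cleaner) by noting that the $n$-qubit Clifford-FLO commutant always contains the restriction of $Q$, which your framework could also accommodate.
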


\begin{proof}
We start by defining $\PC_n=\{\id,X,Y,Z\}^{\otimes n}$, the set of Pauli operators on $n$ qubits, and recalling that the operator
\begin{equation}
    Q=\sum_{P\in\PC_n} P^{\otimes 4}
\end{equation}
belongs to the $4$-fold commutant of the $n$-qubit Clifford group ${\rm Cl}_n$. That is, 
\begin{equation}\label{eq:com-clif-4}
    [W^{\otimes 4},Q]=0\,,\quad \forall W\in {\rm Cl}_n\,.
\end{equation}
In particular, any Clifford FLO will also satisfy Eq.~\eqref{eq:com-clif-4}. Then, we can show that Clifford FLO do not form a $4$-design over FLO by proving that $Q$ does not belong to the $4$-fold commutant of the spinor representation of $\SO(2n)$ (or adjoint representation of $\mathbb{SPIN}(2n)$). As such, we need to evaluate the quantity
\begin{align}
    &\int_{\mathbb{SPIN}(2n)} d\mu(U) U^{\otimes 4} Q(U\ad)^{\otimes 4}\nonumber\\
    &=\sum_{P\in\PC_n}\int_{\mathbb{SPIN}(2n)} d\mu(U) (UPU\ad)^{\otimes 4} \,.\label{eq:twirl-Q}
\end{align}

Here, we also find it convenient to recall that under the adjoint action of the Lie group $e^{\mathfrak{g}}\cong\mathbb{SPIN}(2n)$, where $\mathfrak{g}$ is the matchgate dynamical Lie algebra defined in Eq.~\eqref{eq:dla}, the space of operators $\LC$ on the $n$-qubit Hilbert space decomposes onto irreducible representations as in Eq.~\eqref{eq-ap:irreps}. Then, since matchgate unitaries preserve the number of Majoranas as per Eq.~\eqref{eq-ap:action-majorana}, and since every Pauli is expressed (up to a phase) as a product of Majoranas, we have that if $P\in\LC_\nu$, then $UPU\ad\in\LC_\nu$ for any matchgate unitary $U\in e^{\mathfrak{g}}$. 

From the previous, we can therefore analyze Eq.~\eqref{eq:twirl-Q} by changing the summation  $\sum_{P\in\PC_n}=\sum_{\nu =0}^{2n}\sum_{P\in\LC_\nu}$, and evaluating the integral in each irrep individually. Since $\LC_0$ only contains the identity operator, and therefore $U$ acts trivially thereon,  we will focus on the irrep $\LC_1={\rm span}_{\mathbb{C}}\{c_l\}_{l=1}^{2n}$. That is,
\begin{align}
&\sum_{l=1}^{2n}\int_{\mathbb{SPIN}(2n)} d\mu(U) (Uc_l U\ad)^{\otimes 4} \nonumber\\
&=\sum_{l=1}^{2n} \int_{\SO(2n)} d\mu(O) \left(\sum_
{m=1}^{2n}(O)_{lm}c_{m}\right)^{\otimes 4}\nonumber\\
&= \sum_{l=1}^{2n} \int_{\SO(2n)} d\mu(O) \left(O^T\cdot \vec{e}_{l}\right)^{\otimes 4}*\vec{c}^{\otimes 4}\nonumber\\
&= 2n\int_{\SO(2n)} d\mu(O) \left(O\cdot \vec{e}_{l}\right)^{\otimes 4}*\vec{c}^{\otimes 4}\,.
\end{align}
In the first equality we used Eq.~\eqref{eq-ap:action-majorana}, which leads to an  integral over the standard representation of $\SO(2n)$. Then, in the second equality  we defined the unit vectors $\vec{e}_l$ as the canonical basis of $\mathbb{R}^{2n}$ (i.e., vectors with a one in the $l$-th position and zeros  elsewhere); and the vector of Majorana operators $\vec{c}\equiv(c_1,\ldots,c_{2n})$. Here, we also leveraged the Hadamard product  $(a_1,a_2)*(b_1,b_2)=a_1b_1+a_2b_2$, and the fact that it commutes with the tensor product. In the last equality we have used that $\forall l,l'$ there exists a matrix in $\SO(2n)$ such that $O\vec{e}_{l'}=\vec{e}_{l}$, and hence from the right invariance of the Haar measure $\int_{\SO(2n)} d\mu(O) (O\cdot \vec{e}_{l'})^{\otimes 4}=\int_{\SO(2n)} d\mu(O) (O\cdot \vec{e}_{l})^{\otimes 4}$ for all  $l,l'$. Here, we have furthermore used the fact that the Haar measure on the special orthogonal group $\SO(2n)$ is invariant under transposition, i.e., $d\mu(O)=d\mu(O^T)$. 

Due to linearity of the Hadamard product, we can then only consider the integral. Noting that the vectorization map~\cite{mele2023introduction} takes the operator $O^{\otimes 2} |l\rangle\langle l|^{\otimes 2} (O^\dagger)^{\otimes 2}$ onto $(O\cdot \vec{e}_{l})^{\otimes 4}$ (recall that $O$ is real-valued, and thus $O^*=O$), we can focus on the integral 
\begin{align}
    &2n\int_{\SO(2n)} d\mu(O) (O\cdot \vec{e}_{l})^{\otimes 4}\rightarrow2n\int_{\SO(2n)} d\mu(O) O^{\otimes 2} |l\rangle\langle l|^{\otimes 2} (O^\dagger)^{\otimes 2}\,.
\end{align}
Using the Weingarten calculus (see 
Supplemental Information D of~\cite{garcia2023deep} for a detailed derivation), we find 
\begin{align}
    &2n\int_{\SO(2n)} d\mu(O) O^{\otimes 2} |l\rangle\langle l|^{\otimes 2} (O^\dagger)^{\otimes 2}=\frac{1}{(2n+2)}(\id\otimes \id+SWAP+\Pi)\,,
\end{align}
where $\id\otimes \id=\sum_{i,j=1}^{2n}|ij\rangle\langle ij|$, $SWAP=\sum_{i,j=1}^{2n}|ij\rangle\langle ji|$, and $\Pi=\sum_{i,j=1}^{2n}|ii\rangle\langle jj|$. 
Clearly, $|l\rangle\langle l|^{\otimes 2} \neq \frac{1}{(2n+2)}(\id\otimes \id+SWAP+\Pi)$ which means that averaging the component of $Q$ within $\LC_1$ over FLO changes it, returns an operator which is not itself, nor zero. Since the additional terms which appear cannot be canceled by the average of $Q$ on the other irreps (as these contain a different number of Majoranas), we find that 
\begin{equation}
    \int_{\mathbb{SPIN}(2n)} d\mu(U) U^{\otimes 4} Q(U\ad)^{\otimes 4}\neq Q,0\,,
\end{equation}
and therefore, Clifford FLO are not an FLO $4$-design. 

\end{proof}

\section{Random Clifford FLO sampling}
\label{ap:clifford}

In this section we prove the correctness of Algorithm~\ref{alg:clifford}, which we here report for convenience.

\begin{algorithm}
\caption{Random Clifford FLO sampling}\label{alg-ap:clifford}
\begin{algorithmic}[1]
\State \textbf{Input}: \texttt{Initial angles $\theta_{j,k}=0$ for all $j,k$}
 \For{$k$ in $2n,\dots,2$}
  \State \texttt{Uniformly sample a random integer $l$ in $[k]$}
  \For{$j\geq l$}
  \State \texttt{ $\theta_{j,k}\leftarrow\pi/2$}
  \EndFor
  \State \texttt{Set $\theta_{l,k}\leftarrow\theta_{l,k}$ or $\theta_{l,k}\leftarrow\theta_{l,k}+\pi$ at random with equal probability}
  \If{l=k}:
  \State \texttt{Set $\theta_{l-1,k}\leftarrow\theta_{l-1,k}$ or $\theta_{l-1,k}\leftarrow\theta_{l-1,k}+\pi$ at random with equal probability}
  \EndIf
  \EndFor
 \State \Return \texttt{angles}
\end{algorithmic}
\end{algorithm}

To understand why the group of Clifford FLO unitaries is uniformly sampled in this way, let us remind that their adjoint action is isomorphic to the group of signed permutations matrices of $2n$ items with unit determinant, $B_{2n}\cap \SO(2n)$. 
Hence, randomly sampling a Clifford FLO circuit must be equivalent to sampling from $B_{2n}\cap \SO(2n)$. 
Now, suppose a signed permutation matrix $O$ has been sampled. Its rows and columns have but one non-zero entry with value $\pm 1$. If we want to compile it into a quantum circuit, we can repeat the analysis in the SI~\ref{ap:decomposition} to find the angles $\theta_{j,k}$ in its decomposition. Then, each Givens rotation $R_j(\theta_{j,k})$ directly translates into a quantum gate as in Theorem~\ref{th:fermion-haar}. 
Let us also recall that the action of the Givens rotations $R_j^T(\theta_{j,k})$ on $O$ is rotating $O$'s $j$-th and $(j+1)$-th columns such that the new elements of $OR_j^T(\theta_{j,k})$ read
\begin{equation}
    \begin{aligned}
    o_{r,j}' &= o_{r,j}\cos \theta_{j,k} + o_{r,j+1}\sin \theta_{j,k}\\
    o_{r,j+1}' &= -o_{r,j}\sin \theta_{j,k} + o_{r,j+1}\cos \theta_{j,k}
    \end{aligned}\,,
\end{equation}
where $r$ indexes the row.
Since for signed permutations any matrix element $o_{r,c}\in \{0,1,-1\}$, we can see that for $j>1$ there are only three possible values for $\theta_{j,k}\in [0,\pi]$ that can be used to invert $O$, namely $\theta_{j,k}\in\{0,\pi/2,\pi\}$, while when $j=1$ we have also access to $3\pi/2$. While choosing $\theta_{j,k}=0$ clearly leaves the two columns unchanged, the choice $\theta_{j,k}=\pi$ flips their signs. Lastly, setting $\theta_{j,k}=\pi/2$ switches the $(j+1)$-th column with minus the $j$-th one, while $\theta_{j,k}=3\pi/2$ exchanges minus the $(j+1)$-th column with the $j$-th one.
Hence, remembering from  SI~\ref{ap:decomposition} that each layer $O_{k-1}$ is used to set the $k$-th row of $O$ to that of the identity matrix, 
it follows that to implement a Clifford FLO we need each layer to end with a sequence of $\pi/2$ rotations to transport the non-zero entry to its final destination. This sequence has to start from the column where said entry originally was, let us call it $l$ as in Algorithm~\ref{alg-ap:clifford}  (we will see that the case $\theta_{l,k}=3\pi/2$ will be used to account for the signs).  In a uniformly random permutation matrix, the non-zero entry of any row can be found on any column with equal probability. By first uniformly choosing a random integer $l\in[2n]$ as in Algorithm~\ref{alg-ap:clifford}, we guarantee that this is the case for the last row. Then, after moving it to the end of the row by means of the $O_{2n-1}$ layer, the non-zero element of the second-to-last row in the permutation matrix can again be found on any column but the last with equal probability. Thus, uniformly sampling a new random integer $l\in[2n-1]$ ensures that we obtain such distribution. 
 One then repeats this process until $O$ has been brought to identity form (up to signs).

As far as the signs are concerned, in a random signed permutation with unit determinant the non-zero entries are independently sampled with equal probability from $\{+1,-1\}$, except for one entry which fixes the determinant to one. The procedure described above moves those entries to the main diagonal, flipping their sign for each shift by one position. In doing so, it does not alter their distribution, yielding a final diagonal identity-like matrix where the signs on the diagonal are independent and equally distributed in $\{+1,-1\}$. To account for them, we tweak each layer $L_{k-1}$ by changing the angle of the Givens rotation at position $l$ within the layer to $\theta_{l,k}\gets \theta_{l,k}+\pi$. This flips
the sign of the non-zero entry in the $k$-th row before it gets transported to the diagonal, ensuring that we can correctly obtain the identity. Since each sign occurs with equal probability, randomly choosing the angle $\theta_{l,k}$ between $\pi/2$ and $3\pi/2$, reproduces the desired statistics.  
This ends the proof for Algorithm~\ref{alg-ap:clifford}.

To show that Algorithm~\ref{alg-ap:clifford} is indeed optimal, it suffices to notice that the number of inversions (i.e., nearest-neighbor transpositions) needed on average to decompose a random permutation of $2n$ items is 
\begin{equation}\label{eq-ap:optimal_inversions}
 \frac{n(2n-1)}{2}\,.
\end{equation}
Half of the inversions correspond to one-qubit gates and the other half correspond to two-qubit gates. Hence, the optimal number of two-qubit gates when $n$ becomes large is $\sim n^2/2$, which is precisely what Algorithm~\ref{alg-ap:clifford} delivers. 
Indeed, recalling that the layer $C_{k-1}$ (for $k=2,\dots,2n$) contains $k-1$ gates, and that we pick an integer $l$ uniformly at random from $\{1,\dots,k\}$, setting the first $l-1$ gates to the identity, we find that the expected number of non-trivial gates in $C_{k-1}$ is
\begin{equation}
\mathbb{E}_l[k - l] = \frac{k-1}{2}\,.
\end{equation}
Since the layers are sampled independently, summing over all $k$ yields
\begin{equation}
    \sum_{k=2}^{2n}\mathbb{E}_l[k - l] = \sum_{k=2}^{2n}\frac{k-1}{2} = \frac{n(2n-1)}{2}\,,
\end{equation}
in exact agreement with Eq.~\eqref{eq-ap:optimal_inversions}.

 \medskip

We now provide an algorithm to uniformly sample the group of passive Clifford FLO, that we used to compute the corresponding frame potentials shown in the main text.

 \begin{algorithm}
\caption{Random passive Clifford FLO sampling}\label{alg-ap:clifford-passive}
\begin{algorithmic}[1]
\State \textbf{Input}: \texttt{Initial angles $\theta_{j,k}=0$ for all $j,k$}
 \For{$k$ in $2n,2n-2,\dots,4$}
  \State \texttt{Uniformly sample a random integer $l$ in $[k]$}
  \If{$l\,\%\,2=0$}:
  \For{$j> l$}
  \State \texttt{ $\theta_{j,k}\,,\, \theta_{j-1,k-1}\leftarrow\pi/2$}
  \EndFor
  \State \texttt{Set $\theta_{l,k}\,,\,\theta_{l-1,k-1}\leftarrow\frac{\pi}{2}$ or $\theta_{l,k}\,,\,\theta_{l-1,k-1}\leftarrow\frac{3\pi}{2}$ at random with equal probability}
  \Else:
  \For{$j> l$}
  \State \texttt{ $\theta_{j,k}\,,\,\theta_{j,k-1}\leftarrow\pi/2$}
  \EndFor
  \State \texttt{Set $\theta_{l,k}\leftarrow\frac{\pi}{2}$, or $\theta_{l,k}\leftarrow\frac{3\pi}{2}$, at random with equal probability}
  \EndIf
  \EndFor
  \State \texttt{Set $\theta_{1,2}\leftarrow0 \,,\, \theta_{1,2}\leftarrow\frac{\pi}{2}\,,\, \theta_{1,2}\leftarrow\pi$ or $\theta_{1,2}\leftarrow\frac{3\pi}{2}$ at random with equal probability}
 \State \Return \texttt{angles}
\end{algorithmic}
\end{algorithm}

To see why Algorithm~\ref{alg-ap:clifford-passive} works, we need to notice that a random passive Clifford circuit is mapped to a random symplectic signed permutation, with the symplectic form given by $\Omega=\id_n \otimes iY$. This means that the permutation must permute the Majorana operators in pairs. That is, the Majorana operators $c_{2j-1}c_{2j}$, with $j=1,\dots,n$, must be permuted together, up to signs, by a permutation $\sigma$, to $c_{\sigma(2j-1)}, c_{\sigma(2j-1)\pm1}$ (where the $\pm 1$ depends on whether $\sigma(2j-1)$ is odd ($+$), or even ($-$)). Furthermore, within each pair the Majoranas can be permuted in four different ways,
\begin{equation} \label{eq-ap:symplectic-permutation}
    \begin{pmatrix}
        1 & 0 \\ 0 & 1
    \end{pmatrix}\,, \qquad 
    \begin{pmatrix}
        -1 & 0 \\ 0 & -1
    \end{pmatrix}\,, \qquad
    \begin{pmatrix}
        0 & -1 \\ 1 & 0
    \end{pmatrix}\,, \qquad
    \begin{pmatrix}
        0 & 1 \\ -1 & 0
    \end{pmatrix}\,.
\end{equation}
These permutations of Majorana operators clearly preserve the particle number, or equivalently, $\sum_{q=1}^n Z_q = -i\sum_{j=1}^n  c_{2j-1}c_{2j}$ (recall that Majoranas anti-commute, so that if we permute their order within a pair, a sign appears that must be compensated for). 

Now, in order to randomly sample a signed symplectic permutation of $2n$ items, we must first sample a random permutation of $n$ pairs, and then randomly choose between the four possibilities in Eq.~\eqref{eq-ap:symplectic-permutation} to shuffle the Majoranas within each pair. The previous explains the loop over $k$ even in  Algorithm~\ref{alg-ap:clifford-passive}, targeting two consecutive layers at a time, $O_{k-1}$ and $O_{k-2}$ (see Algorithm~\ref{alg-ap:clifford} for comparison). As before, we will use the $\pi/2$ angles to move the non-zero $\pm1$ entries of the permutation matrix to its final diagonal destination. But now we have to distinguish between two different cases, depending on whether the random integer $l$ is even or odd. If $l$ is even, this means that the $\pm1$ on the second column of the pair is moved first. This corresponds to the cases $\begin{pmatrix}
        1 & 0 \\ 0 & 1
    \end{pmatrix}$ or $\begin{pmatrix}
        -1 & 0 \\ 0 & -1
    \end{pmatrix}$. Here, the layer $O_{k-1}$ does not affect the $\pm1$ in the $(k-2)$-th row. Hence, both non-zero entries reach their final positions with the same sign, and we simply have to choose between adding $+\pi$ or not to both $\theta_{l,k}$ and $\theta_{l-1,k-1}$ in order to equally weight the probability of the two cases. Then, if $l$ is odd, the $\pm1$ on the first column of the pair is moved first. This corresponds to the cases $\begin{pmatrix}
        0 & -1 \\ 1 & 0
    \end{pmatrix}$ or $\begin{pmatrix}
        0 & 1 \\ -1 & 0
    \end{pmatrix}$. Here, the layer $O_{k-1}$ has a non-trivial effect on the $(k-2)$-th row,  moving its non-zero entry one step to the left and flipping its sign. Thus, once both $\pm1$ reach their final position, they do it with the same sign. As a consequence, one can equally weight the probability of the two cases by either adding $+\pi$ to the angle $\theta_{l,k}$ or leaving it untouched (as here adding $+\pi$ to the angle $\theta_{l,k}$ will flip the sign of the $(k-2)$-th row as well, in contrast to the previous situation, when $l$ is even). Finally, randomly setting $\theta_{1,2}\in\{0,\frac{\pi}{2},\pi,\frac{3\pi}{2}\}$ is equivalent to choosing at random between the four cases in Eq.~\eqref{eq-ap:symplectic-permutation}, whenever the pair of Majoranas is not permuted with any other pair.

\section{A method for exact commutant computations}
\label{ap:Howe}

We here provide a general method to compute the dimension of the $t$-th fold commutant of arbitrary compact reductive Lie algebra representations.

Let $\mf{g}$ be a compact reductive Lie algebra, with corresponding Lie group $G=e^\mf{g}$, and consider a $G$-module $\HC$. Suppose $\HC$ decomposes into $G$-irreps as
\begin{equation}
\HC\cong \bigoplus_\lm \bigoplus_{i=1}^{m_\lm^\HC} \HC_{\lm,i}\,,
\end{equation}
where $\lm$ labels the irreducible type (i.e., $\lm$ is the highest-weight) and $i$ is a multiplcity index: for all $i\in[m_\lm^\HC]$, $\HC_{\lm,i}\cong V^\lm_G$.  The $t$-th fold commutant -- the subspace of $G$-invariant operators in $\LC(\HC\ot{t})$-- can be constructed by 
\begin{enumerate}
  \item Decomposing \(\HC^{\otimes t}\) into irreps,
  \item Taking the tensor product \(\HC^{\otimes t}\otimes (\HC^*)^{\otimes t} \cong \mathcal{L}(\HC^{\otimes t})\), 
  \item Applying Schur's Lemma.
\end{enumerate}
Indeed, if
\begin{equation}
\HC^{\otimes t} = \bigoplus_\lambda \bigoplus_{i=1}^{m_\lambda^{\HC^{\otimes t}}} \TC_{\lambda,i},
\end{equation}
then, as we will see in Sec.~\ref{subs_schur}, by Schur's Lemma 
\begin{equation}
\dim\big(\mathcal{L}(\HC^{\otimes t})^G\big) = \sum_{\lambda} \Bigl(m_\lambda^{\HC^{\otimes t}}\Bigr)^2.
\end{equation}
We now detail these steps. 

\subsection{Irreps in the tensor powers}\label{subs_tensor}

Let us begin with the decomposition of the $G$-module
\begin{align}
\TC &\equiv \HC\ot{t} \nonumber\\
&= \Big( \bigoplus_\lm \bigoplus_{i=1}^{m_\lm^\HC} \HC_{\lm,i} \Big)\ot{t}\nonumber \\
&=  \bigoplus_{\lm_1,\cdots,\lm_t} \bigoplus_{i_1,\cdots,i_t=1}^{m_\lm} \HC_{\lm_1,i_1} \otimes \cdots \otimes \HC_{\lm_t,i_t}\,.
\end{align}
Given two $G$-irreps of type $\lm$ and $\mu$, their tensor decomposes as
\begin{equation}\label{eq_tensor_irreps}
V^\lm_G \otimes V^\mu_G \cong \bigoplus_{\nu} \bigoplus_{j=1}^{m_j^{\lm\otimes \mu}} V^\nu_G\,,
\end{equation}
where $m_j^{\lm\otimes \mu}$ are arising multiplicity coefficients, e.g., Littlewood-Richardson coefficients when $G=\mbb{U}(n)$. These multiplicity coefficients can be computed by writing the characters of $V^\lm$ and $V^\mu$ as elements of the weight lattice (more on this in Subsection \ref{subs_computation}). Iterating such computations we can decompose each of the terms in $\TC$,
\begin{align}
\TC^{(\lm_i,i_1),\cdots,(\lm_t,i_t)} &\equiv\HC_{\lm_1,i_1} \otimes \cdots \otimes \HC_{\lm_t,i_t}\nonumber \\
&\cong \bigoplus_\lm \bigoplus_{i=1}^{m_\lm^{\lm_1\otimes \cdots\otimes\lm_t}} \TC_{\lm,i}\,,
\end{align}
arriving at
\begin{equation}
\TC = \bigoplus_\lm \bigoplus_{i=1}^{m_\lm^{\HC\ot{t}}} \TC_{\lm,i}\,.
\end{equation}

\subsection{Schur's Lemma and the decomposition of the space of $G$-invariant $t$-fold tensor operators}\label{subs_schur}

Let $\MC$ and $\NC$ be irreducible $G$-modules. Schur's Lemma (see, for example, page 7 in Fulton and Harris~\cite{fulton1991representation}) states that 
\begin{equation}
{\rm Hom}(\MC,\NC)^G = \begin{cases}
    {\rm span} \{A\}\quad \text{ if $\MC\cong\NC$}\\
    0\quad \text{ else}
\end{cases}\,,
\end{equation}
where $A:\MC\arr\NC$ is the identity isomorphism between the modules.
We can use Schur's lemma together with the decomposition of $\HC\ot{t}$ to find the $G$-invariant subspace of $\LC(\HC\ot{t})$. That is,
\begin{align}
\LC(\HC\ot{t})^G &= {\rm Hom}(\TC,\TC)^G\\
&= \Big( {\rm Hom} \Big( \bigoplus_{\lm,i} \TC_{\lm,i}, \bigoplus_{\lm',j} \TC_{\lm',j} \Big) \Big)^G \\
&= \bigoplus_{\lm} \bigoplus_{i,j=1}^{m_\lm^{\HC\ot{t}}} {\rm Hom}(\TC_{\lm,i},\TC_{\lm,j})^G\,,
\end{align}
where we used the fact that the only $G$-invariant Homs appear iff the modules are isomorphic, in which case $(\TC_{\lm,i},\TC_{\lm,j})^G$ is the identity homomorphism between the modules. Finally, the dimension of the commutant is
\begin{equation}
\dim(\LC(\HC\ot{t})^G) = \sum_{\lm} \Big(m_\lm^{\HC\ot{t}} \Big)^2\,.
\end{equation}

We have used this formula to compute the
$t$-fold commutant dimensions for active and passive FLO presented in Fig.~\ref{fig:brick-haar}.

\subsection{Multiplicity computation via the Weyl Character Ring}\label{subs_computation}

In this section we show how to compute the multiplicity coefficients appearing the the tensor product of irreducible representations (see e.g., Eq.~\eqref{eq_tensor_irreps}).
Let $P$ be the \textit{weight lattice} of some semisimple Lie algebra $\mf{g}$~\footnote{For simplicity we assume $\mf{g}$ to be semisimple; the case of nontrivial center follows straightforwardly.}
\begin{equation}
P = \Big\{ \sum_{i=1}^{r} n_i \w_i\quad|\quad \text{$n_i\in \mbb{Z}$} \Big\}\,,
\end{equation}
where $r={\rm rank}[\mf{g}]$ and $\w_i$ are the fundamental weights of $\mf{g}$. The elements of $P$ are the weights that could occur in some $\mf{g}$-representation --hence the name weight lattice-- although a given representation typically uses only a subset of $P$. If we restrict to elements of $P$ of the form $ \sum_{i=1}^{r} n_i \w_i$ with all $n_i$ non-negative, those weights are called \textit{dominant} weights and label the different irreducible representations of $\mf{g}$. 

The weight lattice $P$ is an abelian group under addition: $a*b=a+b \in P$ for all $a,b\in P$. Given a group $G$, its group algebra is the vector space (under some field of choice $\mbb{F}$) with basis given by elements of $G$ and algebra product given by the group product: $\mbb{F}[G]= {\rm span}_{\mbb{F}} \{\ket{g}\}_{g\in G}$\footnote{Many references use $e^g$ instead of $\ket{g}$ for the formal basis elements of the group algebra.}. The \textit{Weyl Character Ring} (WCR) is the group algebra (over $\mbb{Z}$) of the weight lattice
\begin{equation}
    \mbb{Z}[P] = \Big\{ \sum_a c_a \ket{a} \quad c_a\in\mbb{Z}, a \in P\Big\}\,.
\end{equation}
Thus, $\mbb{Z}[P]$ is a ring (in fact, a commutative ring) whose structure encodes the additive structure of $P$ in an algebraic way. Crucially, and maybe not surprisingly given its name, characters are elements of the WCR: Formal sums of the weights appearing in the given representation, with coefficients being the weight multiplicity.

Because characters are represented as formal sums in the group algebra $\mbb{Z}[P]$, the multiplication of two characters corresponds to the convolution of their weight multiplicity functions. Specifically, if~\footnote{This decomposition can be obtained using the \emph{Brauer--Klimyk method}~\cite{humphreys2012introduction}, also known as the \textit{Racah-Speiser algorithm}~\cite{racah2006group} in physics literature. The method uses the Weyl character formula and a recursive procedure to extract weight multiplicities. For instance, it starts with the highest weight (known to have multiplicity 1) and systematically subtracts contributions from lower weights until the full expansion is obtained.}
\begin{equation}
\ket{\chi_\lm} = \sum_a c_a^\lm \ket{a} \quad  \ket{\chi_\mu} = \sum_b c_b^\mu \ket{b}\,,
\end{equation}
then their product is
\begin{equation}
\chi_\lm * \chi_\mu = \sum_\nu \Big( \sum_{a+b=\nu} c_a^\lm c_b^\mu \Big) \ket{\nu}\,.
\end{equation}
This convolution over the weight lattice computes the multiplicity of each weight $\nu$ in the tensor product $V^\lm\otimes V^\mu$. Subsequently, by collecting terms corresponding to dominant weights, one obtains the decomposition into irreducible components, with the coefficients 
\begin{equation}
    m_\nu^{\lm\otimes \nu} = \sum_{a+b=\nu} c^\lm_a c^\mu_b
\end{equation}
representing the multiplicities. A package implementing WCR methods is available in SageMath \cite{sage}. For further details on the WCR, we recommend Chapter VI in Knapp's book~\cite{knapp2013lie}.

\subsection{Active and Passive FLO commutants}
We note the method presented here works for essentially \textit{any} rep of any reductive Lie algebra. We now specialize to the cases of active and passive FLO.

Let $\HC\cong (\mbb{C}^2)\tn$
be the state space of $n$ fermionic modes, and let $R_{\rm act}:\mbb{SO}(2n)\arr \mbb{U}(\HC)$ and $R_{\rm pass}:\mbb{U}(n)\arr \mbb{U}(\HC)$ be the group homomorphisms whose image correspond to active and passive FLO, respectively. As a $G$-module, the state space decomposes as
\begin{align}
&\HC_{\rm act} \cong V^{\w_{n-1}}_{\mbb{SO}(2n)}\oplus V^{\w_{n}}_{\mbb{SO}(2n)}\,,\\
&\HC_{\rm pass} \cong \bigoplus_{k=0}^n V^{[1^k]}_{\mbb{U}(n)}\,,
\end{align}
where $\w_{n-1}=(1/2,\cdots,-1/2)$ and $\w_n =(1/2,\cdots,1/2)$ are the so-called spinor reps of $\SO(2n)$, while $\lm=[1^k]$ corresponds to $\Lambda^{k}(\mbb{C}^d) \subset (\mbb{C}^d)^{\otimes k}$ the completely antisymmetric subspace of the $k$-fold tensor of the standard $\mbb{U}(n)$-rep.

Having  the decomposition of $\HC$ into $G$-irreps, we are ready to apply the method described at the beginning of this section: First decompose $\HC\ot{t}$ by iterating the multiplicity computations described in Sec.\ref{subs_computation}, then apply Schur's lemma. We note that in this particular cases of $\HC_{\rm act}$ and $\HC_{\rm pass}$, the decomposition of $\HC\ot{t}$ into irreps could alternatively (and much more efficiently) be obtained using Howe duality~\cite{nazarov2024skew}.

\clearpage
\end{document}